\documentclass[a4paper,12pt]{extarticle}
\usepackage[utf8]{inputenc}
\usepackage[top=2cm,left=2.5cm,right=2cm,bottom=3cm]{geometry}

\usepackage{amsmath}
\usepackage{amssymb}
\usepackage{amsthm}
\usepackage{mathabx}
\usepackage{mathrsfs}
\usepackage{stmaryrd}
\usepackage[hypertexnames=false]{hyperref}
\usepackage[dvipsnames]{xcolor}
\usepackage{tensor}
\usepackage{subfiles} 



\numberwithin{equation}{section}

\newtheorem{lem}{Lemma}[section]
\newtheorem{thm}[lem]{Theorem}

\numberwithin{equation}{section}

\newcommand{\dd}{\mathrm{d}}
\newcommand{\ddv}{\mathrm{d}_{\mathsf{v}}}
\newcommand{\ddh}{\mathrm{d}_{\mathsf{h}}}
\newcommand{\gh}{\mathrm{gh}}
\newcommand{\PP}{\mathscr{P}}
\newcommand{\DD}{\mathcal{D}}
\newcommand{\Lx}{\mathbb{L}}

\newcommand{\hM}{\mathcal{A}}
\newcommand{\E}{\mathcal{E}}

\newcommand{\F}{\mathcal{F}}
\newcommand{\I}{\mathcal{I}}

\newcommand{\hS}{\mathcal{S}}

\newcommand{\hT}{\mathcal{T}}
\newcommand{\vol}{\mathcal{V}}
\newcommand{\LD}{\mathcal{L}}
\newcommand{\hO}{\Omega}

\newcommand{\sfT}{\mathsf{T}}

\newcommand{\op}{\mathcal{O}}
\newcommand{\pr}{\mathrm{pr}}

\newcommand{\sfa}{\mathsf{a}}
\newcommand{\sfb}{\mathsf{b}}

\newcommand{\bref}[1]{\textbf{\ref{#1}}}

\begin{document}

\thispagestyle{empty}

\begin{center}
   {\Large 
   Ambient-space variational calculus for gauge fields\\ on constant-curvature spacetimes
    \\[15pt]    }
\end{center}

\vspace{5mm}

\begin{center}
    {Xavier~Bekaert$^a$, Nicolas~Boulanger$^b$, Maxim~Grigoriev$^{c,d}$, Yegor Goncharov$^{a,b}$
    }
\end{center}

\vspace{5mm}

\begin{enumerate}\small
\item[${}^a$] Institut Denis Poisson, Unit\'e Mixte de Recherche 7013,\\
Universit\'e de Tours, Universit\'e d'Orl\'eans, CNRS,\\
Parc de Grandmont, 37200 Tours, France

\item[${}^b$] Physique de l’Univers, Champs et Gravitation,\\
Université de Mons - UMONS,\\
Place du Parc 20, 7000 Mons, Belgium

\item[${}^c$] I.E. Tamm Department of Theoretical Physics,\\
P.N. Lebedev Physical Institute, \\
Leninsky ave. 53, 119991 Moscow, Russia

\end{enumerate}

\vspace{2mm}

\begin{center}
\texttt{\small xavier.bekaert@lmpt.univ-tours.fr,\quad nicolas.boulanger@umons.ac.be,\\
grig@lpi.ru,\quad yegor.goncharov@univ-tours.fr}
\end{center}

\vspace{8mm}

\begin{abstract}
    \noindent We propose a systematic generating procedure to construct free Lagrangians for massive, massless and partially massless, totally-symmetric tensor fields on $AdS_{d+1}$ starting from the BRST Lagrangian description of massless fields in the flat ambient space $\mathbb{R}^{d,2}\,$.   A novelty is that the Lagrangian is described by a $d+1$ form on $\mathbb{R}^{d,2}$ whose pullback to $AdS_{d+1}$ gives the genuine Lagrangian defined on anti de Sitter spacetime. Our derivation uses the triplet formulation originating from the first-quantized BRST approach, where the action principle is determined by the BRST operator and the inner product of a first-quantised system. In this way we build, in a manifestly $so(2,d)$-covariant manner, a unifying action principle for the three types of fields mentioned above. In particular, our derivation justifies the form of some actions proposed earlier for massive and massless fields on AdS. We also give a general setup for ambient Lagrangians in terms of the respective jet-bundles and variational bi-complexes. In particular we introduce a suitable ambient-space Euler-Lagrange differential which allows to derive the equation of motion ambiently, {\it i.e.} without the need to explicitly derive the respective spacetime Lagrangian.
\end{abstract}

\newpage

\section{Introduction}

In the present work we revisit the problem of constructing an action principle for totally symmetric bosonic massive and (partially) massless fields of arbitrary spin over (anti)-de Sitter ((A)dS) spacetimes. The problem of constructing action principles for higher-spin fields dates back to the old works on massive fields in four-dimensional Minkowski spacetime by Fierz and Pauli \cite{Fierz:1939ix} (for spin $s=2$) and by Singh and Hagen \cite{Singh:1974qz,Singh:1974rc} (for $s>2$). 
The massless limit was obtained by Fronsdal and Fang \cite{Fronsdal:1978rb,Fang:1978wz} who observed the decoupling of degrees of freedom due to emergence of gauge symmetries. The formalism adopted in the above seminal works, nowadays referred to as ``metric-like'' approach, represents off-shell fields by totally-symmetric spacetime (spinor-)tensors. There is another formulation, called ``frame-like'', that is regularly used for gauge systems and where the off-shell gauge fields are represented by differential forms valued in some representations of the Lorentz algebra \cite{Vasiliev:1980as,Lopatin:1987hz}. Turning to free bosonic systems around (anti)-de Sitter ((A)dS) spacetimes, quadratic Lagrangians for totally-symmetric fields in any spacetime dimension have been obtained  both in the metric-like formulation (for massless \cite{Fronsdal:1978vb}, partially-massles \cite{Hallowell:2005np} and massive \cite{Zinov_MassAdS,Metsaev:2003cu,Hallowell:2005np} fields on (A)dS)  and in the frame-like formulation (again for massless \cite{Lopatin:1987hz}, partially-massless \cite{Skvortsov:2006at} and massive fields \cite{Zinoviev:2008ze,Ponomarev:2010st}).
\vskip 4 pt

The standard way to linearly realise (A)dS isometries is to see $(d+1)$-dimensional (anti)-de Sitter spacetime $(A)dS_{d+1}$ as a hyperboloid in a flat space $\mathbb{R}^{2,d}$ with one extra 
dimension: this is called the ``ambient'' (or ``embedding'') approach. The application of this construction to field equations for (A)dS fields dates back to Dirac \cite{Dirac:1935zz}. Its application for constructing action principles was proposed in \cite{Biswas:2002nk} where this approach 
was dubbed ``radial dimensional reduction'' by analogy with the similar method for deriving actions/equations for massive fields from their flat counterpart for massless fields in one extra dimension. This method became a standard tool for discussing linear field equations (see, {\it e.g.}, \cite{Fronsdal:1978vb,Metsaev:1995re,Metsaev:1997nj,Boulanger:2008kw} and refs therein). It has also been used for calculating quadratic actions \cite{Hallowell:2005np} for totally-symmetric tensor fields. In the case of curved ambient space, the Fefferman-Graham construction \cite{Fefferman_Graham} of Einstein metrics can be seen as a radial dimensional reduction. Let us also mention other applications of dimensional reduction in the more general context of curved spacetimes \cite{Scherk_Schwarz_79}.
\vskip 4 pt

In the present work, we restrict our analysis to $AdS_{d+1}$ bosonic systems of totally-symmetric tensor fields in the metric-like formulation on the flat ambient space $\mathbb{R}^{2,d}$. We start with $AdS_{d+1}$ equations of motion formulated in the ambient space $\mathbb{R}^{2,d}$ and aim at constructing a Lagrangian, which leads to these equations on ambient space upon variation. We make advantage of the BRST formulation of the free gauge equations of motion \cite{Barnich:2006pc,Alkalaev:2009vm,Alkalaev:2011zv}, such that the off-shell field multiplet $\Phi$ is subject to off-shell constraints ({\it e.g.}, tracelessness 
and radial differential conditions)
\begin{equation}\label{eq:BRST_description_1}
    R_{\alpha} \Phi = 0\,,
\end{equation}
and gauge symmetry
\begin{equation}\label{eq:BRST_description_2}
    \Phi \sim \Phi + \Omega \xi\,,
\end{equation}
where $\Omega$ is a Grassmann-odd nilpotent BRST operator. 
The constraints $\{R_{\alpha}\}$ together with $\Omega$ form an involutive system with 
respect to the supercommutator.
Gauge-invariant equations of motion read
\begin{equation}\label{eq:BRST_description_3}
    \Omega\Phi = 0\,,
\end{equation}
where gauge-invariance is satisfied by nilpotency of $\Omega$. The approach in question, referred to as the first-quantised BRST formulation, originates from the Batalin-Fradkin-Vilkovisky (BFV) formalism \cite{Fradkin:1977hw,Batalin:1977pb,Fradkin:1977xi}, the Batalin-Vilkovisky (BV) formalism \cite{Batalin:1981jr,Batalin:1983ggl} extension of the Becchi-Rouet-Stora-Tyutin (BRST) formalism \cite{Becchi:1974md,Becchi:1975nq,Tyutin:1975qk} and of String Field Theory (SFT) \cite{Neveu:1985ya,Neveu:1985cx,Banks:1985ff,Siegel:1985tw,Itoh:1985bb}. In the eighties a compact off-shell description of massless higher spin gauge fields in flat Minkowski spacetime of arbitrary dimension was provided in \cite{Bengtsson:1986ys,Ouvry:1986dv,Henneaux:1987cp}, a formulation that is nowadays called ``triplet''. In fact, in the latter references, the BRST operator entering the action for massless fields can be identified, after truncation, with an appropriate tensionless limit of the bosonic open string field BRST operator. 
In such a limit, there is no critical dimension and as a result those quadratic actions \cite{Bengtsson:1986ys,Ouvry:1986dv,Henneaux:1987cp}
are consistent in Minkowski spacetime of arbitrary dimension.
\vskip 4 pt

The aforementioned string field-like BRST operator \cite{Bengtsson:1986ys,Ouvry:1986dv,Henneaux:1987cp} was extended to (A)dS in \cite{Bengtsson:1990un,Buchbinder:2001bs,Bonelli:2003zu,Sagnotti:2003qa,Barnich:2006pc,BuchKrykhLavr,Reshetnyak_2023}. We refer to \cite{Fotopoulos:2006ci} for an analysis of the corresponding quadratic actions for both massive and massless totally-symmetric fields, whereas the review \cite{Fotopoulos:2008ka} also discusses the endeavours to introduce interactions in first-quantized BRST approach to higher-spin fields (see also \cite{Buchbinder_Reshetyiak-2021,Buchbinder_Reshetnyak-2022} and references therein).
\vskip 4 pt

In the present work we unify previous first-quantized BRST approaches and describe, ambiently and in a manifestly $SO(2,d)$-covariant way, the Lagrangian formalism for massive, massless and partially-massless totally-symmetric fields in $(A)dS_{d+1}$ spacetime.
In the case of partially-massless  fields, these ambient triplet  Lagrangians are new while we reproduce known results (mentioned above) for massless and massive cases.
Our main tool is the manifestly local version of the BRST first quantized approach developed in  
\cite{Barnich:2003wj,Barnich:2004cr,Barnich:2006pc,Grigoriev:2010ic} which allows to construct the jet-space Batalin-Vilkovisky  formulation as well as its equations of motion counterpart in terms of the BFV-BRST first-quantized system. This analysis is  motivated by the remarkable relation between the action for bosonic string field theory on the one hand and the BRST operator of the first-quantized string on the other hand \cite{Thorn:1988hm,Bochicchio:1986zj} as well as the analogous relation for quantized spinning particles and  their associated gauge fields~\cite{Bengtsson:1986ys,Ouvry:1986dv,Henneaux:1987cp}.

\vskip 4 pt

As a starting point, we recall how the  $(A)dS_{d+1}$ equations of motion can be obtained
via radial dimensional reduction of \eqref{eq:BRST_description_1}, \eqref{eq:BRST_description_2}, \eqref{eq:BRST_description_3} \cite{Barnich:2006pc,Alkalaev:2009vm,Alkalaev:2011zv}, and extend the latter correspondence to the Lagrangian level. We describe a generating procedure which allows one to obtain Lagrangians (which are $(d+1)$-forms on $(A)dS_{d+1}$) for totally-symmetric higher-spin fields in $(A)dS_{d+1}$ via radial dimensional reduction of particularly defined {\it ambient Lagrangians}, which are $(d+1)$-forms on $\mathbb{R}^{2,d}$. More in detail, we construct ambient Lagrangians in the form $\Lx[\Phi] = \big(\Phi,\Omega\Phi\big)$, where the non-degenerate inner product $\big(\cdot,\cdot\big)$ is valued in $(d+1)$-forms on $\mathbb{R}^{2,d}$, and is BRST-anti-invariant in the sense that 
\begin{equation}
    \big(\Psi,\hO\Phi\big) - (-)^{\gh(\Psi)} \big(\hO\Psi,\Phi\big) = \dd \mathcal{J}\,,
\end{equation}
with $\mathcal{J}$ being some $d$-form. The latter property is necessary for the proposed ambient Lagrangians to provide the correct equations of motion $\Omega\Phi = 0$ by applying formal variation:
\begin{equation}\label{eq:formal_variation}
    \delta \Lx[\Phi] = \big(\delta \Phi,\hO\Phi) + \big(\Phi,\hO \,\delta\Phi) = 2\,\big(\delta\Phi,\hO\Phi\big) + \dd\mathcal{J}\,.
\end{equation}

The corresponding variational calculus in purely ambient terms has however not been addressed in a systematic manner previously -- to the best of our knowledge (see however \cite{Joung_Taronna_2011,Joung_Taronna_2012,Joung_Taronna_2012_1}). This is a subtle issue because one should carefully take into account radial (differential) constraints on the various fields upon variation. In fact, the equivalence of the variational calculus before and after restriction to (A)dS must be rigorously demonstrated. In other words, the radial dimensional reduction of the ambient Euler-Lagrange variations must be shown to coincide with the Euler-Lagrange equations of the pullback of the ambient Lagrangian to (A)dS. One of the main goals of this paper is to present a formalism where this is ensured by construction, thus making legitimate the formal variation \eqref{eq:formal_variation}. For this purpose we utilise the formalism of jet bundles and the associated variational bicomplex, which we adapt to Dirac's ambient geometry for field-theoretical systems in $(A)dS_{d+1}$. The proposed ambient variational formalism is formulated in a coordinate-free fashion, nevertheless it provides concrete formulae to perform computations explicitly. Furthermore, the formalism can be applied to a broader class of ``ambient spaces'' than the simplest case of $\mathbb{R}^{2,d}$ equipped with a flat metric. Namely, as a suitable ambient space $\hM$ (with $\dim \hM = n+1$) one can take any trivial line bundle (thus possessing a nowhere-vanishing fundamental vector field $T$ and a closed $1$-form $\vartheta$ such that $\vartheta(T) = \mathrm{const}$) and a volume $(n+1)$-form $\vol$ (\footnote{Note that this data essentially corresponds to a Carrollian measured space together with a flat Ehresmann connection \cite[Section 3.4]{Bekaert_Oblak}}). The role of the (A)dS space is taken by a chosen section of $\hM$. In the context of the flat ambient space $\mathbb{R}^{2,d}$ in question, one has $T = r\partial_{r}$ (homogeneity operator along the radial direction), $\vartheta = -r^{-1}\,\dd r$, and the embedding $AdS_{d+1}\hookrightarrow\mathbb{R}^{2,d}$ is due to fixing $r = \ell$. 
\vskip 4 pt

The proposed jet-bundle formalism is also applicable to the Lagrangian description of massive higher-spin fields of arbitrary symmetry type via flat dimensional reduction. In this respect, let us mention that the Lagrangians obtained in \cite{Chekmenev:2019ayr} (see also earlier works \cite{Metsaev:2012uy,Metsaev:2017cuz}) are pullbacks of the ones given by the formalism described in the current work.
\vskip 4 pt

The plan of the paper is as follows. In Section \bref{fieldsambient}, totally-symmetric tensor fields on flat ambient space of dimension $d+2$ are introduced and the corresponding BRST triplet formulation of action and equations of motion for massless fields on flat 
spacetime is reviewed. The BRST extension of the constraints (such as homogeneity, tangentiality, 
tracelessness, ...) that they must obey to describe proper fields on (A)dS is also reviewed. In Section \bref{tripletLagrAdS}, we construct triplet Lagrangians for massless, partially-massless and massive  totally-symmetric tensor fields on (A)dS spacetime, as summarised in our Theorems \bref{thm:massless} and \bref{thm:massive}. The problem of constructing such (A)dS Lagrangians from the ones in flat ambient spacetime is effectively reduced to the problem of finding a BRST-invariant non-degenerate inner product satisfying some extra conditions, 
a technical problem that is solved in Appendix \bref{sec:proofs}. In Section \bref{sec:jets}, we make use of the jet-bundle formalism to construct the appropriate variational principle which justifies derivation of correct equations of motion from the previously obtained Lagrangians.
\vskip 4 pt

Various technical matters have been placed in the appendices. The appendix \bref{sec:coordinates} reviews, in a self-contained manner, 
the explicit relation between the flat ambient derivative (where $SO(2,d)$ covariance is manifest) and the intrinsic $(A)dS_{d+1}$ 
covariant derivative, without radial dependence and where only Lorentz $SO(1,d)$ covariance is manifest. This is used in Appendix \bref{sec:formulae} to derive the radial decomposition of important operators such as the Killing derivative ({\it i.e.} the symmetrised covariant derivative), the divergence and the Laplacian, together with other ingredients necessary for explicitly reformulating ambient expressions in intrinsic (A)dS terms. The technical proofs of the lemmas and theorems in the body of the paper have been placed in Appendix \bref{sec:proofs}. Finally, the coefficients \eqref{eq:coefficients_nu} appearing in the Lagrangians are related to Euler hypergeometric functions and Appell series in Appendix \bref{hypergeom}.

\section{Fields and equations of motion\\in the flat ambient space}\label{fieldsambient}

\subsection{Ambient space tensor fields and constraints}\label{sec:fields_constraints}

Consider the {\it ambient space} without the origin $\mathbb{R}^{2,d}\backslash \{0\}$ endowed with a flat metric $\eta$ carrying signature $(-,-,+,\dots,+)$ and thus introducing the notion of $SO(2,d)$ invariance. Let this space be naturally parametrised by Cartesian coordinates $X^A$ (with $A=0^{\prime},0,1\ldots, d$) such that the metric is diagonal $\eta_{AB} = \mathrm{diag}(-1,-1,+1,\dots,+1)$ and $SO(2,d)$-action is realised linearly. Under the $SO(2,d)$-action, the space $\mathbb{R}^{2,d}\backslash\{0\}$ is foliated by homogeneous subspaces, among which there are anti-de Sitter spaces $AdS_{d+1}$ introduced through a one-parameter family of natural embeddings (as one-sheeted hyperboloids), with  parameter $r > 0$ given as
\begin{equation}\label{eq:embedding}
    \eta_{BC}X^B X^C = -r^2\,.
\end{equation}
In the sequel, we denote $\mathbb{R}^{2,d}_{+}\subset \mathbb{R}^{2,d}\backslash\{0\}$ the domain which corresponds to $r\in\mathbb{R}_{+}$ (\footnote{Note that all the results from this paper equally apply to de Sitter spaces $dS_{d+1}$ via a mere change of signature of the ambient space to $\mathbb{R}^{1,d+1}$.}).
\vskip 4 pt

AdS fields are described with the aid of totally-symmetric ambient tensors fields $\Phi\left(X\middle| P\right)$ supported locally on $\mathbb{R}^{2,d}_{+}$ and presented in the form of a formal power series with tensor indices being contracted with auxiliary variables $P_A$:
\begin{equation}\label{eq:field}
    \Phi\left(X\middle| P\right) = \sum_{s = 0}^{\infty}\Phi^{(s)}\left(X\middle| P\right)\quad\text{where}\quad \Phi^{(s)}\left(X\middle| P\right):=\frac{1}{s!}\,\Phi^{A(s)}(X)\,P_{A(s)}.
\end{equation}
Here and in the sequel we adopt the following shorthand notations for symmetrised indices and for powers of auxiliary variables:
\begin{equation}
    \Phi^{A(m)} := \Phi^{A_1\dots A_m}\quad \text{and} \quad P_{A(m)} := P_{A_1}\dots P_{A_m}\,.
\end{equation}

AdS scalar fields are in one-to-one correspondence with scalar fields $\Phi(X)$ on $\mathbb{R}^{2,d}_{+}$ with a certain degree of homogeneity $-\Delta$:
\begin{equation}\label{eq:constraint_homogeneity}
    \left(X\cdot \partial_X + \Delta\right)\Phi = 0\,.
\end{equation}
We also refer to $\Delta$ as the {\it radial weight}. For an analogous one-to-one correspondence to hold for rank-$s$ AdS tensor fields, the above homogeneity constraint is to be supplemented by: $P$-homogeneity constraint
\begin{equation}\label{eq:spin-s}
    (P\cdot\partial_P - s)\,\Phi^{(s)} = 0
\end{equation}
and tangency condition
\begin{equation}\label{eq:constraints_transverse}
    X\cdot \partial_P\,\Phi = 0\,.
\end{equation}
The above constraints can be viewed as equations defining a unique lift of a field from the surface $AdS_{d+1}$ (defined by \eqref{eq:embedding}) to $\mathbb{R}^{2,d}_{+}$. Other way around, any ambient field subject to \eqref{eq:constraint_homogeneity}, \eqref{eq:spin-s} and \eqref{eq:constraints_transverse} gives an AdS tensor field upon restriction to the subspace $AdS_{d+1}\hookrightarrow\mathbb{R}^{2,d}_{+}$.

A collection of AdS tensors of ranks $s,\dots, s-t+1$ with $t\geqslant 1$ can be described ambiently by relaxing the constraint \eqref{eq:constraints_transverse} to
\begin{equation}\label{eq:constraints_transverse_depth}
    (X\cdot \partial_P)^{t}\,\Phi = 0\,.
\end{equation}
As a particular example of interest, totally-symmetric partially massless higher-spin field of spin $s$ and depth $t$ ($1\leqslant t\leqslant s$) corresponds to the critical value of the radial weight $\Delta_{s,t} = 1+t-s$ (the case $\Delta_{s} = 2-s$ with $t = 1$ corresponds to massless fields).
\vskip 4 pt

In addition to the constraints listed above, the following ambient constraints are imposed to describe irreducible spin-$s$ dynamics on $AdS_{d+1}$: the algebraic trace constraint
\begin{equation}\label{eq:constraints_trace}
    \partial_P\cdot \partial_P\,\Phi = 0\,,
\end{equation}
as well as differential constraints (denote $\Box:=\partial_X\cdot\partial_X$)
\begin{equation}\label{eq:constraints_eoms}
    \Box \Phi = 0\,,\quad \partial_{P}\cdot\partial_{X} \Phi = 0\,.
\end{equation}
Finally, for the critical values $\Delta_{s,t}$ we introduce the following gauge equivalence relation, which is compatible with the above constraints:
\begin{equation}\label{eq:gauge_gradient}
    \Phi^{(s)} \sim \Phi^{(s)} + P\cdot\partial_X\varepsilon^{(s-1)}
\end{equation}
with arbitrary $\varepsilon^{(s-1)}$ subject to constraints  \eqref{eq:constraint_homogeneity} (with $\Delta - 1$ instead of $\Delta$) and \eqref{eq:spin-s} (with $s-1$ instead of $s$).

\subsection{BRST formulation of equations of motion}

Consider for all $n=0,1,2,\dots$ the fibers of the $n$-fold symmetric tensor power ${\odot}^{n}T\mathbb{R}^{2,d}_{+}$ of the tangent bundle $T\mathbb{R}^{2,d}_{+}$ with local coordinates $X^A$ and $y^{A(n)}$ (by definition, ${\odot}^{0}T\mathbb{R}^{2,d}_{+}$ is the line bundle over $\mathbb{R}^{2,d}_{+}$). Let $\Lambda$ be a trivial bundle over $\mathbb{R}^{2,d}_{+}$ with fibers presented by Grassmann algebra over fermionic (ghost) generators $c_0,b,c$. Then we introduce the bundle
\begin{equation}\label{eq:field_bundle}
   \F = \Lambda\otimes\,{\odot} T\mathbb{R}^{2,d}_{+}\,,\quad \text{where}\quad {\odot} T\mathbb{R}^{2,d}_{+} = \bigoplus_{n = 0}^{\infty} {\odot}^{n}T\mathbb{R}^{2,d}_{+}\,,
\end{equation}
with fibers being $\mathbb{Z}$-graded spaces. The grading will be referred to as ghost-degree $\gh(\cdot)$ and fixed by setting $\gh(P_A) = 0$, $\gh(c_0) = \gh(c) = -\gh(b) = 1$. There is also an induced Grassmann $\mathbb{Z}_2$-grading $|\cdot| = \gh(\cdot)\; (\mathrm{mod}\;2)$ which conforms to the case of integer-spin fields (the latter being our only focus for the present work).
\vskip 4 pt

Ghost-extended (ambient) fields $\Phi$ are elements of the space of sections
\begin{equation}
    \hS:=\Gamma\left(\F\right)\,.
\end{equation}
In words, ghost-extended fields can be understood as functions $\Phi(P,c_0,b,c|X)$ which decompose as formal power series over monomials in $P^A$, $c_0$ and $b,c$. Components carrying a particular ghost degree admit gauge-theory interpretation: ghost-degree-$(-1)$ component contains gauge parameters,
\begin{equation}
    \left.\Phi\right|_{\mathrm{gh} = -1} = b\,\varepsilon\,;
\end{equation}
ghost-degree-$0$ component parametrises the field content of the theory,
\begin{equation}
    \left.\Phi\right|_{\mathrm{gh} = 0} = B + c_0 b\,C + cb\,D\,;
\end{equation}
ghost-degree-$1$ component is associated with equations of motion, while ghost-degree-$2$ (respectively, higher-ghost degrees) are associated with Noether identities (respectively, higher Noether identities).
Note that the sector of fields (the ghost-degree-$0$ sector) is constituted by three components $B,C,D$: this is why the description in question is referred to as {\it triplet} in the literature  (see \cite{Francia_Sagnotti_2002,Sagnotti:2003qa} and references therein). Originally this type of formulation was obtained for massless higher-spin fields in flat space by taking a suitable tensionless limit of the free bosonic string \cite{Bengtsson:1986ys,Ouvry:1986dv,Henneaux:1987cp}. At the level of equations of motion massive and (partially) massless fields in AdS are obtained by radial reduction from the flat ambient space \cite{Biswas:2002nk,Bonelli:2003zu,Barnich:2006pc}. For this reason the BRST description in question is referred to as ``triplet'' here. 
\vskip 4 pt 

In what follows, we use the term {\it operator} for those endomorphisms of the space of fields $\hS$ which are differential operators on the base performing linear transformation of the fibers. An operator $\op$ is said to carry ghost degree $g$, $\gh(\op) = g$, if for any $\Phi\in\hS$ with $\gh(\Phi) = m$ holds $\gh(\op\Phi) = m + g$. Ghost degree of an operator induces its Grassmann parity $|\op| = g \;(\mathrm{mod}\;2)$.
\vskip 4 pt

As a next step, we introduce a point-wise (with respect to $\mathbb{R}^{2,d}_{+}$) graded-symmetric inner product $\langle\cdot,\cdot\rangle$ on $\hS$. First, consider the Fock pairing $\langle \cdot,\cdot \rangle_{\mathrm{Fock}}$ on polynomials over $P^A$ extended to the ghost variables $c,b$, such that it is graded-symmetric, and the only non-trivial pairings among the generators are\footnote{An equivalent way of defining $\langle\cdot,\cdot\rangle_{\mathrm{Fock}}$ used in the literature is by considering pairs of oscillators with non-trivial graded commutation relations $[\bar{P}^A,P_B] = \delta^A_B$, $[\bar{c},b] = 1$, $[\bar{b},c] = -1$ and constructing the corresponding Fock space by setting $\bar{b}\left|0\right> = 0$, $\bar{c}\left|0\right> = 0$, $\bar{P}^A\left|0\right> = 0$, where the vacuum vector $\left|0\right>$ satisfies $\left<0\middle|0\right>=1$, $\gh(\left|0\right>) = 0$ and $\left|0\right>^{\dagger} = \left<0\right|$.}
\begin{equation}
    \langle P^A,P^B \rangle_{\mathrm{Fock}} = \eta^{AB}\,,\quad \langle c,b \rangle_{\mathrm{Fock}} = 1\,,\quad \langle 1,1 \rangle_{\mathrm{Fock}} = 1\,.
\end{equation}
The above relations fix the conjugation rules for the generators, which are extended to monomials by accepting the convention $\langle \Psi,\mathcal{O}\Phi\rangle_{\mathrm{Fock}} = (-)^{|\mathcal{O}|\cdot|\Psi|}\langle \mathcal{O}^{\dagger}\Psi,\Phi\rangle_{\mathrm{Fock}}$. Equivalently, the conjugation acts as an involutive anti-automorphism $(AB)^{\dagger} = (-)^{|A|\cdot|B|} B^{\dagger}A^{\dagger}$.
\vskip 4 pt 

The so-defined pairing is extended to take values in the functions of $X^A$ and $c_0$ by the graded-symmetric property and linearity in the left slot, namely $\langle c_0 \Psi,\Phi\rangle_{\mathrm{Fock}} = c_0\,\langle \Psi,\Phi\rangle_{\mathrm{Fock}}$ and $\langle \Psi,c_0\Phi\rangle_{\mathrm{Fock}} = (-)^{|\Psi|}c_{0}\,\langle \Psi,\Phi\rangle_{\mathrm{Fock}}$. Conjugation of generators reads as
\begin{equation}\label{eq:dagger}
    (X^A)^{\dagger} = X^A,\quad P_A^{\dagger} = \frac{\partial}{\partial P^A},\quad c_0^{\dagger} = c_0,\quad c^{\dagger} = \frac{\partial}{\partial b},\quad b^{\dagger} = -\frac{\partial}{\partial c}\,.
\end{equation}
(here and in what follows ambient indices are raised and lowered by the ambient metric). With $\langle\cdot,\cdot\rangle_{\mathrm{Fock}}$ at hand we introduce
\begin{equation}\label{eq:pairing_extended}
    \langle \cdot,\cdot\rangle = \int \dd c_0\,\,\langle \cdot,\cdot\rangle_{\mathrm{Fock}}\,,
\end{equation}
where Berezin integration is normalised as $\int \dd c_0\,c_0 = 1$. The inner product \eqref{eq:pairing_extended} is graded-symmetric and non-degenerate: {\it i)} for all $\Psi,\Phi\in\hS$ holds $\langle\Psi,\Phi\rangle = (-)^{|\Psi||\Phi|}\langle\Phi,\Psi\rangle$, and {\it ii)} if $\langle\Phi,\Psi \rangle = 0$ for all $\Psi\in \hS$, then $\Phi = 0$. Note that $\gh\big(\langle\cdot,\cdot\rangle\big) = -1$, so if $\langle\Psi,\Phi\rangle \neq 0$, then $|\Psi| = |\Phi|\pm 1$ and therefore $\langle\Psi,\Phi\rangle = \langle\Phi,\Psi\rangle$. Conjugation $(\cdot)^{\dagger}$ is extended to \eqref{eq:pairing_extended} by the following convention:
\begin{equation}
    \langle \Psi,\mathcal{O}\Phi\rangle = (-)^{|\mathcal{O}|\cdot|\Psi|}\langle \mathcal{O}^{\dagger}\Psi,\Phi\rangle + \frac{\partial}{\partial X^A}\,J^A_{\Psi,\Phi}\,,
\end{equation}
which implies, in addition to \eqref{eq:dagger}, that
\begin{equation}\label{eq:dagger_ext}
    \left(\frac{\partial}{\partial X^A}\right)^{\dagger} = -\frac{\partial}{\partial X^A},\quad \left(\frac{\partial}{\partial c_0}\right)^{\dagger} = -\frac{\partial}{\partial c_0}\,.
\end{equation}
\vskip 4 pt

The following nilpotent BRST operator $\hO:\hS\to\hS$ furnishes the triplet description of totally-symmetric higher-spin fields in the flat ambient space (see, {\it e.g.}, \cite{Barnich:2004cr} and references therein):
\begin{equation}\label{eq:BRST_operator}
    \hO = c_0\,\Box + c\,S + S^{\dagger}\,\frac{\partial}{\partial b} + c \,\frac{\partial}{\partial b}\frac{\partial}{\partial c_0},\quad S = \partial_P\cdot\partial_X,\quad S^{\dagger} = -P\cdot\partial_X\,,
\end{equation}
with $\gh(\hO) = 1$. Gauge-invariant equations of motion are $\hO\Phi = 0$ for $\gh(\Phi) = 0$, with gauge transformations $\Phi \sim \Phi + \hO\xi$ for any $\xi\in\hS$ with $\gh(\xi) = -1$.
\vskip 4 pt

Radial reduction is performed by imposing BRST-invariant extensions of the constraints \eqref{eq:constraint_homogeneity}, \eqref{eq:constraints_transverse_depth} and \eqref{eq:constraints_trace}, and leads to the equations of motion for higher-spin fields on $AdS_{d+1}$ \cite{Alkalaev:2011zv}. The BRST completion of the homogeneity operator $X\cdot\partial_X$ reads
\begin{equation}\label{eq:homogeneity_ghosts}
    h = - X\cdot \partial_X -2\, c_0\frac{\partial}{\partial c_0} + b\,\frac{\partial}{\partial b} - c\,\frac{\partial}{\partial c}\,,
\end{equation}
The pairing \eqref{eq:pairing_extended} carries $h$-degree $2$ in the sense that
\begin{equation}\label{eq:braket_h-degree}
    h\,\langle\cdot,\cdot\rangle = \langle h\,\cdot,\cdot\rangle + \langle\cdot,h\,\cdot\rangle + 2\,\langle\cdot,\cdot\rangle\,.
\end{equation}
Note that because $h^{\dagger} = -h$, the ambient operator $\op$  carries the same $h$-degree as $\op^{\dagger}$. This is verified by conjugating the relation $[h,\op] = \alpha\,\op$ (with $\alpha\in\mathbb{R}$) which leads to $[h,\op^{\dagger}] = \alpha\,\op^{\dagger}$.
\vskip 4 pt

Along the same lines, the BRST completion of the operator $P\cdot \frac{\partial}{\partial P}$ reads:
\begin{equation}\label{eq:helicity_ghosts}
    N = P\cdot\partial_{P} + b\frac{\partial}{\partial b} + c\frac{\partial}{\partial c}\,.
\end{equation}
For the bundle \eqref{eq:field_bundle} one can define its sub-bundle $\F^{(s)}\subset \F$ such that the operator \eqref{eq:helicity_ghosts} takes the value $s$ on its sections: the fibers of $\F^{(s)}$ are parametrised by the coordinates $\theta^{(s)}_{k} u^{A(k)}$ such that the monomials $\theta^{(s)}_{k}$ in the ghost generators $c_0,c,b$ satisfy $\big(b\frac{\partial}{\partial b} + c\frac{\partial}{\partial c} - (s-k)\big)\,\theta^{(s)}_{k} = 0$ (which implies, in turn, $k\in\{s,s-1,s-2\}$). For example, for $s = 2$ the fibers of $\F^{(2)}$ are parametrised by the coordinates $\theta^{(2)}_{2}u^{A(2)}$, $\theta^{(2)}_{1}u^{A}$ and $\theta^{(2)}_{0} u$ with $\theta^{(2)}_{2}\in\{1,c_0\}$, $\theta^{(2)}_{1}\in\{b,c,c_0b,c_0c\}$ and $\theta^{(2)}_{0}\in\{cb,c_0cb\}$. We denote the corresponding space of sections as
\begin{equation}\label{eq:sections_s}
    \hS^{(s)} := \Gamma(\F^{(s)})\,.
\end{equation}
Since the operators \eqref{eq:homogeneity_ghosts} and \eqref{eq:helicity_ghosts} commute, one can define the following subspaces:
\begin{equation}\label{eq:subspaces}
    \hS_{\Delta} = \left\{\Phi\in\hS\;:\;(h - \Delta)\,\Phi = 0\right\},\quad \hS^{(s)}_{\Delta} = \hS_{\Delta}\cap\hS^{(s)}\,.
\end{equation}
The BRST operator preserves $\hS^{(s)}_{\Delta}$ for any values $s = 0,1,2,\dots$ and $\Delta\in\mathbb{R}$. 
\vskip 4 pt

The ghost-extensions of tangent \eqref{eq:constraints_transverse} and trace \eqref{eq:constraints_trace} constraints are respectively:
\begin{equation}\label{eq:irred_constraints}
    \sfT_1 = X\cdot \partial_P +2c_0\frac{\partial}{\partial c},\quad \sfT_2 = \partial_P\cdot \partial_P + 2\,\frac{\partial}{\partial b}\frac{\partial}{\partial c}\,.
\end{equation}
For any $s = 0,1,2,\dots$ define the subspace $\hT^{(s)}_{\Delta}\subset \hS^{(s)}_{\Delta}$ whose elements are traceless: 
\begin{equation}\label{eq:ghost_traceless}
    \Phi\in\hT^{(s)}_{\Delta}\quad\Rightarrow \quad\sfT_2\Phi = 0\,.
\end{equation}
More to that, in the case of the special value $\Delta = 1 + t - s$ for some $t \in \{1,\dots, s\}$,
\begin{equation}\label{eq:ghost_tangent}
    \text{for any}\quad \Phi\in\hT^{(s)}_{1 + t - s}\quad \text{also impose}\quad (\sfT_1)^t\Phi^{(s)} = 0\,.
\end{equation}
As was mentioned in the comment after \eqref{eq:constraints_transverse_depth}, the latter situation corresponds to partially massless fields of depth $t$. Denote $\hT_{\Delta} = \bigoplus_{s = 0}^{\infty} \hT_{\Delta}^{(s)}$.
\vskip 4 pt

The following commutation relations hold for the above constraints and BRST operator:
\begin{equation}
    \def\arraystretch{1.4}
    \begin{array}{c}
        {[}\hO,\sfT_1{]} = - \frac{\partial}{\partial b}\,(h + N - 2) + c\,\sfT_2,\quad {[}\hO,\sfT_2{]} = 0\,,\\
        {[}h,\sfT_1{]} = -\sfT_1,\quad {[}h,\sfT_2{]} = 0, \quad {[}N,\sfT_1{]} = -\sfT_1,\quad {[}N,\sfT_2{]} = -2\sfT_2\,,\\
        {[}\sfT_1,\sfT_2{]} = 0\,.
    \end{array}
\end{equation}
As a consequence, $[\hO,(\sfT_{1})^t]\,\Phi = 0$ holds whenever $(h+N-t-1)\,\Phi = 0$ and $\sfT_{2}\Phi = 0$. Therefore $\hO$ preserves $\hT^{(s)}_{\Delta}$ for any $s = 0,1,2,\dots$ and $\Delta\in\mathbb{R}$. 
\vskip 4 pt

As a concluding remark about the ambient formulation in question, we describe the decomposition of ambient tensors with respect to tangent and normal components relatively to $AdS_{d+1}$, and apply it to resolve the $\sfT_1$-constraint explicitly (for the proof of the following Lemma see Appendix \bref{sec:proof_lem_1}).
\begin{lem}\label{lem:radial_decomposition}
    Let $T = T^A(X)\partial_A$ be a vector field on $\mathbb{R}^{2,d}_{+}$ such that $T^2 < 0$ everywhere. Then for any ambient field $\Phi\in\hS$, locally there is a uniquely defined decomposition:
\begin{equation}\label{eq:radial_decomposition}
    \Phi = \sum_{n\geqslant 0}\Phi_n\,,
\end{equation}
where
\begin{equation}
    \Phi_n = (T\cdot P)^n \,\Phi^{\perp}_{n}\quad\text{with}\quad
(T\cdot \partial_P) \,\Phi^{\perp}_{n} = 0\,.
\end{equation}
Or, equivalently, such that
\begin{equation}
\Big((T\cdot P)(T\cdot \partial_P) - n\,T^2\Big)\,\Phi_n = 0\,.
\end{equation}
In other words, the components $\Phi_n$ are of homogeneity degree $n$ in the auxiliary variable 
\begin{equation}
q = -\frac{1}{\sqrt{-T^2}} \,(T\cdot P)\,.
\end{equation}
\end{lem}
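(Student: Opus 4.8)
The plan is to recognise the claim as an elementary linear change of the auxiliary variables $P_A$, after which the decomposition becomes nothing more than a Taylor expansion. First I would note that both operators in play --- multiplication by $T\cdot P = T^A(X)P_A$ and the derivation $T\cdot\partial_P$ --- act solely on the $P$-dependence and commute with everything involving the ghosts $c_0,b,c$, so it suffices to argue fibrewise, treating $\Phi$ as a (formal power series) polynomial in $P$ with $X$-dependent coefficients and the ghost monomials as inert spectators. The only algebraic fact needed is $[\,T\cdot\partial_P,\,T\cdot P\,]=T^2$, which at every point is a nonzero constant since $T^2<0$; hence $T\cdot P$ and $(T^2)^{-1}\,T\cdot\partial_P$ form a Weyl pair on the polynomial algebra in $P$.

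Next I would set $q=-(-T^2)^{-1/2}(T\cdot P)$ as in the statement, so that $[(-T^2)^{-1/2}\,T\cdot\partial_P,\,q]=1$, and complete $q$ to a linear coordinate system on $P$-space by choosing, on a neighbourhood, $d+1$ further independent linear forms $P'_i$ annihilated by $T\cdot\partial_P$; explicitly one may use the transverse combinations $\widetilde P_A:=P_A-(T^2)^{-1}T_A\,(T\cdot P)$, of which exactly $d+1$ are independent. The single place where the hypothesis enters is the check that $\{q,P'_1,\dots,P'_{d+1}\}$ really is a coordinate system: this holds iff $T\notin T^{\perp}$, i.e. iff $T^2\neq 0$ --- exactly what fails on the light cone --- and it is also the reason the statement is only local, since a frame for the transverse directions need only exist locally. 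In these coordinates $T\cdot\partial_P=\sqrt{-T^2}\,\partial_q$ and $T\cdot P=-\sqrt{-T^2}\,q$, so the claimed decomposition is precisely the Taylor expansion $\Phi=\sum_{n\geqslant 0}q^n f_n$ with $f_n=\tfrac{1}{n!}\,\partial_q^{\,n}\Phi\big|_{q=0}$ independent of $q$: putting $\Phi_n:=q^n f_n=(T\cdot P)^n\Phi^{\perp}_n$ with $\Phi^{\perp}_n:=(-\sqrt{-T^2})^{-n}f_n$ yields existence immediately, and $(T\cdot\partial_P)\Phi^{\perp}_n=0$ because $f_n$ does not depend on $q$. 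For uniqueness, any representation $\Phi=\sum_n(T\cdot P)^n\Psi_n$ with $(T\cdot\partial_P)\Psi_n=0$ forces each $\Psi_n$ to be $q$-independent and then, after substituting $T\cdot P=-\sqrt{-T^2}\,q$, to equal the unique Taylor coefficients of $\Phi$; smoothness in $X$ is inherited from that of the (linear in $P$) coordinate change, with $\sqrt{-T^2}$ smooth because $T^2<0$.

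It remains to verify the two equivalent reformulations, which is a one-line computation in the $q$-coordinate: $(T\cdot P)(T\cdot\partial_P)=(-\sqrt{-T^2}\,q)(\sqrt{-T^2}\,\partial_q)=T^2\,q\,\partial_q$, so $\big((T\cdot P)(T\cdot\partial_P)-nT^2\big)\Phi_n=T^2(q\partial_q-n)\Phi_n=0$; conversely, since $T^2\neq 0$, the operator $q\partial_q$ is the Euler operator in $q$, whose eigenspace of eigenvalue $n$ is exactly the homogeneity-degree-$n$ subspace $\{q^n g(P')\}$, which identifies $\Phi_n$ with the asserted component. (Equivalently, $\tfrac{1}{T^2}(T\cdot P)(T\cdot\partial_P)$ is a globally defined operator whose spectral projectors produce the $\Phi_n$.) I do not expect any real obstacle: the argument is essentially bookkeeping, and the only point deserving care in the write-up is the admissibility of the change of variables --- the sole use of $T^2<0$, for which $T^2\neq 0$ would already suffice --- together with the accompanying remark that the transverse frame, hence the decomposition, is a priori only local.
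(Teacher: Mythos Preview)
Your proof is correct and follows essentially the same route as the paper's: both introduce the transverse combinations $\widetilde P_A = P_A - (T^2)^{-1}T_A\,(T\cdot P)$ together with the longitudinal coordinate $q$ (equivalently $T\cdot P$), and then read off the decomposition as the expansion in powers of $q$. Your presentation is slightly more explicit---you spell out the Weyl-pair commutator, the Taylor-expansion interpretation, and the observation that $T^2\neq 0$ already suffices---but the underlying idea is identical.
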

\vskip 4 pt

In the case $T = X^A\partial_A$ we will refer to the above decomposition as {\it radial decomposition}. The latter is restricted by the constraint $\sfT_1$ as follows. Note that
\begin{equation}\label{eq:resolution_T1}
    \sfT_1 \,=\, U \,\,(X\cdot\partial_P)\,\,U^{-1}\,,
\end{equation}
where
\begin{equation}\label{eq:operator_U}
U = 1-\frac{2}{X^2}\,(X\cdot P)\,c_0\frac{\partial}{\partial c}\,,\qquad U^{-1} = 1+\frac{2}{X^2}\,(X\cdot P)\,c_0\frac{\partial}{\partial c}\,.
\end{equation}
Then the following lemma applies (see Appendix \bref{sec:proof_lem_2} for proof).
\begin{lem}\label{lem:T1_resolve}
Consider a field $\Phi\in\hS$ and the operator $U$ defined in \eqref{eq:operator_U}. For any integer $t\geqslant 1$ the following conditions are equivalent:
\begin{itemize}
    \item[1.] $(\sfT_1)^t\Phi = 0$,
    \item[2.] $\Phi=U\widetilde{\Phi}$, where the radial decomposition of $\widetilde{\Phi}$, as defined in Lemma \bref{lem:radial_decomposition},
     satisfies $\widetilde{\Phi}_{m} = 0$ for all integer $m\geqslant t$\,, i.e.
    \begin{equation}
    \widetilde{\Phi} = \sum\limits_{m = 0}^{t-1}\widetilde{\Phi}_m\,.
    \end{equation}
\end{itemize}
\end{lem}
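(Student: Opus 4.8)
The plan is to conjugate the ghost piece of $\sfT_1$ away by means of \eqref{eq:resolution_T1}, reducing the statement to one about powers of the purely bosonic operator $X\cdot\partial_P$, and then to describe $\ker\big((X\cdot\partial_P)^{t}\big)$ via the radial decomposition of Lemma \bref{lem:radial_decomposition}. First I would note that $U$ is invertible, with the explicit inverse $U^{-1}$ recorded in \eqref{eq:operator_U}, so that \eqref{eq:resolution_T1} upgrades to $(\sfT_1)^{t} = U\,(X\cdot\partial_P)^{t}\,U^{-1}$ for every integer $t\geqslant 1$. Putting $\widetilde{\Phi}:=U^{-1}\Phi$ — equivalently $\Phi = U\widetilde{\Phi}$, the field appearing in condition~2 — the equation $(\sfT_1)^{t}\Phi = 0$ becomes equivalent to $(X\cdot\partial_P)^{t}\widetilde{\Phi} = 0$. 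It therefore suffices to prove that, with the radial decomposition of Lemma \bref{lem:radial_decomposition} (taken with $T = X$) written as $\widetilde{\Phi} = \sum_{m\geqslant 0}\widetilde{\Phi}_m$, $\widetilde{\Phi}_m = (X\cdot P)^{m}\widetilde{\Phi}^{\perp}_m$, $(X\cdot\partial_P)\widetilde{\Phi}^{\perp}_m = 0$, one has $(X\cdot\partial_P)^{t}\widetilde{\Phi} = 0$ if and only if $\widetilde{\Phi}_m = 0$ for all $m\geqslant t$.

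The core step is the action of $X\cdot\partial_P$ on a single radial component. Using the commutator $[X\cdot\partial_P,\,X\cdot P] = X^2$ (the relation underlying the equivalent characterisation in Lemma \bref{lem:radial_decomposition}) together with $(X\cdot\partial_P)\widetilde{\Phi}^{\perp}_m = 0$, a standard oscillator-type induction gives
\begin{equation}
    (X\cdot\partial_P)^{k}\big((X\cdot P)^{m}\,\widetilde{\Phi}^{\perp}_m\big)\;=\;\frac{m!}{(m-k)!}\,(X^2)^{k}\,(X\cdot P)^{m-k}\,\widetilde{\Phi}^{\perp}_m
\end{equation}
for $0\leqslant k\leqslant m$, and $0$ for $k>m$. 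The point to check is that the right-hand side, for $k=t$, is again in the normal form of Lemma \bref{lem:radial_decomposition}: $X^2$ is $P$-independent, so $X\cdot\partial_P$ still annihilates $(X^2)^{t}\widetilde{\Phi}^{\perp}_m$, and hence $(X\cdot P)^{m-t}\big[(X^2)^{t}\widetilde{\Phi}^{\perp}_m\big]$ is a genuine radial component of degree $m-t$. Summing over $m$ then yields
\begin{equation}
    (X\cdot\partial_P)^{t}\,\widetilde{\Phi}\;=\;\sum_{n\geqslant 0}(X\cdot P)^{n}\,\Big(\tfrac{(n+t)!}{n!}\,(X^2)^{t}\,\widetilde{\Phi}^{\perp}_{n+t}\Big)\,,
\end{equation}
which is precisely the radial decomposition of $(X\cdot\partial_P)^{t}\widetilde{\Phi}$, its $n$-th normal part being a nonzero multiple of $\widetilde{\Phi}^{\perp}_{n+t}$.

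I would then conclude by invoking uniqueness of the radial decomposition (Lemma \bref{lem:radial_decomposition}). Since $X^2 = -r^2\neq 0$ everywhere on $\mathbb{R}^{2,d}_{+}$, the prefactor $(X^2)^{t}$ never vanishes, so $(X\cdot\partial_P)^{t}\widetilde{\Phi} = 0$ forces $\widetilde{\Phi}^{\perp}_{n+t} = 0$ for all $n\geqslant 0$, i.e. $\widetilde{\Phi}_m = 0$ for all $m\geqslant t$; the converse is immediate from the same displayed formula together with $(\sfT_1)^{t}=U\,(X\cdot\partial_P)^{t}\,U^{-1}$. Because each spin-$s$ component $\widetilde{\Phi}^{(s)}$ is polynomial of bounded degree in $P$, the radial sum is finite and no convergence subtlety arises — one simply argues spin by spin. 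The one mildly delicate point is the middle paragraph: checking that $(X\cdot\partial_P)^{t}$ sends the normal-form expansion of $\widetilde{\Phi}$ to the normal-form expansion of its image, so that uniqueness can be applied component-wise, and keeping the combinatorial prefactors straight; the rest is bookkeeping with the single commutator $[X\cdot\partial_P,\,X\cdot P]=X^2$.
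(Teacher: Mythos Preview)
Your proof is correct and follows essentially the same route as the paper's: both invert $U$ to reduce $(\sfT_1)^{t}\Phi=0$ to $(X\cdot\partial_P)^{t}\widetilde{\Phi}=0$, and then analyse the latter via the radial decomposition of Lemma~\bref{lem:radial_decomposition}. The paper simply asserts that $(X\cdot\partial_P)^{n+j}\widetilde{\Phi}_n=0$ only for $j\geqslant 1$ and concludes; you spell out the oscillator computation and make explicit why the images of distinct radial components remain in normal form (hence independent by uniqueness), which is the content the paper leaves implicit.
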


In other words, if one decomposes $P^A = \widetilde{P}^A + q\,\frac{X^A}{r}$ where the term $\widetilde{P}^A$ is such that $X\cdot \widetilde{P} = 0$ and $q = -\tfrac{1}{r}\,(X\cdot P)$ (see the the proof of Lemma \bref{lem:radial_decomposition}, Section \bref{sec:proof_lem_1}) then according to the above lemma $\widetilde{\Phi} = U^{-1}\Phi$ is a polynomial of degree at most $t-1$ in $q$ and hence satisfies $(\sfT_1)^{t}\Phi = 0$. This is the case for the description of (partially) massless fields in terms of fields $\hT^{(s)}_{1+t-s}$.

\section{Ambient formulation\\for triplet Lagrangians on $AdS_{d+1}$}\label{tripletLagrAdS}

\subsection{Differential forms and pairing}

Denote $\vol = \dd X^{0^{\prime}}\wedge\dd X^{0}\wedge\dots\wedge\dd X^{d}$ the ambient volume form and consider the $(d+1)$-form 
\begin{equation}\label{eq:vol_AdS}
    \vol_{X} = i_{_{X\cdot\partial}}\vol\,.
\end{equation}
The operator \eqref{eq:homogeneity_ghosts} is extended to act on the algebra of differential forms as a derivation, with $h\,\dd X^{A} = -\dd X^{A}$. Then $i_{X\cdot\partial}\vol_{X} = 0$ and $h\vol_{X} =-(d+2)\vol_{X}$. Pullback of $\mathcal{V}_{X}$ to $AdS_{d+1}$ coincides (up to a factor) with the $SO(2,d)$-invariant volume form on $AdS_{d+1}$ for any $r>0$.
\vskip 4 pt

Consider an inner product on $\hS$ with values in the space $\bigwedge^{d+1}\mathbb{R}^{2,d}_{+}$ of $(d+1)$-forms:
\begin{equation}\label{eq:pairing_in_difforms}
    \big(\cdot,\cdot\big)_{K} = \langle K\cdot,\cdot \rangle\,\,\vol_{X}\,,
\end{equation}
where $K$ is an invertible algebraic (with no $X$-derivatives) operator on $\hS$ such that $K^{\dagger} = K$. This is sufficient for the pairing $\big(\cdot,\cdot\big)_K$ to be symmetric and non-degenerate because so is the deformed pairing $\langle K\cdot,\cdot\rangle$. We also require that $K$ does not reshuffle different spin- and ghost-sectors, {\it i.e.} $[N,K]=0$ and $\gh(K) = 0$. In the sequel we apply the inner product \eqref{eq:pairing_in_difforms} to the elements from $\hS_{\Delta}$, {\it i.e.} fields with a particular radial weight. In this case we  fix the following value of the radial weight for $K$: $[h,K] = (d - 2\Delta)\, K$, which leads to $h\big(\Psi,\Phi)_K = 0$ (for all $\Psi,\Phi\in\hS_{\Delta}$). For the inner product at hand we define conjugation as 
\begin{equation}
\label{dashdagger}(\cdot)^{\#} = K^{-1}\,(\cdot)^{\dagger}\,K
\end{equation}
(we omit its dependence on $K$ for brevity). The following lemma is useful (see Appendix \bref{sec:proof_lem_3} for  proof).

\begin{lem}\label{lem:conjugation}
    For any $\Delta\in\mathbb{R}$ the inner product $\big(\cdot,\cdot\big)_K$ is non-degenerate on $\hS_{\Delta}$. If an operator $\op$ preserves $\hS_{\Delta}$,  then $\op^{\#}$ preserves $\hS_{\Delta}$ as well, and the two operators are conjugate with respect to $\big(\cdot,\cdot\big)_K$ in the sense that
    \begin{equation}\label{eq:conjugate_property}
    \big(\Psi,\op\Phi\big)_{K} = (-)^{\gh(\Psi)\cdot\gh(\op)}  \big(\op^{\#}\Psi,\Phi\big)_{K} + \dd\mathcal{J}_{\Psi,\Phi}
    \end{equation}
    for some $d$-form $\mathcal{J}_{\Psi,\Phi}$.
\end{lem}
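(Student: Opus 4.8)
The plan is to reduce the claimed identity \eqref{eq:conjugate_property} to the already-fixed adjointness properties of the Fock-level conjugation $(\cdot)^{\dagger}$, being careful about the $(d+1)$-form valued nature of the pairing and the boundary term. First I would verify the non-degeneracy claim: since $K$ is invertible, $[N,K]=0$, $\gh(K)=0$, and $[h,K]=(d-2\Delta)K$, the operator $K$ restricts to an invertible endomorphism of $\hS_{\Delta}$ (one checks $h K \Phi = K h \Phi + (d-2\Delta) K\Phi = (d-2\Delta+\Delta) K\Phi$, so $K$ does not land in $\hS_\Delta$ but rather in $\hS_{d-\Delta}$ — actually one should be slightly careful here: I would instead observe that $\langle K\cdot,\cdot\rangle$ is the composition of the non-degenerate pairing $\langle\cdot,\cdot\rangle$ with the invertible map $K$, hence non-degenerate as a bilinear form on all of $\hS$, and the $h$-weight condition ensures $\langle K\Psi,\Phi\rangle\,\vol_X$ is invariant under $h$ precisely on $\hS_\Delta$ so that the pairing is well-defined there, with non-degeneracy inherited by restriction using that $\hS_\Delta$ is a direct summand cut out by the diagonalisable operator $h$). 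Then $\op^{\#}=K^{-1}\op^{\dagger}K$ preserves $\hS_\Delta$ whenever $\op$ does: this follows by conjugating $[h,\op]=0$ (the statement that $\op$ preserves $\hS_\Delta$ for all $\Delta$) through $K$ and using $h^{\dagger}=-h$, together with the established fact from the text that $\op$ and $\op^{\dagger}$ carry the same $h$-degree.

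Next I would establish the conjugation formula itself. Unwinding definitions,
\begin{equation}
\big(\Psi,\op\Phi\big)_K = \langle K\Psi,\op\Phi\rangle\,\vol_X\,,
\end{equation}
and I apply the defining adjunction of $(\cdot)^{\dagger}$ with respect to $\langle\cdot,\cdot\rangle$, namely $\langle\Xi,\op\Phi\rangle = (-)^{|\op|\,|\Xi|}\langle\op^{\dagger}\Xi,\Phi\rangle + \partial_A J^A_{\Xi,\Phi}$, with $\Xi=K\Psi$. Since $\gh(K)=0$ we have $|K\Psi|=|\Psi|$, and $|\op|=\gh(\op)\ (\mathrm{mod}\ 2)$, so the sign is $(-)^{\gh(\op)\gh(\Psi)}$. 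This yields
\begin{equation}
\langle K\Psi,\op\Phi\rangle = (-)^{\gh(\op)\gh(\Psi)}\langle \op^{\dagger}K\Psi,\Phi\rangle + \partial_A J^A_{K\Psi,\Phi}\,,
\end{equation}
and $\op^{\dagger}K\Psi = K(K^{-1}\op^{\dagger}K)\Psi = K\,\op^{\#}\Psi$, so the main term becomes $(-)^{\gh(\op)\gh(\Psi)}\langle K\op^{\#}\Psi,\Phi\rangle = (-)^{\gh(\op)\gh(\Psi)}\big(\op^{\#}\Psi,\Phi\big)_K/\vol_X$ reassembled with $\vol_X$.

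The remaining step — which I expect to be the only genuinely delicate point — is to show that the divergence term $\big(\partial_A J^A_{K\Psi,\Phi}\big)\,\vol_X$ is exact, i.e. equals $\dd\mathcal{J}_{\Psi,\Phi}$ for a $d$-form $\mathcal{J}$. The key identity is that for any vector-valued function $J^A$ one has $\dd\big(i_{J\cdot\partial}\vol\big) = (\partial_A J^A)\,\vol$ (Cartan's formula applied to the constant-coefficient volume form $\vol$, since $\mathcal{L}_{J\cdot\partial}\vol = (\operatorname{div} J)\vol$ and $\dd i + i\dd$ acting on the top form $\vol$ reduces to $\dd i$). Here however the pairing is weighted by $\vol_X = i_{X\cdot\partial}\vol$, not $\vol$, so I must instead produce $\mathcal{J}$ as a $d$-form built from $J^A$ and the Euler vector field; concretely I would set $\mathcal{J}_{\Psi,\Phi} = i_{J\cdot\partial}\,\vol_X$ (up to a sign) and compute $\dd(i_{J\cdot\partial}\vol_X)$ using $\dd\vol_X = 0$ and $i_{X\cdot\partial}\vol_X=0$, together with the $h$-weight bookkeeping: the condition $[h,K]=(d-2\Delta)K$ on $\hS_\Delta$ is exactly what makes $\langle K\Psi,\Phi\rangle$ have $h$-weight $-d$, matching $h\vol_X = -(d+2)\vol_X$ so that the product has weight $-(2d+2)$ — wait, I should recheck this balance; the statement in the text is $h\big(\Psi,\Phi\big)_K=0$, and the correct computation is that the divergence term inherits this, so that $\mathcal{J}$ can be taken homogeneous of the appropriate degree and $\dd\mathcal{J}$ reproduces $(\partial_A J^A_{K\Psi,\Phi})\vol_X$ exactly on the nose. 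I would carry out this Cartan-calculus computation explicitly, identifying $\mathcal{J}_{\Psi,\Phi}$ with $i_{J_{K\Psi,\Phi}\cdot\partial}\vol$ (a $d$-form on $\mathbb{R}^{2,d}_+$) and checking $\dd$ of it gives $(\partial_A J^A)\vol$, then restricting attention to the $\hS_\Delta$ sector where the weight conditions guarantee consistency. The rest of the proof is routine sign-tracking.
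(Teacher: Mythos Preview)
Your overall strategy is the same as the paper's: reduce to the $(\cdot)^{\dagger}$-adjointness of $\langle\cdot,\cdot\rangle$, pick up a divergence $\partial_A J^A\,\vol_X$, and show it is $\dd$-exact. But the execution of the exactness step contains concrete errors.

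First, $\dd\vol_X \neq 0$: since $\vol_X = i_{X\cdot\partial}\vol$ and $\dd\vol=0$, Cartan's formula gives $\dd\vol_X = \LD_{X\cdot\partial}\vol = (d+2)\,\vol$. Second, your late identification $\mathcal{J}=i_{J\cdot\partial}\vol$ has the wrong form degree (it is a $(d+1)$-form, so $\dd$ of it is a top form, not the needed $(d+1)$-form). The correct ansatz is your earlier one, $\mathcal{J}=i_{J\cdot\partial}\vol_X$, but then the honest computation yields
\[
\dd\big(i_{J}\vol_X\big)=(\partial_C J^C)\,\vol_X \;-\; \big((X\cdot\partial + d+1)J^C\big)\, i_{\partial_C}\vol\,,
\]
and the extra term does \emph{not} vanish for free. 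What makes it vanish is a homogeneity argument you never close: since $h\big(\Psi,\op\Phi\big)_K=0$ and $h\big(\op^{\#}\Psi,\Phi\big)_K=0$, the leftover $(\partial_C J^C)\,\vol_X$ also has $h$-weight $0$, hence $(X\cdot\partial + d+2)\partial_C J^C=0$, i.e.\ $(X\cdot\partial + d+1)J^C=0$, which kills the correction. This is precisely the paper's mechanism, and it is the missing step in your argument.

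Two smaller points. For non-degeneracy on $\hS_\Delta$, invoking ``$h$ diagonalisable'' is shaky since $h$ is a first-order PDE operator; the clean route (as in the paper) is to observe that $(h-\Delta)\Phi=0$ is an ODE in $r$, so $\Phi\in\hS_\Delta$ is uniquely determined by its unconstrained restriction to $r=\ell$, whence non-degeneracy reduces to that of $\langle\cdot,\cdot\rangle$ there. For $\op^{\#}$ preserving $\hS_\Delta$, you assume $[h,\op]=0$, which is ``preserves $\hS_\Delta$ for all $\Delta$'' rather than a single $\Delta$; the paper's phrasing is in terms of $h$-degree $0$, and then uses that $(\cdot)^{\dagger}$ preserves $h$-degree (since $h^\dagger=-h$) together with $[h,K]=(d-2\Delta)K$ to conclude.
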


Note that $\Delta\in\mathbb{R}$ in the assertion of the above lemma being fixed, for an arbitrary ambient operator $R$ and for all $\Phi,\Psi\in\hS_{\Delta}$ one has
\begin{equation}
    \big(R(h - \Delta)\,\Phi,\Psi\big)_K = 0\,\quad\text{and}\quad \big((h - \Delta)R\,\Phi,\Psi\big)_K = 0\,.
\end{equation}
Hence the property \eqref{eq:conjugate_property} is preserved under adding combinations of operators of the form $R(h - \Delta)$ or $(h - \Delta)R$ to $\op^{\#}$, with $R$ standing for arbitrary ambient operators.

\subsection{BRST-anti-invariant inner product\\ and Lagrangians for the massless fields}\label{sec:massless}

We are particularly interested in inner products restricted to specific subsets $\hT\subset\hS$ such that the BRST operator is symmetric, {\it i.e.}  $\Omega^{\#} = \Omega$. Equivalently, the inner products of interest are BRST-anti-invariant, which means
\begin{equation}\label{eq:BRST_anti-invar}
    \big(\Psi,\Omega\Phi\big)_{K} - (-)^{\gh(\Psi)}\big(\Omega\Psi,\Phi\big)_{K} = \dd\mathcal{J}_{\Psi,\Phi}\quad\text{for all}\quad \Psi,\Phi \in \hT\,.
\end{equation}
The BRST operator is symmetric with respect to the inner product \eqref{eq:pairing_in_difforms} on $\hS_{\Delta}$, so
\begin{equation}
    \hO^{\dagger} = \hO\quad\Rightarrow\quad\hO^{\#} = K^{-1}\hO K
\end{equation}
due to \eqref{dashdagger}.
\vskip 4 pt

As a next step we concentrate on massless spin-$s$ fields which are identified with the elements of $\hT^{(s)}_{2-s}$. Consider the simplest operator $K_0 = r^{-(d + 2s - 4)}$ where the radial coordinate $r$ is defined by \eqref{eq:embedding}. This operator satisfies all the conditions listed after the definition \eqref{eq:pairing_in_difforms} and denote $\big(\cdot,\cdot\big)_0 : = \big(\cdot,\cdot\big)_{K_0}$. Note the following relation for any $w\in\mathbb{R}$:
\begin{equation}
\begin{array}{rl}
    [\hO,r^{-w}] = &  -r^{-(w + 2)}\,w\,\left(c_0\,(w - 2(X\cdot\partial_X) - d) +\, (c\, X\cdot\partial_P)^{\dagger} - \,c \,X\cdot\partial_P\right)\\
    = & -w\,r^{-(w + 2)}\left(c_0\, (w  - d + 2h) + (c \sfT_1)^{\dagger} - c \sfT_1\right)\,.
\end{array}
\end{equation}
For the particular value $w_{s} = d + 2s - 4$ it implies, that for any $\Psi,\Phi\in\hT^{(s)}_{2-s}$ one has
\begin{equation}\label{eq:commutator_annihilator}
    \langle[\hO,K_0]\,\Psi,\Phi\rangle = 0\,.
\end{equation}
Indeed, this is because $(w_{s} - d + 2h)\Psi = 2(h-\Delta)\Psi = 0$ and by definition of $\hT^{(s)}_{2-s}$ holds $\sfT_1\Psi = \sfT_1\Phi = 0$. The latter implies, in turn, that any $\sfT_1^{\dagger}$-image is orthogonal to $\hT^{(s)}_{2-s}$ with respect to $\langle\cdot,\cdot\rangle$. All in all, relation \eqref{eq:commutator_annihilator} together with the sequence of identities
\begin{equation}
\def\arraystretch{1.5}
\begin{array}{rl}
\big(\hO^{\#}\Psi,\Phi\big)_0 =& 
\langle K_0(K_0^{-1}\hO K_0)\Psi,\Phi\rangle\,\vol_{X} \\
    =  & 
    \big(\hO\Psi,\Phi\big)_0 + 
    \langle[\hO,K_0]\,\Psi,\Phi\rangle\,\vol_{X}
\end{array}
\end{equation}
leads to the BRST-anti-invariance \eqref{eq:BRST_anti-invar} of the inner product $\big(\cdot,\cdot\big)_0$ on $\hT^{(s)}_{2-s}$, as defined in \eqref{eq:BRST_anti-invar}.
The following theorem is in order (see Appendix \bref{sec:proof_thms} for proof).
\begin{thm}\label{thm:massless}
    The pairing $\big(\cdot,\cdot\big)_{0}$ on $\hT^{(s)}_{2-s}$ is symmetric, non-degenerate and BRST-anti-invariant. For $\Phi\in \hT^{(s)}_{2-s}$ with $\gh(\Phi) = 0$ the following ambient Lagrangian
        \begin{equation}
           \Lx[\Phi] = \big(\Phi,\hO\Phi\big)_{0} = L[\Phi]\,\vol_{X}
        \end{equation}
        is gauge-invariant, and the pullback of $L[\Phi]$ to $AdS_{d+1}$ coincides (up to normalisation) with the gauge-invariant triplet Lagrangian \cite{Sagnotti:2003qa,Fotopoulos:2006ci,Fotopoulos:2008ka} for a massless spin-$s$ field.
\end{thm}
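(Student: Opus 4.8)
\emph{Symmetry, non-degeneracy and BRST-anti-invariance.} First I would note that $K_0 = r^{-(d+2s-4)}$ is an invertible algebraic operator with $K_0^{\dagger}=K_0$, $[N,K_0]=0$, $\gh(K_0)=0$, and (for the massless value $\Delta=2-s$) with $h$-weight $[h,K_0]=(d-2\Delta)K_0$. Since $K_0^{\dagger}=K_0$ the pairing $\big(\cdot,\cdot\big)_0$ is symmetric, and Lemma \ref{lem:conjugation} gives its non-degeneracy on $\hS_{2-s}$ (hence on $\hT^{(s)}_{2-s}\subset\hS_{2-s}$) together with the conjugation identity $\big(\Psi,\hO\Phi\big)_0 = (-)^{\gh(\Psi)}\big(\hO^{\#}\Psi,\Phi\big)_0 + \dd\mathcal{J}_{\Psi,\Phi}$. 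It then only remains to check $\hO^{\#}=\hO$ on $\hT^{(s)}_{2-s}$: this is precisely the computation carried out just before the theorem, where from $\big(\hO^{\#}\Psi,\Phi\big)_0 = \big(\hO\Psi,\Phi\big)_0 + \langle[\hO,K_0]\,\Psi,\Phi\rangle\,\vol_X$ together with $(w_s-d+2h)\Psi=2(h-\Delta)\Psi=0$ and $\sfT_1\Psi=\sfT_1\Phi=0$ (so that the $\sfT_1^{\dagger}$-term in $[\hO,K_0]$ pairs trivially against $\hT^{(s)}_{2-s}$) one gets $\langle[\hO,K_0]\,\Psi,\Phi\rangle=0$. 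Substituting back yields \eqref{eq:BRST_anti-invar}.

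\emph{Gauge invariance.} Next, for a gauge parameter $\xi\in\hT^{(s)}_{2-s}$ with $\gh(\xi)=-1$ I would use that $\hO$ preserves $\hT^{(s)}_{2-s}$, so $\hO\xi\in\hT^{(s)}_{2-s}$ has even ghost degree. From $\hO^2=0$,
\begin{equation*}
\Lx[\Phi+\hO\xi] = \big(\Phi+\hO\xi,\hO\Phi\big)_0 = \Lx[\Phi] + \big(\hO\xi,\hO\Phi\big)_0 ,
\end{equation*}
and applying \eqref{eq:BRST_anti-invar} with $\Psi=\hO\xi$ together with $\hO(\hO\xi)=0$ gives $\big(\hO\xi,\hO\Phi\big)_0 = \dd\mathcal{J}_{\hO\xi,\Phi}$. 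Hence $\Lx[\Phi]$ changes only by an exact $(d+1)$-form, which is the claimed gauge invariance of $L[\Phi]\,\vol_X$ (and of its pullback to $AdS_{d+1}$ up to a total derivative).

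\emph{Identification of the pullback, and the main obstacle.} For massless spin $s$ the depth is $t=1$, so Lemma \ref{lem:T1_resolve} lets me write $\Phi = U\widetilde\Phi$ with $\widetilde\Phi=\widetilde\Phi_0$ purely tangent ($X\cdot\partial_P\widetilde\Phi=0$); equivalently, using the tangent/normal split $P^A = \widetilde P^A + q\,X^A/r$ of Lemma \ref{lem:radial_decomposition}, $\widetilde\Phi$ is of degree $0$ in $q$, so that its ghost-degree-zero content reduces to a triplet of tangent $AdS_{d+1}$ tensors of ranks $s$, $s-1$ and $s-2$. I would then substitute into $\hO$ of \eqref{eq:BRST_operator} the radial decompositions of $\Box$, $S=\partial_P\cdot\partial_X$ and $S^{\dagger}=-P\cdot\partial_X$ obtained in Appendix \ref{sec:formulae} in terms of the intrinsic $AdS_{d+1}$ covariant derivative, its symmetrisation, the divergence and the Laplacian; the constraints (homogeneity $\Delta=2-s$, $P$-degree $s$, tracelessness, tangency) collapse the purely radial pieces, while the explicit $r$-dependence fixed by homogeneity is what produces the cosmological-constant-dependent mass-like terms. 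On $r=\ell$ the operator $K_0$ becomes the constant $\ell^{-(d+2s-4)}$, the Berezin integral over $c_0$ and the Fock pairing extract the scalar density $L[\Phi]$, and the pullback of $\vol_X = i_{X\cdot\partial}\vol$ is proportional to the $SO(2,d)$-invariant volume form on $AdS_{d+1}$; a term-by-term comparison with the triplet Lagrangian of \cite{Sagnotti:2003qa,Fotopoulos:2006ci,Fotopoulos:2008ka} then concludes. I expect this last step to be the main obstacle: tracking how the intertwiner $U$ mixes the tangent/normal split with the ambient auxiliary variable, how the $c_0$-sector and the term $c\,\partial_b\partial_{c_0}$ of $\hO$ feed into $L[\Phi]$, and reconciling signs and normalisations of the symmetrised covariant derivative with the conventions of the cited references -- a finite but lengthy computation, hence its placement in Appendix \ref{sec:proof_thms}.
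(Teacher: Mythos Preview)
Your approach is essentially the paper's: the pairing properties are argued exactly as in Section \ref{sec:massless} before the theorem, gauge invariance follows from BRST-anti-invariance and $\hO^2=0$, and the pullback identification is deferred to the explicit computation with the formulae of Appendix \ref{sec:formulae}. The paper's own proof in Appendix \ref{sec:proof_thms} additionally emphasises that the formal variation reproduces the equations of motion $\hO\Phi=0$, invoking Theorem \ref{thm:EL_derivative} to justify varying before restriction to $\hT^{(s)}_{\Delta}$; you do not touch on this, but it is not strictly part of the statement.

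There is, however, one genuine logical slip: the inference ``non-degenerate on $\hS_{2-s}$, hence on $\hT^{(s)}_{2-s}\subset\hS_{2-s}$'' is not valid in general --- the restriction of a non-degenerate bilinear form to a subspace need not remain non-degenerate. The paper treats this point separately in the proof of Lemma \ref{lem:deformation} (Appendix \ref{sec:proof_K}): since $K_0$ is invertible, the question reduces to non-degeneracy of $\langle\cdot,\cdot\rangle$ on the subspace cut out by $\sfT_1,\sfT_2$, which is checked directly via the radial/ghost decomposition of Appendix \ref{sec:formulae}. In the massless case ($t=1$) only the purely tangent component survives by Lemma \ref{lem:T1_resolve}, so the check is easy --- but it is a separate check, not a consequence of Lemma \ref{lem:conjugation}.
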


\subsection{Massive and partially massless AdS fields}\label{sec:massive}

The above construction can be adapted to a uniform description of massive and partially massless fields. Namely, the main freedom available by the construction is in the deformation operator $K$. Thus we are aiming at a completion $K = K_0 + \dots$ such that the corresponding inner product $\big(\cdot,\cdot\big)_K$ is BRST-anti-invariant on $\hT_{\Delta}$ for any $\Delta\in\mathbb{R}$. The latter implies (see \eqref{eq:commutator_annihilator}) that operators $K$ of interest should satisfy
\begin{equation}\label{eq:K_ambiguity}
   [\hO,K] =  R^{\prime}(h - \Delta) + (h - \Delta)R^{\prime\prime} + S^{\prime}\sfT_2 + \sfT_2^{\dagger}S^{\prime\prime}
\end{equation}
for some ambient operators $R^{\prime}, R^{\prime\prime}$ and $S^{\prime}, S^{\prime\prime}$.
\vskip 4 pt

The following result is available (see Appendix \bref{sec:proof_K} for proof).
\begin{lem}\label{lem:deformation}
    There exists a unique inner product $(\cdot,\cdot)_K$ on $\hT_{\Delta}$ (modulo overall rescaling) such that it is non-degenerate and BRST-anti-invariant for all $\Delta\in\mathbb{R}$. The operator $K$ is fixed up to adding operators having the structure of the right-hand-side of \eqref{eq:K_ambiguity}:
    \begin{equation}\label{eq:deformation_K_ansatz}
        K = r^{-(d-2\Delta)}\,\bigg(U^{\dagger}U^{-1}\bigg)^{-1}\,\mathfrak{D}\,\,\bigg(U^{\dagger}U^{-1}\bigg)\,,
    \end{equation}
    where $U$ was defined in \eqref{eq:operator_U}, and the operator $\mathfrak{D}$ is diagonal in the radial decomposition \eqref{eq:radial_decomposition}: for any non-zero radial component of $\Phi^{(s)}\in\hT^{(s)}_{\Delta}$ holds
    \begin{equation}\label{eq:coefficients_nu}
        \mathfrak{D}\Phi^{(s)}_{n} = \nu^{(s|n)}_{\Delta}\,\Phi^{(s)}_{n}\quad\text{with}\quad\nu^{(s|n)}_{\Delta} = \frac{\big[\frac{d}{2} + s - 2\big]_{n}}{\big[\Delta + s - 2\big]_{n}}\;{}_2F_{1}\left(\begin{matrix} -n,\;\frac{d}{2}-\Delta\\ \frac{d}{2} + s - 1 - n \end{matrix}\; ; \;-1\right)\,,
    \end{equation}
    where $[x]_n = x(x-1)\dots (x-n+1)$ denotes the falling Pochhammer symbol.
\end{lem}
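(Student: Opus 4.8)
The plan is to reduce the statement to an explicit diagonalization problem on the radial components and then solve the resulting recursion, identifying the solution with the hypergeometric expression \eqref{eq:coefficients_nu}. First I would use Lemma~\bref{lem:T1_resolve} to trade the $\sfT_1$-constraint for a cutoff in the radial decomposition: writing $\Phi = U\widetilde\Phi$ with $\widetilde\Phi = \sum_{m}\widetilde\Phi_m$, the ansatz \eqref{eq:deformation_K_ansatz} is designed precisely so that $K$ acts on $\widetilde\Phi$ as $r^{-(d-2\Delta)}\,(U^{\dagger})^{-1}\,\mathfrak{D}\,U^{\dagger}$ after conjugating by $U^{-1}$; the factor $r^{-(d-2\Delta)}$ fixes the $h$-weight $[h,K]=(d-2\Delta)K$ and the wrapping by $U^{\dagger}U^{-1}$ ensures $K^{\#}=K$ reduces to $\mathfrak D$ being symmetric for the Fock pairing restricted to the tangent (traceless) sector. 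Since $\mathfrak D$ is declared diagonal in the radial grading with eigenvalues $\nu^{(s|n)}_\Delta$, the conditions $K^\dagger=K$, $[N,K]=0$, $\gh(K)=0$ are immediate, and non-degeneracy of $(\cdot,\cdot)_K$ on $\hT_\Delta$ follows from Lemma~\bref{lem:conjugation} once we know $\nu^{(s|n)}_\Delta\neq 0$ on the relevant range of $n$ (which is where the explicit formula earns its keep — the product of Pochhammer ratios and the terminating ${}_2F_1$ at argument $-1$ is visibly nonzero for generic $\Delta$ and, by the hypergeometric identities relegated to Appendix~\bref{hypergeom}, on the full range needed).

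The heart of the matter is BRST-anti-invariance, i.e. solving \eqref{eq:K_ambiguity}. I would first compute $[\hO,K]$ using the radial decomposition of $\hO$ — concretely, decompose $\hO$ acting on $\widetilde\Phi$ into a part that shifts the radial degree $n$ by $+1$, a part that shifts by $-1$, and a radial-degree-preserving remainder (this mirrors the splitting $S^\dagger=-P\cdot\partial_X$ versus $S=\partial_P\cdot\partial_X$ and the $c_0\Box$ term, whose radial reduction is recorded in Appendix~\bref{sec:formulae}). Modulo the ``trivial'' terms $R'(h-\Delta)$, $(h-\Delta)R''$ and $S'\sfT_2+\sfT_2^\dagger S''$ — which we are free to discard because on $\hT_\Delta$ they contribute nothing to the pairing, as noted after Lemma~\bref{lem:conjugation} — the condition $[\hO,K]\in(\text{trivial})$ becomes a finite set of scalar equations relating $\nu^{(s|n)}_\Delta$ and $\nu^{(s|n\pm1)}_\Delta$, one for each pair of adjacent radial levels. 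This is a three-term-type recursion in $n$ (with $s,\Delta,d$ as parameters), supplemented by the boundary condition $\nu^{(s|0)}_\Delta=1$ (or whatever normalization the overall rescaling fixes) and by the requirement that it close consistently at the top level dictated by $\Delta=1+t-s$ when partial-masslessness is in force — the latter being exactly the compatibility that singles out the critical weights. I would then verify, using contiguous relations for ${}_2F_1$ and the Pochhammer manipulations collected in Appendix~\bref{hypergeom}, that \eqref{eq:coefficients_nu} solves this recursion with the correct initial value; uniqueness (modulo the stated ambiguity and overall scale) follows because the recursion determines $\nu^{(s|n)}_\Delta$ from $\nu^{(s|0)}_\Delta$ step by step.

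The main obstacle I expect is bookkeeping the radial reduction of the operator $\hO$ cleanly enough that the ``trivial'' terms really do organize into the precise shape $R'(h-\Delta)+(h-\Delta)R''+S'\sfT_2+\sfT_2^\dagger S''$ and not something weaker. In particular the $c_0\Box$ term in \eqref{eq:BRST_operator}, after radial reduction, produces the intrinsic AdS Laplacian plus curvature-type corrections proportional to $h$ and to $N$, and one must check that every residual piece either annihilates $\hT_\Delta$ or is absorbable; the $c\,\partial_b\partial_{c_0}$ and $c\,S$ terms similarly generate $\sfT_2$- and $\sfT_1$-dependent remainders via the commutation relations listed before Lemma~\bref{lem:radial_decomposition}, and tracking the ghost combinatorics (the $\theta^{(s)}_k$ structure of $\hS^{(s)}$) through the conjugation by $U^\dagger U^{-1}$ is the delicate step. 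Once that reduction is under control, matching to the hypergeometric solution is a routine, if tedious, exercise in ${}_2F_1$ identities — the kind of computation appropriate to defer to Appendix~\bref{sec:proof_K} and Appendix~\bref{hypergeom}.
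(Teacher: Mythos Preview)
Your proposal is correct in broad outline and shares the paper's overall strategy: reduce BRST-anti-invariance to a recursion on the eigenvalues $\nu^{(s|n)}_\Delta$ and solve it in closed hypergeometric form. The concrete computational route differs, however. You propose to compute $[\hO,K]$ directly in the radial decomposition and then organize the remainder into the trivial shape \eqref{eq:K_ambiguity}. The paper instead works at the level of the pairing: it picks test fields in the ghost-$(\pm 1)$ sectors (only the $b\,\varepsilon$ and $c\,\widetilde C$ components), computes the \emph{symmetric} part $\langle K\Psi,\hO\Phi\rangle+\langle K\Phi,\hO\Psi\rangle$, and isolates the mass-like (derivative-free) terms. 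This buys two things: the derivative terms are never tracked, and the trace constraint is handled by an elementary linear-algebra lemma (their Lemma~\bref{lem:technical_constraints}) rather than by chasing $\sfT_2$-proportional remainders through a commutator. The paper extracts $\nu^{(s|0)},\nu^{(s|1)},\nu^{(s|2)}$ from the low-$n$ equations, derives a clean second-order recursion $f^{(s|n)}_\Delta=0$ for $n\geqslant 2$, and solves it via a generating-function ODE; only \emph{afterwards} does it compute $[\hO,K]$ explicitly (their \eqref{eq:K_commutator}) as a consistency check across all ghost sectors. Your direct-commutator route would also work, but the bookkeeping you flag as the main obstacle is precisely what the paper's test-field method is designed to sidestep.

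Two small gaps to watch. First, your appeal to Lemma~\bref{lem:conjugation} for non-degeneracy yields it on $\hS_\Delta$, not on the constrained subspace $\hT_\Delta$; the paper closes this by reducing (via Lemma~\bref{lem:T1_resolve}) to non-degeneracy of $\langle\cdot,\cdot\rangle$ on fields with truncated radial decomposition subject to the $U$-conjugated trace constraint, which is then checked directly. Second, a three-term recursion needs two initial conditions, not one: the overall rescaling fixes $\nu^{(s|0)}_\Delta$, but $\nu^{(s|1)}_\Delta$ is determined by a separate low-$n$ equation, not by the normalization --- so ``step by step from $\nu^{(s|0)}$'' is a slight overstatement.
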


Note that the description of massless fields proposed in the Section \bref{sec:massless} meets the above construction in the sense that the operator $K$ reduces to $K_0 = r^{-(d-2\Delta)}$. Indeed, recall that any $\Phi^{(s)}\in \hT^{(s)}_{2-s}$ has the form $\Phi^{(s)} = U\Phi^{(s)}_{0}$ (see \eqref{eq:radial_decomposition} and the comment thereafter). In particular $U^{\dagger}\Phi^{(s)}_{0} = (U^{\dagger})^{-1}\Phi^{(s)}_{0} = \Phi^{(s)}_{0}$ as well as $\mathfrak{D}\Phi^{(s)}_{0} = \Phi^{(s)}_{0}$. Therefore $K\Phi^{(s)} = KU\Phi^{(s)}_{0} = r^{-(d-2\Delta)}U\Phi^{(s)}_{0} = K_0\Phi^{(s)}$. 
\vskip 4 pt

Due to the uniqueness of the inner product $\big(\cdot,\cdot\big)_K$ we omit the reference to the operator $K$ and write simply $\big(\cdot,\cdot\big)$. The following theorem generalises the above Theorem \bref{thm:massless} giving a uniform Lagrangian description of totally-symmetric massive and (partially) massless AdS fields (see Appendix \bref{sec:proof_thms} for proof).

\begin{thm}\label{thm:massive}
    The pairing $\big(\cdot,\cdot\big)$ on $\hT_{\Delta}$ is symmetric, non-degenerate and BRST-anti-invariant. For $\Phi\in \hT^{(s)}_{\Delta}$ with $\gh(\Phi) = 0$ the following ambient Lagrangian
        \begin{equation}
           \Lx[\Phi] = \big(\Phi,\hO\Phi\big) = L[\Phi]\,\vol_{X}
        \end{equation}
        is gauge-invariant, and the pullback of $L[\Phi]$ to $AdS_{d+1}$ gives the gauge-invariant triplet Lagrangian for a (partially) massless (when $\Delta = 1 + t - s$ for $t\in \{1,\dots, s\}$) or massive spin-$s$ field.
\end{thm}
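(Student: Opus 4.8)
The plan is to reduce Theorem \bref{thm:massive} to the two previous results, Lemma \bref{lem:deformation} and Lemma \bref{lem:conjugation}, exactly as Theorem \bref{thm:massless} was reduced to the corresponding statements in the massless case. First, symmetry and non-degeneracy of $\big(\cdot,\cdot\big)$ on $\hT_{\Delta}$ are immediate: the operator $K$ given by \eqref{eq:deformation_K_ansatz} is algebraic, satisfies $[N,K] = 0$, $\gh(K) = 0$ and $[h,K] = (d-2\Delta)K$ by inspection of the three factors (the prefactor $r^{-(d-2\Delta)}$ carries the required $h$-weight, while $U$, $U^\dagger$ and $\mathfrak{D}$ are $h$- and $N$-neutral), and it is invertible with $K^\# = K$ because $\mathfrak{D}$ is diagonal with nonzero eigenvalues $\nu^{(s|n)}_\Delta$ in the radial decomposition and conjugation by $U^\dagger U^{-1}$ is manifestly a similarity transformation. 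Hence the hypotheses listed after \eqref{eq:pairing_in_difforms} hold, and Lemma \bref{lem:conjugation} applies to give non-degeneracy of $\big(\cdot,\cdot\big)_K$ on $\hS_\Delta$, which restricts to non-degeneracy on the subspace $\hT_\Delta$ after one checks that the traceless (and, at critical $\Delta$, tangent-depth-$t$) constraints defining $\hT_\Delta$ are preserved by the relevant operators — this was already recorded in Section \bref{fieldsambient}.

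Second, for BRST-anti-invariance on $\hT_\Delta$ I would invoke exactly equation \eqref{eq:K_ambiguity}: Lemma \bref{lem:deformation} guarantees that $K$ is constructed so that $[\hO,K] = R'(h-\Delta) + (h-\Delta)R'' + S'\sfT_2 + \sfT_2^\dagger S''$. For $\Psi,\Phi \in \hT_\Delta$ one has $(h-\Delta)\Psi = (h-\Delta)\Phi = 0$, $\sfT_2\Psi = \sfT_2\Phi = 0$, and — because $\langle\cdot,\cdot\rangle$ is built from the Fock pairing with $\sfT_2^\dagger$ the adjoint of $\sfT_2$ — any $\sfT_2^\dagger$-image is $\langle\cdot,\cdot\rangle$-orthogonal to $\hT_\Delta$. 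Therefore $\langle [\hO,K]\Psi,\Phi\rangle = 0$, which is the $K$-deformed analogue of \eqref{eq:commutator_annihilator}. Combining this with $\hO^\dagger = \hO$ and the chain
\begin{equation}
\big(\hO^{\#}\Psi,\Phi\big) = \langle K(K^{-1}\hO K)\Psi,\Phi\rangle\,\vol_X = \big(\hO\Psi,\Phi\big) + \langle[\hO,K]\Psi,\Phi\rangle\,\vol_X
\end{equation}
gives $\big(\hO^\#\Psi,\Phi\big) = \big(\hO\Psi,\Phi\big)$ on $\hT_\Delta$; feeding this back into \eqref{eq:conjugate_property} of Lemma \bref{lem:conjugation} yields the BRST-anti-invariance \eqref{eq:BRST_anti-invar}. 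Gauge-invariance of $\Lx[\Phi] = \big(\Phi,\hO\Phi\big)$ for $\gh(\Phi)=0$ then follows formally as in \eqref{eq:formal_variation}: under $\Phi \sim \Phi + \hO\xi$ the variation is $2\big(\hO\xi,\hO\Phi\big) + \dd\mathcal{J}$, and $\big(\hO\xi,\hO\Phi\big) = \pm\big(\xi,\hO^2\Phi\big) + \dd(\cdots) = \dd(\cdots)$ by anti-invariance and nilpotency $\hO^2 = 0$, so $\Lx$ changes by a total derivative and its pullback by an exact form (up to boundary terms), consistently with $\dd$ commuting with pullback along the embedding $AdS_{d+1}\hookrightarrow\mathbb{R}^{2,d}_+$.

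Third, for the identification of the pullback with the known triplet Lagrangians I would use Lemma \bref{lem:T1_resolve}: writing $\Phi = U\widetilde\Phi$ with $\widetilde\Phi$ a polynomial of degree at most $t-1$ in the radial variable $q$ (degree $0$ in the massless case, unrestricted for generic massive $\Delta$ once one also allows the extra $q$-dependence), one expands $\big(\Phi,\hO\Phi\big) = \langle U\widetilde\Phi, KU(K^{-1}\hO K)\widetilde\Phi\rangle\,\vol_X$, uses $KU = r^{-(d-2\Delta)}(U^\dagger U^{-1})^{-1}\mathfrak{D}(U^\dagger U^{-1})U = r^{-(d-2\Delta)}(U^\dagger)^{-1}\mathfrak{D}\,\widetilde{(\cdots)}$ to bring the pairing to the form $\langle \widetilde\Phi, \mathfrak{D}\,(\text{reduced BRST operator})\widetilde\Phi\rangle$ with the $\nu^{(s|n)}_\Delta$ weighting the radial components, and then translates the ambient $\Box$, $S$, $S^\dagger$ into intrinsic $AdS$ operators via the radial decomposition formulae of Appendices \bref{sec:coordinates} and \bref{sec:formulae}. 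Matching the resulting intrinsic expression term by term against the partially-massless/massive triplet Lagrangians in the literature completes the proof. The main obstacle is precisely this last bookkeeping step — controlling the interplay between the similarity transformation by $U^\dagger U^{-1}$, the diagonal operator $\mathfrak{D}$, and the non-diagonal pieces $c\,S$, $S^\dagger\partial_b$ of $\hO$ under radial reduction, and verifying that the hypergeometric coefficients $\nu^{(s|n)}_\Delta$ are exactly those needed to reproduce the correct relative factors in the $AdS$ Lagrangian (in particular that they degenerate correctly at $\Delta = 1+t-s$ to the depth-$t$ partially-massless case); most of this, however, is deferred to Appendix \bref{sec:proof_thms}, and the structural argument above is what carries the theorem.
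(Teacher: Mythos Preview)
Your structural argument is essentially the paper's: reduce symmetry, non-degeneracy and BRST-anti-invariance to Lemma~\bref{lem:deformation}, deduce gauge-invariance from $\hO^2=0$ and \eqref{eq:BRST_anti-invar}, and defer the explicit pullback to the radial-decomposition formulae of Appendices~\bref{sec:coordinates}--\bref{sec:formulae}. Two points deserve correction, however.

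First, your non-degeneracy step contains a genuine gap. You write that Lemma~\bref{lem:conjugation} gives non-degeneracy of $\big(\cdot,\cdot\big)_K$ on $\hS_\Delta$, ``which restricts to non-degeneracy on the subspace $\hT_\Delta$.'' This inference is false in general: a non-degenerate symmetric form can become degenerate on a subspace (any isotropic subspace is a counterexample). The paper does \emph{not} argue this way. In the proof of Lemma~\bref{lem:deformation} (Appendix~\bref{sec:proof_K}) non-degeneracy on $\hT_\Delta$ is established separately: one uses invertibility of $K$ to reduce to non-degeneracy of $\langle\cdot,\cdot\rangle$ on $\hT_\Delta$, then applies Lemma~\bref{lem:T1_resolve} to trade the $\sfT_1$-constraint for a restriction on the radial decomposition together with the modified trace constraint \eqref{eq:T2_modifid}, and finally checks non-degeneracy explicitly in the AdS component expansion of Appendix~\bref{sec:formulae}. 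Your remark that ``the constraints are preserved by the relevant operators'' does not substitute for this.

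Second, the paper's proof (Appendix~\bref{sec:proof_thms}) invokes Theorem~\bref{thm:EL_derivative} to justify that the formal variation \eqref{eq:formal_variation} actually yields the correct equations of motion $\hO\Phi=0$ on the constrained space --- i.e.\ that varying before or after imposing the radial/homogeneity constraints gives the same result. You gesture at \eqref{eq:formal_variation} but do not invoke the ambient variational bicomplex; without it, the passage from $\delta\Lx=2(\delta\Phi,\hO\Phi)+\dd\mathcal J$ to the AdS Euler--Lagrange equations is precisely the subtlety the paper's Section~\bref{sec:jets} was written to resolve.
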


Let us comment that fields of the above formulations are associated to the representation space basis elements of vanishing ghost degree. By extending $\Phi$ with the fields associated to the basis elements of nonvanishing degree results automatically in the BV master actions of the system, see {\it e.g.} \cite{Barnich:2004cr} for further details.

\subsection{From radial to flat dimensional reduction}\label{sec:flat_limit}

The described above radial reduction for Lagrangian description of massive and (partially) massless fields on $AdS_{d+1}$ admits a flat limit which leads to the description of massive fields via dimensional reduction in flat space \cite{Alkalaev:2008gi,Chekmenev:2019ayr}. To do so, we fix the timelike direction $\partial_{0^{\prime}}=V^A\partial_A$ (with $V^AV_A = -1$) and consider neighbourhoods of the points $\ell \,V^A$, each neighbourhood parametrised as $X^{A} = \ell \,V^A + \bar{X}^A$. The flat limit is understood as $\ell\to \infty$.
\vskip 4 pt

In order to analyse the flat limit of constraints and ambient Lagrangians, let us first consider the case of massive fields. For each value of $\ell$ we take a particular (non-critical) value of the radial weight $\Delta_{\ell}$ such that there exists a limit:
\begin{equation}\label{eq:weight_flat_limit}
    \frac{\Delta_{\ell}}{\ell} \xrightarrow{\ell\to\infty} m \in\mathbb{R}\,.
\end{equation}

Radial constraint \eqref{eq:homogeneity_ghosts}, rewritten in the coordinates of the point $\ell V + \bar{X}$, admits a limit, which corresponds to the flat dimensional reduction:
\begin{equation}\label{eq:constraint_homogeneity_flat}
    \ell\left(\partial_{0^{\prime}} + \frac{1}{\ell}\left(Y\cdot\partial + 2 c_0\frac{\partial}{\partial c_0} - b\frac{\partial}{\partial b} + c\frac{\partial}{\partial c}\right) + \frac{\Delta_{\ell}}{\ell}\right) \Phi = 0\quad \xrightarrow{\ell\to\infty}\quad  \left(\frac{\partial}{\partial \bar{X}^{0^{\prime}}} + m\right)\Phi = 0\,.
\end{equation}

To analyse the limit of the inner product $\langle K\cdot,\cdot\rangle\, \vol$ first note that $\ell^{-1}\vol \xrightarrow{\ell\to\infty} i_{\partial_{0^{\prime}}}(*1)$, whose pullback to the surface $\bar{X}^{0^{\prime}} = 0$ gives the flat $(d+1)$-dimensional volume form. For the limiting behavior of the deformation operator $K$ first note that $U\xrightarrow{\ell\to\infty} 1$ in \eqref{eq:operator_U} because $r^{-1} \xrightarrow{\ell\to\infty} 0$. Due to the structure of the coefficients $\nu^{(s|n)}_{\Delta}$ (see Theorem \bref{lem:deformation} and Appendix \bref{hypergeom}) one finds $\nu^{(s|n)}_{\Delta} \xrightarrow{\ell\to\infty} (-1)^{n}$, and hence 
\begin{equation}\label{eq:K_flat}
    \text{for}\quad \big(\,(X\cdot P)(X\cdot \partial_P) - n\,X^2\,\big)\,\Phi = 0\quad\text{there is}\quad \ell^{d-2\Delta_{\ell}} K\Phi\xrightarrow{\ell\to\infty} (-)^{n}\Phi\,.
\end{equation}
Next, note the following limit of the radial oscillator:
\begin{equation}
    \underbrace{-r^{-1}\,X\cdot P}_{q} \xrightarrow{\ell\to\infty} P^{0^{\prime}}\,.
\end{equation}
As a result, field components with particular degree in the radial oscillator turn into the components with the same degree in $P^{0^{\prime}}$ (see Lemma \bref{lem:radial_decomposition} for $T = \partial_{0^{\prime}}$):
\begin{equation}\label{eq:radial_decompodition_flat}
    \big((X\cdot P)(X\cdot \partial_P) - n\,X^2\big)\,\Phi = 0 \quad\xrightarrow{\ell\to\infty}\quad \left(P^{0^{\prime}}\cdot\frac{\partial}{\partial P^{0^{\prime}}} - n \right)\,\Phi = 0\,.
\end{equation}
With this at hand, relation \eqref{eq:K_flat} can be interpreted as a change of the conjugation rule for the oscillator $P^{0^{\prime}}$, which can be taken into account by introducing the following inner product:
\begin{equation}\label{eq:inner_product_flat}
    \ell^{d-2\Delta_{\ell}-1}\big(\cdot,\cdot\big) \quad \xrightarrow{\ell\to\infty}\quad\big(\cdot,\cdot\big)^{\prime} = e^{2m\,\bar{X}^{0^{\prime}}}\langle \cdot,\cdot\rangle^{\prime}\,i_{\partial_{0^{\prime}}}\vol\,,
\end{equation}
where $\langle \cdot,\cdot\rangle^{\prime}$ coincides with $\langle \cdot,\cdot\rangle$ except for the conjugation rule for the oscillator along $0^{\prime}$, which is changed to $(P^{0^{\prime}})^{\dagger} = \frac{\partial}{\partial P^{0^{\prime}}}$.
\vskip 4 pt

From the flat limit for the inner product \eqref{eq:inner_product_flat}, it is now straightforward to introduce the flat limit of the ambient Lagrangians in Theorem \bref{thm:massive}:
\begin{equation}\label{eq:Lagrangian_flat}
    \Lx^{\prime}[\Phi] = \big(\Phi,\Omega\Phi\big)^{\prime}\,,\quad \Phi\in\hS^{(s)}\,,
\end{equation}
to be accompanied by the mass constraint \eqref{eq:constraint_homogeneity_flat}, as well as the trace constraint \eqref{eq:ghost_traceless} (which is unaffected by the shift $X = \ell V + \bar{X}$). Pullback of \eqref{eq:Lagrangian_flat} to the surface $\bar{X}^{0^{\prime}} = 0$ leads to the ambient Lagrangian description of massive totally-symmetric spin-$s$ field in flat space. Note that the flat limit of equations of motion in the BRST formulation was proposed in \cite{Alkalaev:2009vm}, while flat dimensional reduction in similar terms was considered in \cite{Chekmenev:2019ayr} (see also earlier works \cite{Bekaert:2003uc,Alkalaev:2008gi} and \cite{Metsaev:2012uy,Metsaev:2017cuz}).
\vskip 4 pt

Finally, for (partially) massless fields, {\it i.e.} when $\Delta_{s,t} = 1+t-s$ for some $t \in \{1,\dots,s\}$, one proceeds along the same lines as in the massive case, but this time keeping $\Delta_{\ell} = \Delta_{s,t}$ fixed. This leads to \eqref{eq:constraint_homogeneity_flat} with $m = 0$. From the structure of the radial decomposition for (partially) massless fields (see Lemma \bref{lem:T1_resolve}) together with \eqref{eq:radial_decompodition_flat}, in the flat limit one has the decomposition $\Phi^{(s)} = \sum_{n=0}^{t-1} \Phi_n^{(s-n)}$ with respect to the homogeneity degree in $P^{0^{\prime}}$. The flat ambient Lagrangian \eqref{eq:Lagrangian_flat} splits into a direct sum with 
\begin{equation}\label{eq:Lagrangian_pm_decomposition}
    \Lx^{\prime}[\Phi^{(s)}] = \sum_{n = 0}^{t-1}\Lx^{\prime}[\Phi_n^{(s-n)}]\,,\quad \Phi^{(j)}_n\in\hS^{(j)}\,,
\end{equation}
where each $n$th component in the above decomposition describes a massless spin-$(s-n)$ field. The decomposition \eqref{eq:Lagrangian_pm_decomposition} expresses the known fact that a partially massless spin-$s$ field of depth $t$ in the flat limit splits into a set of free massless fields of spins $s,s-1,\dots,s-t+1$ \cite{Deser:2001us}.

\section{Ambient variational principle}\label{sec:jets}

As constructed in the previous sections, ambient Lagrangians are $(d+1)$-forms on the $(d+2)$-dimensional ambient space in contrast to the standard setup where  Lagrangians are spacetime top forms. In this section we develop  a description of the standard variational bicomplex approach in terms of   fields on ambient space.
In particular we give an explicit definition of the Euler-Lagrange derivative in terms of the ambient one.
Finally, we explain how the above ambient Lagrangians for totally-symmetric higher spin fields on $AdS_{d+1}$ fit into this more general setup. In addition, we demonstrate that the usual flat dimensional reduction formalism can also be understood as a particular case of the ambient Lagrangian formulation.

\subsection{Jet-bundle description of constrained ambient-space fields}

\paragraph{Ambient space and fields.} As a general setup, consider an $(n+1)$-dimensional manifold $\hM$
which serves as an ambient space for its $n$-dimensional submanifold $\Sigma$ and which carries a nowhere-vanishing vector field $T$ transversal to $\Sigma$.  It follows that $\hM$ is foliated by the integral curves of $T$ and hence is locally a bundle over $\Sigma$. For simplicity, we assume that $\hM$ is globally a principal line bundle over $\Sigma$. With a slight abuse of notation, we also consider $\Sigma$ as a global section of $\hM\twoheadrightarrow\Sigma$. 
\vskip 4 pt

It is convenient to introduce an adapted coordinate system $(t,x^\mu)$ such that
\begin{equation}
T=\frac{\partial}{\partial t}\qquad Tx^\mu=0
\end{equation}
and $\Sigma$ is singled out by $t=0$.  It is clear that $x^\mu$ define a coordinate system on $\Sigma$. Similarly, one can consider submanifolds $t=t_0$. Such manifolds can be seen as covariantly constant (i.e. 
horizontal) sections of $\hM\twoheadrightarrow \Sigma$ determined by an Ehresmann connection one-form $\vartheta=-dt$.
\vskip 4 pt

Local coordinates $X^A$ on $\hM$ are called {\it homogeneous} of degree $w$ if 
\begin{equation}\label{eq:homogeneous_coordinates}
    [T,\partial_A] = -w\,\partial_A\,\quad\text{(for some $w\in\mathbb R$)}\,.
\end{equation}
Via a $t$-dependent rescaling of the adapted coordinates one can construct homogeneous coordinates \eqref{eq:homogeneous_coordinates}, proving that the latter choice is always accessible.
For example, in the previously considered case of $\hM = \mathbb{R}_{+}^{2,d}$ one has $T = X\cdot \partial_X$, so \eqref{eq:homogeneous_coordinates} holds for the  flat coordinates with $w = 1$. Furthermore
$\vartheta = -(X\cdot X)^{-1}\,X\cdot \dd X$.
\vskip 4 pt

Finally, we assume that $\hM$ is endowed with a volume form 
\begin{equation}\label{volform}
    \vol = \rho(X)\,\dd X^{0}\wedge\dots\wedge \dd X^{n}\,.
\end{equation} 
It defines a volume form on $\Sigma$ as a pullback of
\begin{equation}\label{eq:induced_volume}
    \vol_{T} := i_{T} \vol\,.
\end{equation}
In the adapted coordinates  $\vol_{T} = \rho(X(t,x))\left|\frac{\partial X}{\partial (t,x)}\right|\,\dd x^{1}\wedge \dots \wedge \dd x^{n}$.
The data $(\hM,\Sigma,T,\vol)$ will be referred to as {\it ambient space for $\Sigma$}.

\vskip 4 pt

Consider an ambient field $\Phi(X) = \{\Phi^{\sfa}(X)\}$, where the index $\sfa$ parametrises a finite set of field components. Fields can be viewed as sections of a trivial vector bundle $\E\twoheadrightarrow\hM$, with fibers parametrised by the coordinates $u = \{u^{\sfa}\}$. Evaluation of $u^{\sfa}$ on a section $\Phi:\hM\hookrightarrow\E$ leads to the usual expressions for the fields as functions $\Phi^{\sfa}(X):=(u^{\sfa}\circ\Phi)(X)$.
\vskip 4 pt

We are interested in sections which verify the following differential constraints:
\begin{equation}\label{eq:weight_constraint_general}
    (T + \Delta^{\sfa})\,\Phi^{\sfa}(X) = 0\quad\text{for some fixed}\quad \Delta^{\sfa} \in\mathbb{R}
\end{equation}
(with no summation over $\sfa$). Note that this condition is formulated using a fixed local frame of $\E$. To formulated it in a generic frame one can, {\it e.g.}, introduce a flat linear connection on $\E$. For simplicity, in this section we disregard possible algebraic constraints on $\Phi^{\sfa}$, which can be necessary in applications. Such constraints can always be solved in terms of independent components of fields.
\vskip 4 pt

Next we define the vector bundle  as the pullback bundle $E=\E|_\Sigma$ of the vector bundle $\E\twoheadrightarrow\hM$ along the section $\Sigma\hookrightarrow \hM$. We have the following:
\begin{lem}
    Sections of $\E$ satisfying \eqref{eq:weight_constraint_general} are in one-to-one correspondence with the unconstrained sections of $E$.
\end{lem}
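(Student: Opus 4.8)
The plan is to construct explicit inverse maps between the two spaces of sections and verify that they are well-defined and mutually inverse. First I would set up the restriction map: given a section $\Phi$ of $\E$ over $\hM$ satisfying \eqref{eq:weight_constraint_general}, define $\phi := \Phi|_\Sigma$, a section of $E = \E|_\Sigma$. This requires no constraint and is manifestly well-defined. The content is entirely in the opposite direction: producing from an unconstrained section $\phi$ of $E$ a \emph{unique} section $\Phi$ of $\E$ that both restricts to $\phi$ on $\Sigma$ and solves the first-order ODE \eqref{eq:weight_constraint_general} along the fibres.

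\textbf{Construction of the lift.} Working in the adapted coordinates $(t,x^\mu)$ where $T=\partial/\partial t$ and $\Sigma=\{t=0\}$, the constraint \eqref{eq:weight_constraint_general} reads componentwise $\big(\partial_t + \Delta^{\sfa}\big)\Phi^{\sfa}(t,x)=0$, whose general solution is $\Phi^{\sfa}(t,x) = e^{-\Delta^{\sfa} t}\,\phi^{\sfa}(x)$ with $\phi^{\sfa}(x) := \Phi^{\sfa}(0,x)$. This shows existence and uniqueness of the lift: the initial data on $t=0$ — i.e. a section of $E$ — determines $\Phi$ completely, and conversely any such solution has this form, so it is smooth on all of $\hM$ since $\hM$ is (globally, by assumption) the total space of a principal line bundle over $\Sigma$ with fibre coordinate $t$. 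One should note that the formula is written in a particular local frame of $\E$; since the constraint \eqref{eq:weight_constraint_general} is itself frame-dependent as stated, the cleanest phrasing is to fix the trivialisation of $\E$ in which \eqref{eq:weight_constraint_general} holds, or equivalently to use the flat linear connection on $\E$ mentioned in the text, so that $e^{-\Delta^{\sfa}t}$ is replaced by parallel transport along the integral curves of $T$ twisted by the weight; the two descriptions agree in the homogeneous frame.

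\textbf{Mutual inverseness.} The two maps $\Phi\mapsto\Phi|_\Sigma$ and $\phi\mapsto\big(X\mapsto e^{-\Delta t}\phi\big)$ are inverse to each other: restricting the lift to $t=0$ returns $\phi$ by construction, and starting from a constrained $\Phi$, forming $\phi=\Phi|_\Sigma$ and re-lifting reproduces $\Phi$ by the uniqueness just established. Both maps are $C^\infty(\Sigma)$- (resp.\ $C^\infty(\hM)$-) linear in the obvious sense and compatible with the bundle projections, so the correspondence is an isomorphism of modules of sections, which is what is claimed.

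\textbf{Main obstacle.} There is no serious analytic obstacle — the ODE is linear first-order with constant-in-$t$ coefficient $\Delta^{\sfa}$, so solvability is immediate. The one point that needs care, and which I would flag explicitly, is \emph{globality}: the clean exponential formula relies on the standing assumption that $\hM$ is globally a principal line bundle over $\Sigma$ with a global adapted coordinate $t$ (equivalently, that the flat Ehresmann connection $\vartheta=-\dd t$ has trivial holonomy); without this one only gets a local correspondence, matching the ``locally'' caveat that appears elsewhere in the paper. A secondary bookkeeping point is the frame-dependence of \eqref{eq:weight_constraint_general}: the statement should be read with a fixed trivialisation of $\E$ (or with the auxiliary flat connection), and I would remark that in the homogeneous coordinates \eqref{eq:homogeneous_coordinates} the lift is exactly the extension of $\phi$ to a function homogeneous of the prescribed degree, recovering the familiar picture for $\hM=\mathbb{R}^{2,d}_+$.
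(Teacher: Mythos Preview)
Your proposal is correct and follows essentially the same approach as the paper: pass to adapted coordinates $(t,x^\mu)$, recognise \eqref{eq:weight_constraint_general} as a first-order ODE in $t$, and invoke existence and uniqueness with initial data on $\Sigma=\{t=0\}$. You give more detail than the paper's one-paragraph proof, in particular the explicit exponential solution (which the paper only records later, at \eqref{eq:radial_dependence}) and the careful remarks on globality and frame-dependence, but the core argument is the same.
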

\begin{proof}
    In the adapted coordinates $(t,x^\mu)$, the constraints \eqref{eq:weight_constraint_general} are a system of first-order ODE's in $t$-variable so that a solution with the initial data at $t=0$ exists and is unique. In other words solutions are reconstructed in the neighbourhood of $\Sigma$ from the initial data $\phi^{\sfa}(x)$, the latter being nothing else but a section of $E$.
\end{proof} 

\paragraph{Jets of constrained sections.} A standard geometric language for variational calculus is the jet-bundle formalism (see {\it e.g.} \cite{Anderson_bicomplex} for a review). In the sequel we will adapt it to the case of constrained fields in the ambient space. As a first step, one promotes $\E$ to its (infinite) jet extension $J^{\infty}\E$ (or $J\E$ for brevity) which itself is a vector bundle over $\hM$, whose adapted local coordinates are $X^A$ and $\boldsymbol{u} = \big\{ u^{\sfa}_{A(q)}\,:\, q\geqslant 0\big\}$. For any section $\Phi\in\Gamma(\E)$ there is a uniquely defined {\it (infinite) jet prolongation} $j^{\infty}\Phi\in \Gamma(J\E)$ parametrised at each point by all its partial derivatives:
\begin{equation}\label{eq:jet_coordinates}
    u^{\sfa}_{A(q)}\circ j^{\infty}\Phi  =  \partial_{A_1}\dots\partial_{A_q}\Phi^{\sfa} \quad \text{for all}\quad q\geqslant 0\,.
\end{equation}

The jet bundle $J\E \twoheadrightarrow \hM$ carries a canonical horizontal distribution spanned by all vectors which are tangent to the jet prolongation of any section of $\E$. In local coordinates, horizontal vectors are spanned by the basis
\begin{equation}\label{eq:total_vec_basis}
    D_A = \partial_A + \sum_{q = 0}^{\infty} u^{\sfa}_{AB(q)}\,\partial_{\sfa}^{B(q)}\,,\quad \text{where}\quad \partial_{\sfa}^{B(q)} = \frac{\partial}{\partial u^{\sfa}_{B(q)}}\,.
\end{equation}
Horizontal vector fields are also referred to as {\it total vector fields} (see \cite{Anderson_bicomplex} for details). 
\vskip 4 pt

Equation \eqref{eq:weight_constraint_general} is a partial differential equation (PDE), which defines (or, can be viewed as) a vector sub-bundle $\mathfrak{i}_{\Delta} : \I\hookrightarrow J\E$ singled out by the following linear equations: 
\begin{equation}\label{eq:jet_sub-bundle}
    z^{\sfa}_{A(q)} = D_{A(q)} \big(T^B u^{\sfa}_B\big) + \Delta^{\sfa} u_{A(q)} = 0\,.
\end{equation}
Indeed, the pullback of \eqref{eq:jet_sub-bundle} by the  jet prolongation $j^{\infty}\Phi$ of a section of $\E$ reproduces the left-hand-side of \eqref{eq:weight_constraint_general} and its differential consequences. Although $\mathcal{I}$ is not a jet bundle associated to any bundle, one has the following lemma (recall the bundle $E$ introduced below \eqref{eq:weight_constraint_general}).
\begin{lem}\label{lem:pullback_bundle_I}
    Let $\left.\I\right|_{\Sigma}$ be the vector bundle over $\Sigma$ defined as the pullback of the vector bundle $\I\twoheadrightarrow\hM$ along $\Sigma\hookrightarrow \hM$. This vector bundle is isomorphic to the infinite jet bundle $JE\twoheadrightarrow\Sigma$. The horizontal distribution on $\left.\I\right|_{\Sigma}$ is obtained by the horizontal lift of the tangent space of $\Sigma$.
\end{lem}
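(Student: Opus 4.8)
The plan is to exhibit an explicit isomorphism of vector bundles over $\Sigma$ between $\left.\I\right|_{\Sigma}$ and $JE$, and then check that it intertwines the two horizontal distributions. Work in adapted coordinates $(t,x^\mu)$, so that $T=\partial/\partial t$ and $\Sigma=\{t=0\}$. First I would describe the fibre of $\left.\I\right|_{\Sigma}$ at a point $x\in\Sigma$: by definition of the pullback bundle it is the fibre $\I_{(0,x)}\subset (J\E)_{(0,x)}$, i.e. the set of jet-coordinate values $\{u^{\sfa}, u^{\sfa}_A, u^{\sfa}_{A(2)},\dots\}$ at $(0,x)$ satisfying the linear equations \eqref{eq:jet_sub-bundle}. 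The key observation is that these equations express every jet coordinate carrying at least one ``$t$-index'' in terms of jet coordinates carrying only $x$-indices: writing $A=(t,\mu)$, the constraint $z^{\sfa}_{A(q)}=0$ lets one solve recursively for $u^{\sfa}_{t\,A(q)}$ in terms of $u^{\sfa}_{A(q)}$ and lower jets. Hence a point of $\I_{(0,x)}$ is uniquely determined by the purely horizontal data $\{u^{\sfa}_{\mu(q)}(0,x)\,:\,q\geqslant 0\}$, which is precisely a point of the fibre $(JE)_x$. This assignment is linear and fibrewise bijective, and smoothness in $x$ is manifest since the recursion coefficients in \eqref{eq:jet_sub-bundle} depend smoothly on $x$; this gives the desired bundle isomorphism $\left.\I\right|_{\Sigma}\cong JE$.

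Next I would verify the statement about horizontal distributions. On $JE\twoheadrightarrow\Sigma$ the canonical horizontal lift of $\partial/\partial x^\mu\in T_x\Sigma$ is the total derivative $D^{\Sigma}_\mu=\partial_\mu+\sum_q \phi^{\sfa}_{\mu\,\nu(q)}\,\partial/\partial\phi^{\sfa}_{\nu(q)}$ built from the reduced jet coordinates. Under the identification above I must show this equals the restriction to $\left.\I\right|_{\Sigma}$ of the horizontal lift of $\partial/\partial x^\mu$ coming from the ambient horizontal distribution on $J\E$, namely $D_\mu=\partial_\mu+\sum_q u^{\sfa}_{\mu B(q)}\,\partial_{\sfa}^{B(q)}$. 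The point is that $D_\mu$ is tangent to $\I$: since $\I$ is the PDE \eqref{eq:weight_constraint_general}, and total derivatives preserve the prolongation of any PDE, $D_\mu$ (and in fact every $D_A$) annihilates the defining functions $z^{\sfa}_{A(q)}$ along $\I$, so $D_\mu$ restricts to a vector field on $\left.\I\right|_{\Sigma}$. Then one checks that under the coordinate substitution solving away the $t$-indexed jets, $D_\mu$ acting on the surviving coordinates $u^{\sfa}_{\mu(q)}$ reproduces exactly $D^{\Sigma}_\mu$; this is a direct bookkeeping computation once one notes that $\partial_\mu$ and $\partial/\partial u^{\sfa}_{\mu(q)}$ survive the substitution unchanged while the $\partial/\partial u^{\sfa}_{t\cdots}$ directions are eliminated.

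The main obstacle I anticipate is purely organisational rather than conceptual: making the recursion in \eqref{eq:jet_sub-bundle} fully explicit when $T$ is not simply $\partial/\partial t$ but a general homogeneous field (so $T^B u^{\sfa}_B$ mixes components and the coefficient $\Delta^{\sfa}$ enters), and checking that the change of coordinates it induces on $J\E$ is triangular enough that it genuinely eliminates precisely the $t$-derivative jet coordinates without obstruction. Once one has passed to homogeneous adapted coordinates — which the excerpt guarantees is always possible — $T^B u^{\sfa}_B = u^{\sfa}_t$ up to the weight factor, the system \eqref{eq:jet_sub-bundle} becomes genuinely triangular, and the argument goes through cleanly. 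I would therefore carry out the whole proof in such coordinates, reducing the first lemma's ODE-existence statement to the algebraic fact that the leading coefficient in \eqref{eq:jet_sub-bundle} is invertible, and then the compatibility of horizontal distributions follows from the general principle that total vector fields are tangent to the infinite prolongation of any PDE system.
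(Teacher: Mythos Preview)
Your proposal is correct and follows essentially the same approach as the paper: work in adapted coordinates, observe that the constraints \eqref{eq:jet_sub-bundle} let one solve all $t$-indexed jet coordinates in terms of the purely $x$-indexed ones (the paper phrases this as ``the values of $t$-derivatives are reconstructed from the equations \eqref{eq:jet_sub-bundle}''), and then note that the ambient total derivatives $D_\mu$ are tangent to $\I$ and restrict to the intrinsic $D^{\Sigma}_\mu$. Your version is simply more explicit than the paper's terse argument, and your self-identified ``obstacle'' is indeed dissolved exactly as you anticipate by passing to adapted coordinates.
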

\begin{proof}
    The base manifold being the same for the two vector bundles, one needs to establish an isomorphism between the fibers of the two. Recall that a fiber of $JE$ over a point $p\in\Sigma$ is constituted by equivalence classes of sections of $E$ with coinciding derivatives up to all orders at $p$. As for $\left.\I\right|_{\Sigma}$, note that \eqref{eq:jet_sub-bundle} contains no algebraic constraints, so the $0$th-jet projection of $\left.\I\right|_{\Sigma}$ is isomorphic to $E$. In the adapted coordinates it is evident that a horizontal section of $\I$ at each point is parametrised by the values of its projection to the $0$th jets, as well as by the values of all its derivatives along $\Sigma$ (the values of $t$-derivatives are reconstructed from the equations \eqref{eq:jet_sub-bundle}). In particular, this means that horizontal distribution on $\I$ is the horizontal lift of tangent spaces of $\Sigma$. As a result, a fiber of $\left.\I\right|_{\Sigma}$ (over a point of $\Sigma$) consists of equivalence classes of sections of $E$, with two sections belonging to the same class whenever all their derivatives (along $\Sigma$) at this point coincide. 
\end{proof}

To illustrate the above lemma in more explicit terms, one makes use of the adapted coordinates $(t,x^{\mu})$, and notes that horizontal sections of $\I$ are parametrised by $t$- and $x$-derivatives of the general solution of the constraints \eqref{eq:weight_constraint_general},
\begin{equation}\label{eq:radial_dependence}
    \Phi^{\sfa}(X(t,x)) = e^{-t\Delta^{\sfa}}\,\phi^{\sfa}(x)\,, 
\end{equation}
with $\phi(x) = \{\phi^{\sfa}(x)\}$ being a field on $\Sigma$. Since $t$-derivatives of the above functions are prescribed, $X$-derivatives of $\Phi(X)$ reduce to $x$-derivatives of $\phi(x)$, which brings one to the jet bundle $JE$.
\vskip 4 pt

Consider the following vector field on $\E$:
\begin{equation}\label{eq:T_on_E}
    \bar{T} = T^A\partial_A - \Delta^{\sfa}\,u^{\sfa}\partial_{\sfa}\,.
\end{equation}
It is clearly a symmetry of  \eqref{eq:weight_constraint_general}, in the sense that \eqref{eq:T_on_E} preserves the space of solutions viewed as the submanifold 
$\I\subset J\E$.
In other words, the prolongation of \eqref{eq:T_on_E} to a vector field on $J\E$ is tangent to $\I$,  and reads as\footnote{The definition and explicit formulae for prolongations of vector fields can be found, {\it e.g.}, in  \cite
[Proposition 1.12]{Anderson_bicomplex}.}
\begin{equation}
    \mathrm{pr}\,\bar{T} = T^{A}D_{A} - Z\,,\quad\text{where} \quad Z = \sum_{n\geqslant 0} z^{\sfa}_{A(q)}\,\partial _{\sfa}^{A(q)} = \pr\left(z^{\sfa}\frac{\partial}{\partial u^{\sfa}}\right)\,,
\end{equation}
with $z^{\sfa}_{A(q)}$ defined in \eqref{eq:jet_sub-bundle}. In what follows we denote $T^{A} D_{A}$ by $T$ whenever it does not lead to a confusion.
Note that since $T$ is the horizontal lift of a vector field from the base $\hM$, one has the following properties
\begin{equation}\label{eq:properties_T_Z}
\def\arraystretch{1.4}
\begin{array}{c}
    \ddv  i_{T} + i_{T} \ddv = 0\quad\Rightarrow\quad \LD_{T} = \ddh i_{T} + i_{T} \ddh\,,\;\; \ddv \LD_{T} = \LD_{T}\ddv\quad\Rightarrow\quad \ddh \LD_{T} = \LD_{T}\ddh\,,\\
    i_{T}\LD_{Z} = \LD_{Z} i_{T}\,,\;\; \LD_{T}\LD_{Z} = \LD_{Z} \LD_{T}\,.
\end{array}
\end{equation}

\paragraph{Bicomplex of $T$-forms.}

Consider $\bigwedge J\E$ the algebra of local forms on $J\E$. One distinguishes the {\it contact $1$-forms}
\begin{equation}\label{eq:contact_forms}
    \theta^{\sfa}_{A(q)} = \dd u^{\sfa}_{A(q)} - \dd X^B\,u^{\sfa}_{BA(q)}\quad \text{(for all $q \geqslant 0$),}
\end{equation}
whose pullback to the jet prolongation $j^{\infty}\Phi$ of any section of $\E$ vanishes. Other way around, if for some section $j^{\infty}\Phi\in\Gamma(J\E)$ the pullback $(j^{\infty}\Phi)^{*}\theta^{\sfa}_{A(q)} = 0$ for all $q\geqslant 0$, then $j^{\infty}\Phi$ is the jet prolongation of some $\Phi\in \Gamma(\E)$. The subalgebra of $\bigwedge J\E$ generated by the contact $1$-forms is referred to as {\it contact ideal}. Since total vector fields, spanned by \eqref{eq:total_vec_basis}, are tangent to the jet prolongation of any section of $\E$, they are annihilated by the contact forms \eqref{eq:contact_forms}.
\vskip 4 pt

The $1$-forms $\dd X^A$ and $\theta^{\sfa}_{A(q)}$ (for  $q\geqslant 0$) generate a local frame of $\bigwedge J\E$, so that any $\alpha \in \bigwedge J\E$ can be locally written as
\begin{equation}
    \alpha = \sum_{p,q\geqslant 0}\alpha_{A_1\dots A_p\,|\,\sfa_1;\dots ;\sfa_q}^{B_1(k_1);\dots; B_n(k_q) }(X,\boldsymbol{u})\,\,\dd X^{A_1}\wedge \dots \wedge \dd X^{A_p} \wedge \theta^{\sfa_1}_{B_1(k_1)}\wedge \dots \wedge \theta^{\sfa_q}_{B_n(k_q)}\,.
\end{equation}
The space $\bigwedge J\E$ is bi-graded by the {\it horizontal} and {\it vertical} degrees (in $\dd X^A$ and $\theta^{\sfa}_{A(n)}$, respectively):
\begin{equation}
    \bigwedge J\E = \,\bigoplus\limits_{i=0}^{n+1}\,\bigoplus\limits_{j=0}^{\infty}\,
    \bigwedge{}^{(i,j)} J\E\,.
\end{equation}
In particular, the de Rham differential splits into its horizontal and vertical parts $\dd = \ddv + \ddh$ with respect to the bi-degree:
\begin{equation}
\def\arraystretch{1.5}
\begin{array}{l}
    \ddh : \bigwedge^{(i,j)} J\E \to \bigwedge^{(i+1,j)} J\E\,, \\
    \ddv : \bigwedge^{(i,j)} J\E \to \bigwedge^{(i,j+1)} J\E\,,
\end{array}
\end{equation}
which implies $\ddh\ddh = 0$, $\ddv\ddv = 0$ and $\ddh\ddv + \ddv\ddh = 0$ as a consequence of $\dd^2 = 0$. One has the following explicit formulae for $\ddh$, $\ddv$ in terms of the basis $1$-forms $\dd X^A$ and $ \theta^{\sfa}_{A(m)}$ (see \eqref{eq:total_vec_basis} and \eqref{eq:contact_forms}):
\begin{equation}
    \dd = \dd X^{A} \partial_A + \sum_{q \geqslant 0} \dd u^{\sfa}_{A(q)}\,\partial_{\sfa}^{A(q)} = \underbrace{\dd X^{A} D_A}_{\ddh} + \underbrace{\sum_{q \geqslant 0} \theta^{\sfa}_{A(q)}\,\partial_{\sfa}^{A(q)}}_{\ddv}
\end{equation}
Note that $\theta^{\sfa}_{A(q)} = \ddv u^{\sfa}_{A(q)}$, which we will use to denote contact $1$-forms in the sequel. All in all, the algebra $\bigwedge J\E$ is a bicomplex:
\begin{equation}\label{eq:bicomplex}
    \begin{array}{ccccccccc}
        \bigwedge^{(0,0)}J\E & \xrightarrow{\ddh} & \bigwedge^{(1,0)}J\E & \xrightarrow{\ddh} & \dots & \xrightarrow{\ddh} & \bigwedge^{(n+1,0)}J\E& \xrightarrow{\ddh} & 0\\ 
        \downarrow \scriptstyle{\ddv} & \hfill & \downarrow \scriptstyle{\ddv} & \hfill & \hfill & \hfill & \downarrow \scriptstyle{\ddv}& \hfill & \hfill \\
        \bigwedge^{(0,1)}J\E & \xrightarrow{\ddh} & \bigwedge^{(1,1)}J\E & \xrightarrow{\ddh} & \dots & \xrightarrow{\ddh} & \bigwedge^{(n+1,1)}J\E& \xrightarrow{\ddh} & 0\\
        \downarrow \scriptstyle{\ddv} & \hfill & \downarrow \scriptstyle{\ddv} & \hfill & \hfill & \hfill & \downarrow \scriptstyle{\ddv}& \hfill & \hfill\\
        \dots & \hfill & \dots & \hfill & \hfill & \hfill & \dots & \hfill & \hfill\\
    \end{array}
\end{equation}
(see \cite{Anderson_bicomplex} for further details).
\vskip 4 pt

The ambient space $\hM$ is a principal bundle over $\Sigma$ with $T$ as fundamental vector field. Basic forms are differential forms $\alpha$ on $\hM$ which are both invariant ($\LD_{T}\alpha = 0$) and horizontal ($i_{T}\alpha = 0$). Equivalently, they are pullbacks of differential forms on the base $\Sigma$ along the projection $\hM\twoheadrightarrow\Sigma$.  Analogously, one introduces the subalgebra of $T$-forms on $\I$:
\begin{equation}\label{eq:forms_I_equiv}
    \tensor*{\bigwedge\nolimits}{_{T}}\I =\left\{\alpha \in \bigwedge \I\;:\; i_{T}\alpha = 0\,,\quad \LD_{T}\alpha = 0\right\} \subset \bigwedge \I\,.
\end{equation}
Recall that $\I$ is an infinitely prolonged PDE \eqref{eq:weight_constraint_general} and hence the horizontal distribution in $J\E$ is tangent to $\I$. This induces  the decompoistion of $\bigwedge \I$ into the homogeneous components with respect to horizontal/vertical bi-degree, as well as the decomposition $\dd = \ddh + \ddv$. 
\begin{lem}\label{lem:bicomplex_classes}
    The subalgebra $\bigwedge_{T}\I \subset \bigwedge \I$ is preserved by $\ddh$ and $\ddv$, and hence is a bi-complex.
\end{lem}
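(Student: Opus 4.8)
The plan is to show that the two defining conditions of $\bigwedge_{T}\I$, namely $i_{T}\alpha = 0$ and $\LD_{T}\alpha = 0$, are each separately preserved by $\ddh$ and by $\ddv$. Since $\bigwedge_{T}\I$ is by definition the intersection of the kernels of the operators $i_{T}$ and $\LD_{T}$ (restricted to $\bigwedge\I$), it suffices to verify that $\ddh$ and $\ddv$ both (anti)commute with these operators in the appropriate sense, so that they map the intersection of the two kernels into itself. The subalgebra property is then automatic because $i_{T}$ and $\LD_{T}$ are (graded) derivations and hence their common kernel is closed under the wedge product, while $\ddh,\ddv$ are also graded derivations on $\bigwedge\I$.

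First I would record that, since $\I\subset J\E$ is an infinitely prolonged PDE, the horizontal distribution of $J\E$ is tangent to $\I$; consequently the splitting $\dd = \ddh + \ddv$ makes sense on $\bigwedge\I$, and $T = T^{A}D_{A}$ (the horizontal lift) together with its prolonged counterpart $\mathrm{pr}\,\bar T = T - Z$ restrict to $\I$. The key algebraic input is the list of commutation relations collected in \eqref{eq:properties_T_Z}, most importantly $\ddv\, i_{T} + i_{T}\, \ddv = 0$, the Cartan-type formula $\LD_{T} = \ddh\, i_{T} + i_{T}\, \ddh$, and $\ddv\LD_{T} = \LD_{T}\ddv$ (whence also $\ddh\LD_{T} = \LD_{T}\ddh$, using $\dd\LD_{T} = \LD_{T}\dd$ and the bidegree decomposition). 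From $\LD_T = \ddh i_T + i_T \ddh$ one gets, on $\alpha$ with $i_T\alpha = 0$ and $\LD_T\alpha = 0$, that $i_T(\ddh\alpha) = \LD_T\alpha - \ddh(i_T\alpha) = 0$; and $\LD_T(\ddh\alpha) = \ddh(\LD_T\alpha) = 0$, so $\ddh\alpha \in \bigwedge_T\I$. Likewise $i_T(\ddv\alpha) = -\ddv(i_T\alpha) = 0$ and $\LD_T(\ddv\alpha) = \ddv(\LD_T\alpha) = 0$, so $\ddv\alpha\in\bigwedge_T\I$. The bicomplex structure (i.e.\ $\ddh^2 = \ddv^2 = 0$ and $\ddh\ddv + \ddv\ddh = 0$ on $\bigwedge_T\I$) is then inherited verbatim from the ambient bicomplex \eqref{eq:bicomplex} on $J\E$.

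The only genuinely delicate point — and the step I expect to require the most care — is verifying that the relations in \eqref{eq:properties_T_Z} actually hold in the form needed \emph{on $\bigwedge\I$}, not merely on $\bigwedge J\E$: one must check that the horizontal lift $T$ and the prolongation $Z$ are indeed tangent to $\I$ (so that restriction is legitimate), and that the Lie derivative $\LD_T$ used here is the one along $\mathrm{pr}\,\bar T$ rather than along the bare $T^AD_A$, the two differing precisely by $\LD_Z$, which by the last line of \eqref{eq:properties_T_Z} commutes with $i_T$ and with $\LD_T$ and preserves the bidegree. So one should either phrase everything in terms of $\LD_{\mathrm{pr}\,\bar T} = \LD_T - \LD_Z$ from the start, or note that $\bigwedge_T\I$ can equivalently be cut out using $\LD_{\mathrm{pr}\,\bar T}$ since $\LD_Z$ preserves it. Apart from this bookkeeping, the proof is a direct consequence of the Cartan calculus already assembled in \eqref{eq:properties_T_Z}.
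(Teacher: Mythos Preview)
Your proposal is correct and follows essentially the same route as the paper's proof: both rely on the Cartan identity $\LD_T = \ddh i_T + i_T \ddh$ together with the commutation relations \eqref{eq:properties_T_Z} to conclude that $\ddh$ and $\ddv$ preserve the common kernel of $i_T$ and $\LD_T$; the paper merely phrases it as ``check that $\dd$ and $\ddh$ preserve $\bigwedge_T\I$'' while you check $\ddh$ and $\ddv$ separately, which is equivalent. Your cautionary remark about $\LD_T$ versus $\LD_{\mathrm{pr}\,\bar T}$ is harmless but unnecessary here: since the coefficients $z^{\sfa}_{A(q)}$ of $Z$ vanish identically on $\I$, the restriction of $Z$ to $\I$ is the zero vector field, so on $\bigwedge\I$ one has $\LD_T = \LD_{\mathrm{pr}\,\bar T}$ and the relations \eqref{eq:properties_T_Z} pull back without modification.
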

\begin{proof}
It is sufficient to check that both $\dd$ and $\ddh$ preserve $\bigwedge_{T}\I$. For this purpose one makes use of the Cartan's formula for $\LD_{T}$ and recalls \eqref{eq:properties_T_Z}.
\end{proof}
The algebra $\bigwedge_{T}\I$ is related to the bi-complex $\bigwedge JE$ via the following lemma. Define the map $\varepsilon_{\Sigma} : \bigwedge_{T} \I \to \bigwedge JE$ by restricting a local form to $\left.\I\right|_{\Sigma}$ and recalling the isomorphism in Lemma \bref{lem:pullback_bundle_I}.
\begin{lem}\label{lem:isomorphism_epsilon}
    The map $\varepsilon_{\Sigma}$ is an isomorphism of bi-complexes. As a consequence, if $\ddh \alpha \in \bigwedge_{T}\I$, one can find $\alpha^{\prime}\in \bigwedge_{T}\I$ such that $\ddh \alpha = \ddh\alpha^{\prime}$.
\end{lem}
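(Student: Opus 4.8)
The plan is to verify the two assertions in turn by working in coordinates on $\I$ adapted to the foliation by the $T$-flow. First I would combine the homogeneity choice \eqref{eq:homogeneous_coordinates} with the explicit radial profile \eqref{eq:radial_dependence} and the isomorphism $\left.\I\right|_{\Sigma}\cong JE$ of Lemma \bref{lem:pullback_bundle_I} to introduce local coordinates $(t,x^{\mu},\varphi^{\sfa}_{\mu(k)})$, $k\geqslant 0$, on $\I$ in which $x^{\mu}$ and $\varphi^{\sfa}_{\mu(k)}$ restrict on $\Sigma$ to the standard jet coordinates of $JE$ and in which the flow of $\pr\,\bar{T}$ (equivalently, of $T=T^{A}D_{A}$, since $Z$ vanishes on $\I$) acts by pure $t$-translation, so that $T=\partial_{t}$ and $T\varphi^{\sfa}_{\mu(k)}=0$. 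The observations I would then record — essentially a coordinate rephrasing of Lemma \bref{lem:pullback_bundle_I} — are that \eqref{eq:jet_sub-bundle} expresses every jet coordinate carrying a normal ($t$) index through the $\varphi^{\sfa}_{\mu(k)}$, so the independent contact $1$-forms on $\I$ are $\theta^{\sfa}_{\mu(k)}=\ddv\varphi^{\sfa}_{\mu(k)}$ — precisely those of $JE$ — and that the Cartan distribution of $\I$ is spanned by $\partial_{t}$ together with the total derivatives $D_{\mu}$ along $\Sigma$, which are $T$-invariant because total derivatives commute.

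Second, I would read off bijectivity of $\varepsilon_{\Sigma}$: in these coordinates $i_{T}\alpha=0$ and $\LD_{T}\alpha=0$ say that $\alpha$ carries no $\dd t$ and has $t$-independent components, so $\bigwedge_{T}\I$ is exactly the span of wedge products of the $\dd x^{\mu}$ and the $\theta^{\sfa}_{\mu(k)}$ with coefficients in $(x,\varphi)$, i.e.\ it coincides with $\bigwedge JE$ (the familiar fact that basic forms on the principal bundle $\I\twoheadrightarrow\left.\I\right|_{\Sigma}$ are pullbacks from the base), and $\varepsilon_{\Sigma}$, being restriction to $t=0$, is the identity on such forms. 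To see that it intertwines the differentials I would use that in the adapted coordinates $\ddh=\dd t\wedge D_{t}+\dd x^{\mu}\wedge D_{\mu}$ and $\ddv=\sum_{k\geqslant 0}\theta^{\sfa}_{\mu(k)}\wedge\partial^{\mu(k)}_{\sfa}$, with $D_{t}\varphi^{\sfa}_{\mu(k)}=0$: on a basic form the $\dd t\wedge D_{t}$ piece of $\ddh$ drops, so $\ddh$ and $\ddv$ restricted to $\bigwedge_{T}\I$ reproduce the horizontal and vertical differentials of $\bigwedge JE$ (this also re-derives Lemma \bref{lem:bicomplex_classes}); since $\varepsilon_{\Sigma}$ is the identity there, it commutes with both, so it is an isomorphism of bicomplexes.

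For the stated consequence, let $\alpha\in\bigwedge\I$ with $\beta:=\ddh\alpha\in\bigwedge_{T}\I$ and let $\iota:\left.\I\right|_{\Sigma}\hookrightarrow\I$ be the inclusion $t=0$. In coordinates $\iota^{*}$ drops every $\dd t$ and sets $t=0$, and because the $\dd t\wedge D_{t}$ part of $\ddh$ carries a $\dd t$ one gets $\iota^{*}\ddh=\ddh^{\Sigma}\iota^{*}$ with no basicity hypothesis on $\alpha$. Hence $\ddh^{\Sigma}(\iota^{*}\alpha)=\iota^{*}\beta=\varepsilon_{\Sigma}\beta$, the last equality since $\beta$ is basic. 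Putting $\alpha'=\varepsilon_{\Sigma}^{-1}(\iota^{*}\alpha)\in\bigwedge_{T}\I$ and using that $\varepsilon_{\Sigma}^{-1}$ intertwines $\ddh^{\Sigma}$ with $\ddh$ yields $\ddh\alpha'=\varepsilon_{\Sigma}^{-1}\ddh^{\Sigma}(\iota^{*}\alpha)=\varepsilon_{\Sigma}^{-1}\varepsilon_{\Sigma}\beta=\beta=\ddh\alpha$.

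The step I expect to be the main obstacle is the first: pinning down the adapted jet coordinates and checking carefully that in them the normal-index jet coordinates and contact forms are genuinely eliminated by \eqref{eq:jet_sub-bundle}, so that the Cartan distribution and contact ideal of $J\E$ restrict to $\I$ in the reduced form used throughout — though this is largely the content of Lemma \bref{lem:pullback_bundle_I}. A secondary subtlety is that $\iota^{*}$ is a chain map for $\ddh$ and not merely for $\dd$, which relies on the horizontal distribution of $\left.\I\right|_{\Sigma}$ sitting inside that of $\I$.
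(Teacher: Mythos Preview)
Your argument is correct and follows essentially the same route as the paper: both prove bijectivity of $\varepsilon_{\Sigma}$ by observing that a $T$-form on $\I$ is uniquely reconstructed from its restriction to $\Sigma$ (the paper phrases this as an initial-data problem for the first-order equation $\LD_{T}\alpha=0$, you make it explicit by passing to adapted coordinates $(t,x^{\mu},\varphi^{\sfa}_{\mu(k)})$ in which $T=\partial_{t}$), and both derive the stated consequence by pulling $\alpha$ back to $\Sigma$ via $\iota^{*}$, using $\iota^{*}\ddh=\ddh^{\Sigma}\iota^{*}$, and then lifting back with $\varepsilon_{\Sigma}^{-1}$. Your version is more explicit about why $\iota^{*}$ is a chain map for $\ddh$ (not just for $\dd$), a point the paper passes over; conversely the paper's coordinate-free ``initial data'' phrasing is slightly more robust if one did not want to invoke the adapted jet coordinates of Lemma~\bref{lem:pullback_bundle_I}.
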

\begin{proof}
    The map $\varepsilon_{\Sigma}$ is a morphism of bi-complexes. Indeed, $\varepsilon_{\Sigma}$ commutes with $\dd$ as a pullback, and with $\ddv$ because fibers are unaffected upon restriction to a submanifold $\Sigma \hookrightarrow \hM$.
    \vskip 4 pt
    Let us show that the kernel of $\varepsilon_{\Sigma}$ is trivial. Indeed, the restriction of $\alpha\in\bigwedge_{T}\I$ to the submanifold $\Sigma\hookrightarrow \hM$ serves as the initial data for the condition $\LD_{T}\alpha = 0$ (the first-order PDE), which (together with the condition $i_{T}\alpha = 0$) allows one to reconstruct $\alpha$. If the initial data is zero, so is the resulting form $\alpha$. The same argument allows one to conclude that $\varepsilon_{\Delta}$ is onto.
    \vskip 4 pt
    For any form $\alpha\in\bigwedge\I$ denote $\alpha_{\Sigma}$ its pullback to $\left.\I\right|_{\Sigma}$. Then for $\ddh\alpha\in \bigwedge_{T}\I$ one has $\varepsilon_{\Sigma} \ddh \alpha = \ddh \alpha_{\Sigma}$. Using $\alpha_{\Sigma}$ as the initial data on $\Sigma$, one solves the conditions in \eqref{eq:forms_I_equiv}, and thus reconstructs $\alpha^{\prime}\in\bigwedge_{T}\I$ such that $\alpha_{\Sigma} = \varepsilon_{\Sigma}\alpha^{\prime}$. All in all one has $\varepsilon_{\Sigma}\ddh \alpha = \varepsilon_{\Sigma}\ddh \alpha^{\prime}$, and the fact that $\varepsilon_{\Sigma}$ is an isomorphism finishes the proof.
\end{proof}
\vskip 4 pt

A convenient way to work with elements of $\bigwedge \I$ consists in considering equivalence classes of local forms on $J\E$, such that two forms in the same class are mapped to the same form upon pullback to $\I$. This identification is implied in the sequel, so for any $\alpha\in J\E$ one denotes the corresponding class by $[\alpha]\in \bigwedge\I$. In order to identify $T$-forms on $\I$ in terms of equivalence classes, define the sub-algebra of {\it ambient $T$-forms} $\bigwedge_{T} J\E \subset \bigwedge J\E$ as follows:
\begin{equation}\label{eq:difforms_subspace_strict}
   \tensor{\bigwedge\nolimits}{_T} J\E = \left\{\alpha\in\bigwedge J\E\;:\; i_{T}\alpha = 0 \quad \text{and} \quad\LD_{T-Z}\alpha = 0\right\}\,. 
\end{equation}
The two conditions in the above definition use different vector fields, while consistency is assured by the fact that $T$ and $Z$ commute because $Z$ is the prolongation of an evolutionary vector field, and $T$ is the horizontal lift of a vector field on the base $\hM$ (recall \eqref{eq:T_on_E}).
\vskip 4 pt
Although the space of $T$-forms is not preserved by $\dd$, there is a way to endow $\bigwedge_T J\E \subset \bigwedge J\E$ with the structure of a bi-graded differential algebra. Recall that $\vartheta$ denotes an Ehresmann connection one-form on the line bundle $\hM\twoheadrightarrow \Sigma$. With a slight abuse of notation, we also write $\vartheta$ for its pullback to $\bigwedge J\E$. Let us define
\begin{equation}\label{eq:differential_prime}
    \dd^{\prime} = \dd + \vartheta\wedge \LD_{Z}\,.
\end{equation}
Note that the above operator admits the following decompostion with respect to the bi-degree:
\begin{equation}\label{eq:differential_prime_bidegree}
    \dd^{\prime} = \ddh^{\prime} + \ddv^{\prime}\,,\quad \text{where}\quad \ddh^{\prime} = \ddh + \vartheta\wedge \LD_Z\quad\text{and}\quad \ddv^{\prime} = \ddv\,.
\end{equation}
\begin{lem}\label{lem:d_prime}
\begin{itemize}
    \item[1)] The following analog of the Cartan's magic formula takes place in $\bigwedge J\E$:
    \begin{equation}\label{eq:Cartan_prime}
        \LD_{T-Z} = i_T \dd^{\prime} + \dd^{\prime} i_T\,.
    \end{equation}
    \item[2)] The operators $\ddh^{\prime}$ and $\ddv$ verify Leibniz rule. Moreover, the operator $\dd^{\prime}$ defined in \eqref{eq:differential_prime} is nilpotent, and hence $\ddh^{\prime}\ddh^{\prime} = 0$, $\ddv\ddv = 0$ and $\ddh^{\prime}\ddv + \ddv\ddh^{\prime} = 0$, so both $\ddh^{\prime}$ and $\ddv$ are differentials on $\bigwedge J\E$. Upon pullback to $\I$, for any $\alpha\in\bigwedge J\E$ one has
    \begin{equation}
        \ddh[\alpha] = [\ddh^{\prime}\alpha]\quad\text{and}\quad \ddv[\alpha] = [\ddv\alpha]\,.
    \end{equation}
    \item[3)] $\dd^{\prime}$ has a well-defined action on $\bigwedge_{T} J\E$. 
\end{itemize}
\end{lem}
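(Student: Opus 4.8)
The plan is to establish the three parts in order, with part~(1) serving as the main tool for part~(3); the only genuinely delicate point will be a single vanishing statement in part~(2), everything else being an assembly of Cartan's identity, the closedness and basicity of $\vartheta$, and the commutation relations \eqref{eq:properties_T_Z}. For part~(1) I would split $\dd^{\prime}=\dd+\vartheta\wedge\LD_Z$ and compute $i_T(\,\cdot\,)+(\,\cdot\,)\,i_T$ termwise. The $\dd$-piece gives $\LD_T$ by the usual Cartan magic formula. For the $\vartheta\wedge\LD_Z$-piece, using that $i_T$ is an antiderivation, that $i_T\vartheta=\vartheta(T)=-1$ (the normalisation realised by $\vartheta=-\dd t$, $T=\partial_t$), and that $i_T$ commutes with $\LD_Z$ by \eqref{eq:properties_T_Z}, one finds $i_T(\vartheta\wedge\LD_Z\alpha)+\vartheta\wedge\LD_Z(i_T\alpha)=-\LD_Z\alpha-\vartheta\wedge\LD_Z i_T\alpha+\vartheta\wedge\LD_Z i_T\alpha=-\LD_Z\alpha$; summing the two contributions yields $\LD_{T-Z}$, which is \eqref{eq:Cartan_prime}.

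For part~(2), the Leibniz property follows by recognising $\ddv$ as the standard odd derivation of bidegree $(0,1)$, and $\vartheta\wedge\LD_Z$ as an odd derivation of bidegree $(1,0)$: since $Z$ is the prolongation of an evolutionary vector field it is vertical for $J\E\twoheadrightarrow\hM$ and preserves the contact ideal, so $\LD_Z$ is an even derivation of bidegree $(0,0)$, and left-multiplication by the fixed horizontal $1$-form $\vartheta$ makes it an odd derivation of bidegree $(1,0)$ (the graded sign controlled by $\vartheta\wedge\beta=(-)^{|\beta|}\beta\wedge\vartheta$); hence $\ddh^{\prime}=\ddh+\vartheta\wedge\LD_Z$ is an odd derivation. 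For nilpotency I would expand $(\dd^{\prime})^{2}=\dd^{2}+\{\dd,\vartheta\wedge\LD_Z\}+(\vartheta\wedge\LD_Z)^{2}$ and kill each term: $\dd^{2}=0$; $(\vartheta\wedge\LD_Z)^{2}=0$ because $\vartheta\wedge\vartheta=0$ and $\LD_Z\vartheta=i_Z\dd\vartheta+\dd\,i_Z\vartheta=0$ ($\vartheta$ is closed and basic, $Z$ is vertical); and $\{\dd,\vartheta\wedge\LD_Z\}=\dd\vartheta\wedge\LD_Z(\,\cdot\,)-\vartheta\wedge[\dd,\LD_Z](\,\cdot\,)=0$ by $\dd\vartheta=0$ and $[\dd,\LD_Z]=0$. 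Decomposing $(\dd^{\prime})^{2}=0$ by bidegree then gives $(\ddh^{\prime})^{2}=\ddv^{2}=0$ and $\ddh^{\prime}\ddv+\ddv\ddh^{\prime}=0$. For the pullback identities, since $\mathfrak{i}_{\Delta}^{*}$ commutes with $\dd$ and respects the horizontal/vertical bidegree (the horizontal distribution of $J\E$ being tangent to $\I$, contact forms restrict to contact forms), one reads off $\ddh[\alpha]=[\ddh\alpha]$ and $\ddv[\alpha]=[\ddv\alpha]$ from $\dd[\alpha]=[\dd\alpha]$; it then remains to show $[\vartheta\wedge\LD_Z\alpha]=0$. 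This is the delicate step: the components $z^{\sfa}_{A(q)}$ of $Z$ are exactly the functions \eqref{eq:jet_sub-bundle} cutting out $\I$, so $Z$ vanishes pointwise along $\I$; hence $\mathfrak{i}_{\Delta}^{*}(i_Z\beta)=0$ for every form $\beta$, and by Cartan $\mathfrak{i}_{\Delta}^{*}\LD_Z\alpha=\mathfrak{i}_{\Delta}^{*}(i_Z\dd\alpha)+\dd\,\mathfrak{i}_{\Delta}^{*}(i_Z\alpha)=0$, so that $[\vartheta\wedge\LD_Z\alpha]=[\vartheta]\wedge[\LD_Z\alpha]=0$ and therefore $\ddh[\alpha]=[\ddh\alpha]=[\ddh^{\prime}\alpha]$.

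Finally, for part~(3) I would feed the two conditions defining $\bigwedge_{T}J\E$, namely $i_T\alpha=0$ and $\LD_{T-Z}\alpha=0$, back into part~(1): one gets $i_T\dd^{\prime}\alpha=\LD_{T-Z}\alpha-\dd^{\prime}i_T\alpha=0$, and then $\LD_{T-Z}\dd^{\prime}\alpha=(i_T\dd^{\prime}+\dd^{\prime}i_T)\dd^{\prime}\alpha=i_T(\dd^{\prime})^{2}\alpha+\dd^{\prime}(i_T\dd^{\prime}\alpha)=0$ using $(\dd^{\prime})^{2}=0$ from part~(2). Hence $\dd^{\prime}\alpha\in\bigwedge_{T}J\E$, so that $\dd^{\prime}$ (equivalently the pair $\ddh^{\prime},\ddv$) endows $\bigwedge_{T}J\E$ with the asserted bi-graded differential-algebra structure.
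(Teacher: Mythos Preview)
Your proof is correct and follows essentially the same route as the paper's: part~(1) via the split $\dd'=\dd+\vartheta\wedge\LD_Z$ together with $i_T\vartheta=-1$ and $[i_T,\LD_Z]=0$; part~(2) via $\dd\vartheta=0$, $\LD_Z\vartheta=0$, and the vanishing of $\LD_Z\alpha$ on $\I$; and part~(3) by feeding the defining conditions of $\bigwedge_T J\E$ back into \eqref{eq:Cartan_prime} together with $(\dd')^2=0$. Your treatment is somewhat more explicit than the paper's terse proof---in particular you spell out the Cartan argument showing $\mathfrak{i}_\Delta^*\LD_Z\alpha=0$ from the pointwise vanishing of $Z$ along $\I$, and the full expansion $(\dd')^2=\dd^2+\{\dd,\vartheta\wedge\LD_Z\}+(\vartheta\wedge\LD_Z)^2$---but the underlying ideas and key inputs are the same.
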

\begin{proof}
     The assertion {\it (1)} is verified directly: one starts from $\LD_{T-Z} = \dd i_{T-Z} + i_{T-Z} \dd$ and recalls that $i_{T}\vartheta = -1$. For {\it (2)} recall that $\dd \vartheta = 0$ and $\LD_{Z}\vartheta = 0$, and note that $\ddh^{\prime}$ and $\ddv$ are the bidegree-$(1,0)$ and -$(0,1)$ components of $\dd^\prime$. Also note that $\LD_{Z}\alpha$ vanishes on $\I$ for any $\alpha\in \bigwedge J\E$, so $[\vartheta\wedge\LD_{Z}\alpha] = 0$. To prove {\it (3)}, note the following consequences of \eqref{eq:Cartan_prime} which imply that $\dd^{\prime}$ acts on $\bigwedge_{T} J\E$: $i_T \dd^{\prime} = -\dd^{\prime} i_T + \LD_{T-Z}$ and $\dd^{\prime} \LD_{T-Z} = \LD_{T-Z} \dd^{\prime}$.
\end{proof}
As a consequence of the points {\it (2)}, {\it (3)} of the above Lemma, $\bigwedge_{T} J\E$ is a bi-complex:
    \begin{equation}\label{eq:bicomplex_T}
    \begin{array}{ccccccccc}
        \bigwedge^{(0,0)}_T J\E & \xrightarrow{\ddh^{\prime}} & \bigwedge^{(1,0)}_T J\E & \xrightarrow{\ddh^{\prime}} & \dots & \xrightarrow{\ddh^{\prime}} & \bigwedge^{(n,0)}_{T} J\E& \xrightarrow{\ddh^{\prime}} & 0\\ 
        \downarrow \scriptstyle{\ddv} & \hfill & \downarrow \scriptstyle{\ddv} & \hfill & \hfill & \hfill & \downarrow \scriptstyle{\ddv}& \hfill & \hfill \\
        \bigwedge^{(0,1)}_T J\E & \xrightarrow{\ddh^{\prime}} & \bigwedge^{(1,1)}_T J\E & \xrightarrow{\ddh^{\prime}} & \dots & \xrightarrow{\ddh^{\prime}} & \bigwedge^{(n,1)}_T J\E& \xrightarrow{\ddh^{\prime}} & 0\\
        \downarrow \scriptstyle{\ddv} & \hfill & \downarrow \scriptstyle{\ddv} & \hfill & \hfill & \hfill & \downarrow \scriptstyle{\ddv}& \hfill & \hfill\\
        \dots & \hfill & \dots & \hfill & \hfill & \hfill & \dots & \hfill & \hfill\\
    \end{array}
    \end{equation}
Note that the bi-complex of ambient T-forms is a deformation of the subalgebra of ambient forms which are in the common kernel of $i_{T}$ and $\LD_{T}$. Indeed, for any $\alpha \in \bigwedge J\E$ such that $i_{T}\alpha = 0$ and $\LD_{T}\alpha = 0$, one constructs $e^{t\LD_{Z}}\alpha \in\bigwedge_{T}J\E$. In particular, $\dd^{\prime} = e^{t\LD_{Z}}\dd e^{-t\LD_{Z}}$.
\begin{lem}\label{lem:representatives_for_equivariant_forms}
    One has $[\alpha]\in \bigwedge_{T}\I$ iff there is a representative $\tilde{\alpha}\in \bigwedge_{T}J\E$ in $[\alpha]$. Moreover, for $\ddh[\alpha]\in \bigwedge_{T}\I$ there exists $\tilde{\alpha}\in \bigwedge_{T}J\E$ such that
    \begin{equation}\label{eq:equivariant_exact}
        \ddh[\alpha] = [\ddh^{\prime}\tilde{\alpha}]\,.
    \end{equation}
\end{lem}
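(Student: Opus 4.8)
The plan is to play the two bicomplexes off one another through the isomorphism $\varepsilon_{\Sigma}$ of Lemma~\bref{lem:isomorphism_epsilon}, using the concrete description of $\bigwedge_{T}J\E$ in adapted coordinates. I will first dispatch the easy implication of the ``iff''. If $[\alpha]$ admits a representative $\tilde{\alpha}\in\bigwedge_{T}J\E$, so that $i_{T}\tilde{\alpha}=0$ and $\LD_{T-Z}\tilde{\alpha}=0$, then since $T=T^{A}D_{A}$ is the horizontal lift of a vector field on $\hM$ it is tangent to the infinitely prolonged PDE $\I$, and hence pullback to $\I$ commutes with both $i_{T}$ and $\LD_{T}$. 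This gives $i_{T}[\tilde{\alpha}]=[i_{T}\tilde{\alpha}]=0$ and $\LD_{T}[\tilde{\alpha}]=[\LD_{T}\tilde{\alpha}]=[\LD_{T-Z}\tilde{\alpha}]+[\LD_{Z}\tilde{\alpha}]=0$, the last term vanishing because $Z$ is identically zero on $\I$, so $\LD_{Z}$ of any local form pulls back to $0$ on $\I$ (as was already used in the proof of Lemma~\bref{lem:d_prime}); therefore $[\alpha]=[\tilde{\alpha}]\in\bigwedge_{T}\I$.

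For the converse, given $[\alpha]\in\bigwedge_{T}\I$ I would set $\beta:=\varepsilon_{\Sigma}([\alpha])\in\bigwedge JE$ and, working in adapted coordinates $(t,x^{\mu})$ with $T=\partial/\partial t$, lift $\beta$ back to $J\E$ in two steps. First, pull $\beta$ back along the projection $J\E\to JE$ that drops $t$ and every jet coordinate carrying a $t$-index, obtaining a $t$-independent form $\hat{\beta}$ built only from $\dd x^{\mu}$, the spatial jet coordinates $u^{\sfa}_{\mu(k)}$ and their contact forms $\ddv u^{\sfa}_{\mu(k)}$; this already has $i_{T}\hat{\beta}=0$, since $i_{T}$ annihilates every $\dd x^{\mu}$ and every contact form. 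As $\hat\beta$ is not $\LD_{T-Z}$-invariant, the second step is to twist it by the $t$-dependent automorphism $e^{-t\LD_{W}}$ of $\bigwedge J\E$ generated by the $\Delta$-weight vector field $W:=\mathrm{pr}\bigl(-\Delta^{\sfa}u^{\sfa}\partial_{\sfa}\bigr)=-\sum_{q\geqslant 0}\Delta^{\sfa}u^{\sfa}_{A(q)}\,\partial_{\sfa}^{A(q)}$ --- a variant of the deformation $e^{t\LD_{Z}}$ of the remark after \eqref{eq:bicomplex_T}; in these coordinates $T-Z=\partial_{t}+W$. Setting $\tilde{\alpha}:=e^{-t\LD_{W}}\hat{\beta}$, I would check $\tilde{\alpha}\in\bigwedge_{T}J\E$: because $W$ is the prolongation of an evolutionary vector field one has $[T,W]=[D_{t},W]=0$ and $[Z,W]=0$, so $i_{T}$ commutes with $e^{-t\LD_{W}}$ and $i_{T}\tilde{\alpha}=e^{-t\LD_{W}}i_{T}\hat{\beta}=0$, while keeping track of the explicit $t$ one finds $\LD_{T-Z}\tilde{\alpha}=\LD_{\partial_{t}}\tilde{\alpha}+\LD_{W}\tilde{\alpha}=-\LD_{W}\tilde{\alpha}+e^{-t\LD_{W}}\LD_{W}\hat{\beta}=0$, using $\LD_{\partial_{t}}\hat{\beta}=0$ and $\LD_{T-Z}\hat\beta=\LD_{W}\hat\beta$. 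Finally $e^{-t\LD_{W}}$ is the identity at $t=0$ and $\hat{\beta}$ restricts to $\beta$ on $\left.\I\right|_{\Sigma}\cong JE$ (the forgetful pullback being undone by the isomorphism of Lemma~\bref{lem:pullback_bundle_I}), so $\varepsilon_{\Sigma}([\tilde{\alpha}])=\beta=\varepsilon_{\Sigma}([\alpha])$; since $[\tilde{\alpha}]\in\bigwedge_{T}\I$ by the easy implication and $\varepsilon_{\Sigma}$ is injective, $[\tilde{\alpha}]=[\alpha]$, the sought representative.

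Granting this, the displayed equality \eqref{eq:equivariant_exact} follows at once: assuming $\ddh[\alpha]\in\bigwedge_{T}\I$, the last sentence of Lemma~\bref{lem:isomorphism_epsilon} furnishes $\alpha'\in\bigwedge_{T}\I$ with $\ddh[\alpha]=\ddh\alpha'$; the converse just established gives $\alpha'=[\tilde{\alpha}]$ for some $\tilde{\alpha}\in\bigwedge_{T}J\E$, and Lemma~\bref{lem:d_prime}\,(2) then yields $\ddh[\alpha]=\ddh[\tilde{\alpha}]=[\ddh^{\prime}\tilde{\alpha}]$. The one genuinely delicate point is the lift in the second paragraph: membership in $\bigwedge_{T}J\E$ is controlled by the \emph{deformed} Lie derivative $\LD_{T-Z}$ and not by $\LD_{T}$ --- invariance under the undeformed $\LD_{T}$ would force $\hat\beta$ to be a pullback from $\Sigma$ and hence to carry no jets --- so the naive $t$-independent lift must be twisted by $e^{-t\LD_{W}}$ before it lands inside $\bigwedge_{T}J\E$. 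Once this is understood, the commutators $[T,W]=[Z,W]=0$, the identity $\LD_{T-Z}\hat\beta=\LD_{W}\hat\beta$, and the compatibility of the twist with restriction to $\Sigma$ are all routine computations that I would not carry out in detail here.
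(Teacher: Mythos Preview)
Your proof is correct and follows the same three-step strategy as the paper: the easy implication via $[\LD_{Z}\alpha]=0$ on $\I$, the converse by lifting, and the exact statement via Lemma~\bref{lem:isomorphism_epsilon} together with Lemma~\bref{lem:d_prime}(2). The only difference is that where the paper simply says one lifts a class $[\alpha]\in\bigwedge_{T}\I$ to $\bigwedge_{T}J\E$ ``by solving the conditions in \eqref{eq:difforms_subspace_strict}'' (i.e.\ integrating the first-order ODE $\LD_{T-Z}\tilde\alpha=0$ from initial data on $\Sigma$), you write down the solution operator explicitly as the twist $e^{-t\LD_{W}}$ in adapted coordinates --- a concrete realization of the same lift, and a nice complement to the remark after \eqref{eq:bicomplex_T}.

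One small point: your stated reason for $[Z,W]=0$ (``because $W$ is the prolongation of an evolutionary vector field'') is not a valid justification, since $Z$ is also such a prolongation and two of these need not commute. The identity $[Z,W]=0$ \emph{is} true here --- one checks directly that the evolutionary bracket of $z^{\sfa}\partial_{\sfa}$ and $-\Delta^{\sfb}u^{\sfb}\partial_{\sfb}$ vanishes --- but in fact you never use it: your argument for $i_{T}\tilde\alpha=0$ only needs $[\LD_{W},i_{T}]=i_{[W,T]}=0$, and your computation of $\LD_{T-Z}\tilde\alpha$ only uses $T-Z=\partial_{t}+W$ and $\LD_{\partial_{t}}\hat\beta=0$.
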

\begin{proof}
     It is easy to check that $i_{T}$ and $\LD_{T}$ are well defined on equivalence classes. Also note that $[\LD_{Z}\alpha] = 0$ for any $\alpha \in \bigwedge J\E$. So if $\alpha\in \bigwedge_{T} J\E$, then one has $\LD_{T} [\alpha] = [\LD_{T-Z}\alpha] = 0$ and $i_{T}[\alpha] = [i_{T}\alpha] = 0$. Other way around, given a form $[\alpha]\in \bigwedge_{T}\I$, one can lift it to $\bigwedge_{T}J\E$ by solving the conditions in \eqref{eq:difforms_subspace_strict}.
     \vskip 4 pt
     To prove the rest, for $\ddh[\alpha] \in\bigwedge_{T}\I$ one can find $[\alpha^{\prime}]\in \bigwedge_{T}\I$ such that $\ddh[\alpha] = \ddh[\alpha^{\prime}]$ (which is possible by Lemma \bref{lem:isomorphism_epsilon}). Then it is already proven that there exists $\tilde{\alpha}\in\bigwedge_{T} J\E$ such that $[\alpha^{\prime}] = [\tilde{\alpha}]$, so \eqref{eq:equivariant_exact} follows by assertion {\it (2)} of Lemma \bref{lem:d_prime}.
\end{proof}

\paragraph{Lagrangian $T$-forms.} By analogy with the usual bicomplex of differential forms on a jet bundle, we refer to the elements of the upmost right component $\bigwedge^{(n,0)}_{T}J\E$ of \eqref{eq:bicomplex_T} as {\it ambient Lagrangians}. Any ambient Lagrangian $\lambda$ can be written in terms of a Lagrangian density $L$ as:
\begin{equation}
    \lambda = L\,\vol_T\,,\;\;\qquad 
    L\in \tensor*{\bigwedge\nolimits}{^{(0,0)}} J\E\,,
\end{equation}
where $\vol_{T} = i_{T}\vol$ is lifted to $\bigwedge J\E$. The second condition in \eqref{eq:difforms_subspace_strict}, $\LD_{T-Z}\lambda = 0$, leads to the following constraint:
\begin{equation}\label{eq:Lagrangian_form}
    \LD_{T-Z}L + \mathrm{div}\,T\,L = 0\,,
\end{equation}
where divergence is defined for any total vector field $V = V^{A} D_{A}$ as follows: 
\begin{equation}\label{eq:divergence}
    \ddh i_{V}\vol = \mathrm{div}\, V\, \vol\quad\Rightarrow\quad \mathrm{div}\,V = \frac{1}{\rho}\,D_{A}(\rho\,V^{A})\,,
\end{equation}
where $\rho$ is the volume density in \eqref{volform}.
\vskip 4 pt

We call $\ddh^{\prime}$-exact ambient Lagrangians {\it trivial}. In order to describe trivial ambient Lagrangians, {\it i.e.} ambient Lagrangians of the form  $\lambda = \ddh^{\prime}\beta$, with $\beta \in \bigwedge^{(n-1,0)}_{T}J\E$, note that any ambient  $(n-1,0)$-$T$-form can be written as
\begin{equation}
    \beta = i_{H} \vol_{T} = H^{A}\,i_{D_{A}}\vol_{T}\,,
\end{equation}
where $H = H^{A} D_{A}$ is a total vector field on $J\E$. As we demonstrate in Appendix \bref{lem:Lagrangian_T_exact}, this implies:
\begin{equation}\label{eq:Lagrangian_h-exact}
    \ddh^{\prime}\beta = \mathrm{div}\, H_{\perp}\,\vol_T\,,\quad \text{where}\quad H_{\perp} = H + i_{H}\vartheta\,T\,.
\end{equation}
As a result, the Lagrangian density $L$ of a trivial ambient Lagrangian $\lambda$ has the form of an ambient divergence. Note that $\beta$ is unaffected if one varies $H$ by a total vector field proportional to $T$, so $\beta = i_{H_{\perp}}\vol_{T}$.

\subsection{Ambient description of the variational bicomplex of $JE$}

The jet-bundle $J\E$ is equipped with the Euler-Lagrange derivative
\begin{equation}\label{eq:EL_ambient}
    \delta : \bigwedge{}^{(n+1,0)} J\E \to \bigwedge{}^{(n+1,1)} J\E\,,
\end{equation}
such that for any $\alpha \in\bigwedge^{(n+1,0)} J\E$ one has $\ddv\alpha = \delta\alpha + \ddh \sigma$ for  some $(n,1)$-form $\sigma$, and $\delta(\ddh \beta) = 0$ for all $\beta \in \bigwedge^{(n,1)}J\E$. 
In the same way, the jet-bundle $JE$ is also equipped with the Euler-Lagrange derivative, for which we keep the same notation $\delta$.
\vskip 4 pt

Thanks to the isomorphism of Lemma \bref{lem:isomorphism_epsilon}, the map
$\delta : \bigwedge{}^{(n+1,0)} JE \to \bigwedge{}^{(n+1,1)} JE$ defines a map $ \bigwedge{}_T^{(n+1,0)} \I \to \bigwedge{}_T^{(n+1,1)} \I$. A remarkable fact is that this map can be expressed explicitly in terms of the canonical Euler-Lagrange derivative \eqref{eq:EL_ambient} in the ambient space, giving a lift of the variational calculus on $JE$ to the ambient space. First, note that for any $\alpha \in \bigwedge^{(n,j)}_{T} J\E$ there is a unique preimage with respect to the map $i_{T}$ which we denote $i_{T}^{-1}\alpha = \bigwedge^{(n+1,j)} J\E$. By representing $\alpha = H\wedge \vol_{T}$ (with $H\in \bigwedge^{(0,j)} J\E$), one has $i_{T}^{-1}\alpha = (-)^{j} H\wedge \vol$.
\vskip 4 pt

Define the {\it ambient Euler-Lagrange derivative}
\begin{equation}\label{eq:EL_derivative}
    \widehat{\delta}\,:\, \tensor*{\bigwedge\nolimits}{^{(n,0)}_{T}}J\E \to  \tensor*{\bigwedge\nolimits}{^{(n,1)}} J\E\,,\quad \widehat{\delta}\;:\;\lambda \;\mapsto\; i_{T} \delta (i^{-1}_{T}\lambda)\,.
\end{equation}
By representing $\lambda = L\,\vol_T$, one has the following explicit formula in local coordinates:
\begin{equation}\label{eq:EL_derivative_explicit}
    \widehat{\delta} \left(L\,\vol_T\right) = \ddv u^{\sfa}\wedge \sum_{q \geqslant 0} (-)^q\,\frac{1}{\rho}D_{A(q)}\left(\rho\,\partial_{\sfa}^{A(q)} L\right)\,\vol_T\,,
\end{equation}
where we have introduced the shorthand notation $D_{A(q)} = D_{A_1}\dots D_{A_q}$. The above expression coincides with the ordinary formula for the Euler-Lagrange derivative except for the volume-form part where instead of the ambient top-form $\vol$ we put $\vol_T$.
\vskip 4 pt

 For any $[\lambda]\in \bigwedge_{T}\I$, let  $\lambda\in \bigwedge_{T}J\E$ be its ambient $T$-form representative (which exists thanks to Lemma \bref{lem:representatives_for_equivariant_forms}).
 With the map \eqref{eq:EL_derivative} at hand, 
 we can now define 
\begin{equation}\label{eq:EL_on_I}
    \widehat{\delta}_{\I}\,:\, \tensor*{\bigwedge\nolimits}{^{(n,0)}_{T}}\I \to  \tensor*{\bigwedge\nolimits}{^{(n,1)}} \I\,,\quad \widehat{\delta}_{\I}\;:\;[\lambda]\; \mapsto\; [\widehat{\delta}\lambda]\,.
\end{equation}
We have the following Lemma, whose proof is relegated to 
Appendix \bref{sec:jet_proofs}:
\begin{lem}\label{lem:EL_maps}
    For any $\lambda \in \bigwedge^{(n,0)}_{T}J\E$, the following assertions hold:
    \begin{itemize}
        \item[{\it 1)}] $\widehat{\delta}\lambda$ belongs to $\bigwedge^{(n,1)}_{T} J\E$, hence
            \begin{equation}\label{eq:EL_T-in-form}
                \widehat{\delta} : \tensor*{\bigwedge\nolimits}{^{(n,0)}_{T}} J\E \rightarrow \tensor*{\bigwedge\nolimits}{^{(n,1)}_{T}} J\E\,.
            \end{equation}
        Moreover, $\widehat{\delta}(\ddh^{\prime} \beta) = 0$ for any $\beta \in \bigwedge^{(n-1,0)}_{T} J\E$.        \item[{\it 2)}] If $\lambda$ vanishes on $\I$, so does $\widehat{\delta}\lambda$. Therefore, \eqref{eq:EL_on_I} gives a well-defined map
            \begin{equation}\label{eq:EL_T_I-in-form}
                \widehat{\delta}_{\I} : \tensor*{\bigwedge\nolimits}{^{(n,0)}_{T}} \I \rightarrow \tensor*{\bigwedge\nolimits}{^{(n,1)}_{T}} \I\,.
            \end{equation}
    \end{itemize}
\end{lem}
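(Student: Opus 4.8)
The plan is to prove the two items separately: item 1 from the fact that $\widehat{\delta}=i_T\circ\delta\circ i^{-1}_T$ is assembled from operators that commute with $\LD_{T-Z}$ and annihilate $\ddh$-exact Lagrangians, and item 2 from the ambient first-variation formula together with the $T$-equivariance that is built into $\bigwedge^{(n,0)}_T J\E$ via the condition $\LD_{T-Z}\lambda=0$.

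\emph{Item 1.} Horizontality $i_T\widehat{\delta}\lambda=0$ is immediate since $i_T^2=0$. For the equivariance $\LD_{T-Z}\widehat{\delta}\lambda=0$ I would check that $\LD_{T-Z}$ commutes with each of the three factors of $\widehat{\delta}$. It commutes with $i_T$ by Cartan's identity $\LD_{T-Z}i_T-i_T\LD_{T-Z}=i_{[T-Z,T]}=i_{[T,Z]}=0$, the vanishing of $[T,Z]$ being the compatibility recorded right after \eqref{eq:difforms_subspace_strict}; hence it also commutes with $i_T^{-1}$ on $\bigwedge^{(n,1)}_T J\E$. It commutes with the canonical Euler--Lagrange operator $\delta$ because $T-Z=\pr\bar T$ is the prolongation of the genuine vector field $\bar T$ on $\E$ of \eqref{eq:T_on_E} (whose flow consists of vector-bundle automorphisms of $\E$ covering the flow of $T$), and $\delta$, acting on densities, is natural under prolongations of such $\E$-automorphisms. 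Therefore $\LD_{T-Z}\widehat{\delta}\lambda=i_T\,\delta\,i_T^{-1}\LD_{T-Z}\lambda=0$, so $\widehat{\delta}\lambda\in\bigwedge^{(n,1)}_T J\E$. For the vanishing on trivial Lagrangians, write $\beta=i_{H_{\perp}}\vol_T$ with $i_{H_{\perp}}\vartheta=0$; by \eqref{eq:Lagrangian_h-exact} one has $\ddh^{\prime}\beta=(\mathrm{div}\,H_{\perp})\,\vol_T$, hence $i_T^{-1}(\ddh^{\prime}\beta)=(\mathrm{div}\,H_{\perp})\,\vol=\ddh(i_{H_{\perp}}\vol)$ by the definition \eqref{eq:divergence} of the divergence. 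Since $\delta$ annihilates $\ddh$-exact $(n+1,0)$-forms, $\widehat{\delta}(\ddh^{\prime}\beta)=i_T\,\delta\,\ddh(i_{H_{\perp}}\vol)=0$.

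\emph{Item 2.} Let $\lambda=L\,\vol_T\in\bigwedge^{(n,0)}_T J\E$ vanish on $\I$; the goal is that $\widehat{\delta}\lambda$ vanishes on $\I$ too. I would work in adapted coordinates $(t,x^{\mu})$, so $T=D_t$ and $z^{\sfa}_{A(q)}=D_{A(q)}z^{\sfa}$ with $z^{\sfa}=u^{\sfa}_t+\Delta^{\sfa}u^{\sfa}$. Vanishing of $\lambda$ on $\I$ means $L=\sum_{\sfa,q}g^{\sfa,A(q)}\,z^{\sfa}_{A(q)}$ for some functions $g^{\sfa,A(q)}$ on $J\E$. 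Expanding the explicit formula \eqref{eq:EL_derivative_explicit} and integrating by parts so as to strip every total derivative off the factors $z^{\sfa}_{A(q)}$, all terms in which a total derivative still acts on some $z^{\sfa}_{A(q)}$, or in which a $\partial^{A(q)}_{\sfa}$ acts on a $g$, land in the differential ideal of $\I$ and hence vanish on $\I$; using $\partial^{A(q)}_{\sfa}z^{\sfb}=\delta^{\sfb}_{\sfa}$ (times a constant for $q=0$, times $T^{A_1}$ for $q=1$, and $0$ for $q\ge2$), what survives is $\widehat{\delta}\lambda\big|_{\I}=\ddv u^{\sfa}\wedge\big((\Delta^{\sfa}-D_t)\hat{g}^{\sfa}\big)\big|_{\I}\,\vol_T$ for a suitable total-derivative-reduced combination $\hat{g}^{\sfa}$ of the $g^{\sfa,A(q)}$ (with the density $\rho$ of \eqref{eq:EL_derivative_explicit} absorbed along the way). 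Finally I would feed in the $T$-form condition $\LD_{T-Z}\lambda=0$: since $(T-Z)\,z^{\sfa}_{A(q)}=-\Delta^{\sfa}\,z^{\sfa}_{A(q)}$ and $[T-Z,D_A]=0$ in these coordinates, $\LD_{T-Z}\lambda=0$ becomes a set of relations on the $g^{\sfa,A(q)}$ and $\rho$ whose restriction to $\I$ says exactly that $(\Delta^{\sfa}-D_t)\hat{g}^{\sfa}$ again lies in the differential ideal of $\I$; hence $\widehat{\delta}\lambda$ vanishes on $\I$, and \eqref{eq:EL_on_I} is well defined.

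The hard part is item 2. The subtle point --- and the reason the $T$-form hypothesis cannot be dropped --- is that the Euler--Lagrange derivative of a density vanishing on $\I$ need not itself vanish on $\I$: already for $n=1$ and one field, $\lambda=\hat{g}(t)\,z\,\vol_T$ with $z=u_t+\Delta u$ gives $\widehat{\delta}\lambda\propto(\Delta\hat{g}-\hat{g}^{\prime})\,\vol_T$, which vanishes on $\I$ only once $\LD_{T-Z}\lambda=0$ forces $\hat{g}^{\prime}=\Delta\hat{g}$. So the real content is the last step, namely that $T$-equivariance pushes the reduced coefficients $(\Delta^{\sfa}-D_t)\hat{g}^{\sfa}$ back into the differential ideal; the fiddly bookkeeping there is the interplay between the integrations by parts, the volume density $\rho$ in \eqref{eq:EL_derivative_explicit}, and the term $\mathrm{div}\,T\cdot L$ in the $T$-form constraint \eqref{eq:Lagrangian_form}. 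An equivalent but more invariant route would instead apply $i_T$ to the ambient first-variation formula $\ddv(i_T^{-1}\lambda)=\delta(i_T^{-1}\lambda)+\ddh\sigma$, show that the homotopy $\sigma$ may be chosen in $\bigwedge^{(n-1,1)}_T J\E$, conclude that $\widehat{\delta}\lambda$ is $\ddh$-exact on $\I$, and then invoke the rigidity of source forms (an $\ddh$-exact source form vanishes) --- but this demands the additional input that $i_T$ carries source forms on $J\E$ to source forms on $JE$.
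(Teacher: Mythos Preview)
Your argument for item~1 is essentially identical to the paper's: horizontality is immediate, equivariance follows because $T-Z=\pr\bar T$ is the prolongation of a projectable vector field and $\delta$ commutes with such prolongations, and the vanishing on $\ddh'$-exact forms follows from \eqref{eq:Lagrangian_h-exact} together with $i_T^{-1}(\mathrm{div}\,H_\perp\,\vol_T)=\ddh(i_{H_\perp}\vol)$. For item~2 your strategy is also the same as the paper's --- a direct computation of $\widehat{\delta}\lambda$ for $\lambda$ in the differential ideal, using the constraint $\LD_{T-Z}\lambda=0$ to kill the residual term --- but the paper carries it out in \emph{homogeneous} coordinates \eqref{eq:homogeneous_coordinates} rather than adapted ones. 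There the key identities $z^{\sfa}_{A(q)}=T^Bu^{\sfa}_{BA(q)}+(\Delta^{\sfa}+qw)u^{\sfa}_{A(q)}$ and $[D_{A(q)},T]=qw\,D_{A(q)}$ let one write the $T$-form condition as an explicit statement that a certain combination $F^{A(q)}_{\sfa}$ lies in the ideal, after which $\widehat{\delta}\lambda$ is expanded term by term and each surviving piece is seen to be $D_{A(q)}$ applied to something in the ideal (hence still in the ideal). This avoids the ``fiddly bookkeeping'' you flag: the density $\rho$ and the $\mathrm{div}\,T$ contribution are absorbed cleanly into $F^{A(q)}_{\sfa}$.

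Your adapted-coordinate route is not wrong, but the sketch as written omits the general case: you only record $\partial^{A(q)}_{\sfa}z^{\sfb}$ for the undifferentiated constraint $z^{\sfb}$, whereas $L$ involves all $z^{\sfb}_{B(p)}$, so one needs $\partial^{A(q)}_{\sfa}z^{\sfb}_{B(p)}$ for all $p$. Also, your toy example drops the $\mathrm{div}\,T$ term --- in the notation of \eqref{eq:EL_derivative_explicit} the surviving piece is $(\Delta-\rho^{-1}D_t\rho\,\cdot)\hat g$ rather than $(\Delta-D_t)\hat g$, which is exactly what the constraint \eqref{eq:Lagrangian_form} cancels. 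These are bookkeeping issues rather than conceptual gaps; the paper's choice of homogeneous coordinates is what makes the bookkeeping tractable in general.
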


Finally, we arrive at the main theorem, which relates the map \eqref{eq:EL_derivative} to the Euler-Lagrange derivative on $JE$.
\begin{thm}\label{thm:EL_derivative}
    One has the following commutative diagram
    \begin{equation}\label{eq:EL_diagram}
    \def\arraystretch{1.5}
    \begin{array}{ccccc}
            \tensor*{\bigwedge\nolimits}{^{(n,0)}_{T}} J\E & \xrightarrow{\mathfrak{i}^{*}_{\Delta}} & \tensor*{\bigwedge\nolimits}{^{(n,0)}_{T}} \I & \xrightarrow{\varepsilon_{\Sigma}} & \bigwedge^{(n,0)} J E \\
            \displaystyle\downarrow\scriptstyle{\widehat{\delta}} & \hfill & \displaystyle\downarrow\scriptstyle{\widehat{\delta}_{\I}} & \hfill & \displaystyle\downarrow\scriptstyle{\delta} \\
            \tensor*{\bigwedge\nolimits}{^{(n,1)}_{T}} J\E & \xrightarrow{\mathfrak{i}^{*}_{\Delta}} & \tensor*{\bigwedge\nolimits}{^{(n,1)}_{T}} \I & \xrightarrow{\varepsilon_{\Sigma}} & \bigwedge^{(n,1)} J E 
    \end{array}
    \end{equation}
    where $\varepsilon_{\Sigma}$ is the isomorphism described in Lemma \bref{lem:isomorphism_epsilon}.
\end{thm}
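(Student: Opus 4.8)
The plan is to prove that the two squares of \eqref{eq:EL_diagram} commute separately. The left square is essentially formal: $\mathfrak{i}^{*}_{\Delta}$ sends a $T$-form $\lambda\in\bigwedge^{(n,j)}_{T}J\E$ to its equivalence class $[\lambda]$, and $\widehat{\delta}_{\I}$ was \emph{defined} in \eqref{eq:EL_on_I} by $\widehat{\delta}_{\I}[\lambda]=[\widehat{\delta}\lambda]$; since $\widehat{\delta}\lambda\in\bigwedge^{(n,1)}_{T}J\E$ by Lemma \bref{lem:EL_maps}(1), one has $[\widehat{\delta}\lambda]=\mathfrak{i}^{*}_{\Delta}(\widehat{\delta}\lambda)$, whence $\widehat{\delta}_{\I}\circ\mathfrak{i}^{*}_{\Delta}=\mathfrak{i}^{*}_{\Delta}\circ\widehat{\delta}$. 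The only non-tautological input is that $[\widehat{\delta}\lambda]$ is independent of the chosen $T$-form representative of $[\lambda]$, which is precisely Lemma \bref{lem:EL_maps}(2).

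For the right square I would exploit that $\varepsilon_{\Sigma}$ is an isomorphism of bicomplexes (Lemma \bref{lem:isomorphism_epsilon}), so that $(\bigwedge_{T}\I,\ddh,\ddv)$ is, up to this isomorphism, the genuine variational bicomplex $\bigwedge JE$. On the latter the Euler--Lagrange derivative of a type-$(n,0)$ Lagrangian $\mu$ is characterised uniquely as the source form (built from the zeroth-order contact generators alone) that is congruent to $\ddv\mu$ modulo $\ddh$-exact forms. Hence it suffices to verify these two properties for $\widehat{\delta}_{\I}[\lambda]$ transported through $\varepsilon_{\Sigma}$. The source-form property is read off from \eqref{eq:EL_derivative_explicit}, which presents $\widehat{\delta}\lambda$ as a multiple of the zeroth-order contact form $\ddv u^{\sfa}$ wedged with $\vol_{T}$; under the identification $\left.\I\right|_{\Sigma}\cong JE$ of Lemma \bref{lem:pullback_bundle_I}, the radial profile \eqref{eq:radial_dependence} gives $u^{\sfa}|_{\Sigma}=\phi^{\sfa}$, hence $\ddv u^{\sfa}|_{\Sigma}=\ddv\phi^{\sfa}$, so $\varepsilon_{\Sigma}\widehat{\delta}_{\I}[\lambda]$ is indeed a source form on $JE$.

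The remaining assertion, that $\widehat{\delta}_{\I}[\lambda]\equiv\ddv[\lambda]$ modulo $\ddh$-exact forms in $\bigwedge_{T}\I$, carries the actual content, and I would obtain it by contracting with $T$ the ambient splitting $\ddv\mu=\delta\mu+\ddh\gamma$ (valid for some boundary term $\gamma\in\bigwedge^{(n,1)}J\E$, cf.\ the text following \eqref{eq:EL_ambient}), where $\mu:=i_{T}^{-1}\lambda\in\bigwedge^{(n+1,0)}J\E$. Applying $i_{T}$ and using $\widehat{\delta}\lambda=i_{T}\delta\mu$, $i_{T}(i_{T}^{-1}\lambda)=\lambda$, the anticommutator $i_{T}\ddv=-\ddv i_{T}$, and Cartan's formula $i_{T}\ddh=\LD_{T}-\ddh i_{T}$ from \eqref{eq:properties_T_Z}, one finds that $\ddv\lambda-\widehat{\delta}\lambda$ differs from $\ddh(i_{T}\gamma)$ only by a term built from $\LD_{T}\gamma$. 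Passing to classes over $\I$, using $\ddv[\alpha]=[\ddv\alpha]$, $[\ddh\alpha]=\ddh[\alpha]$ and $[\LD_{Z}(\cdot)]=0$ (as in the proof of Lemma \bref{lem:representatives_for_equivariant_forms}), this residue becomes $[\LD_{T-Z}\gamma]$.

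\textbf{The main obstacle is to dispose of $[\LD_{T-Z}\gamma]$.} The two facts that make it work are: (i) $\mu=i_{T}^{-1}\lambda$ is $\LD_{T-Z}$-invariant, because $\lambda\in\bigwedge^{(n,0)}_{T}J\E$ forces the Lagrangian condition \eqref{eq:Lagrangian_form} and $\LD_{T-Z}\vol=\mathrm{div}\,T\,\vol$; and (ii) $\LD_{T-Z}=\LD_{\mathrm{pr}\,\bar{T}}$, being the prolongation of the vector field $\bar{T}$ of \eqref{eq:T_on_E}, commutes with $\ddh$, $\ddv$ and with $\delta$. Hence $\LD_{T-Z}(\ddh\gamma)=\LD_{T-Z}(\ddv\mu-\delta\mu)=0$, so $\LD_{T-Z}\gamma$ is $\ddh$-closed of bidegree $(n,1)$ and thus $\ddh$-exact by the algebraic Poincar\'e lemma on $J\E$; consequently $\ddv[\lambda]-\widehat{\delta}_{\I}[\lambda]$ is $\ddh$-exact in $\bigwedge\I$, and since it also lies in $\bigwedge_{T}\I$, the closing statement of Lemma \bref{lem:isomorphism_epsilon} upgrades this to $\ddh$-exactness inside $\bigwedge_{T}\I$. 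This is the required congruence; together with the source-form property it identifies $\widehat{\delta}_{\I}$ with the intrinsic Euler--Lagrange derivative of $\bigwedge_{T}\I$, hence, through $\varepsilon_{\Sigma}$, with $\delta$ on $JE$, which closes the diagram. As a cross-check one could instead establish the right square by a direct computation in the adapted coordinates $(t,x^{\mu})$, inserting \eqref{eq:radial_dependence} into \eqref{eq:EL_derivative_explicit} and re-expressing the ambient total derivatives and volume density through their $\Sigma$-counterparts; the delicate step there is, once again, the $t$-derivatives eliminated by the constraint \eqref{eq:weight_constraint_general}.
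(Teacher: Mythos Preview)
Your proposal is correct and follows essentially the same route as the paper: both treat the left square as a tautology via Lemma \bref{lem:EL_maps}, and for the right square both contract the ambient first-variation formula $\ddv(i_T^{-1}\lambda)=\delta(i_T^{-1}\lambda)+\ddh\gamma$ with $i_T$, dispose of the residue by showing $\LD_{T-Z}\gamma$ is $\ddh$-exact (via $\LD_{T-Z}(\ddv\mu)=\LD_{T-Z}(\delta\mu)=0$ and the algebraic Poincar\'e lemma in bidegree $(n,1)$), and then upgrade the resulting $\ddh$-exactness to one inside $\bigwedge_{T}\I$. The only substantive difference is the uniqueness criterion invoked for $\delta$ on $JE$: the paper uses the pair ``$\ddv\mathbb{L}\equiv\delta'\mathbb{L}\pmod{\ddh}$'' together with ``$\delta'$ kills $\ddh$-exact Lagrangians'' (for which they appeal to Lemma \bref{lem:EL_maps}(1) and Anderson's Corollary 5.2), whereas you use ``$\ddv\mathbb{L}\equiv\delta'\mathbb{L}\pmod{\ddh}$'' together with ``$\delta'\mathbb{L}$ is a source form'', read off from \eqref{eq:EL_derivative_explicit}. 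Both characterisations are standard and equivalent; your version has the minor advantage of being entirely local (no need to invoke Lemma \bref{lem:EL_maps}(1) a second time), while the paper's version avoids having to argue that the zeroth-order contact form $\ddv u^{\sfa}$ restricts to the zeroth-order contact form on $JE$ under $\varepsilon_\Sigma$.
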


We will say that the bicomplex $\bigwedge_{T} J\E$, supplemented with the ambient Euler-Lagrange derivative \eqref{eq:EL_T-in-form}, is an {\it ambient variational bicomplex}. Constructing the extended ambient variational bicomplex, explicitly involving forms of higher vertical degrees, is an interesting problem, which lies however beyond the scope of the present work. 
\vskip 4 pt

Given an ambient Lagrangian $\lambda\in\bigwedge^{(n,0)}_{T}J\E$ one defines the corresponding ambient Euler-Lagrange equations of motion $e_{\sfa}(X,\boldsymbol{u})$ via the formula $\widehat{\delta} \lambda = \ddv u^{\sfa}\wedge e_{\sfa}(X,\boldsymbol{u})\,\vol_{T}$. A solution of the equations of motion is a section $\Phi\in \Gamma(\E)$ whose jet prolongation $j^{\infty}\Phi$ is a section of $\I \subset  J\E$ (hence $\left.(j^{\infty}\Phi)\right|_{\Sigma} = j^{\infty}\phi$ is an infinite jet prolongation of a section $\phi  = \left.\Phi\right|_{\Sigma}\in\Gamma(E)$ by Lemma \bref{lem:pullback_bundle_I}) which is such that $(j^{\infty}\Phi)^{*}e_{\sfa}(X,\boldsymbol{u}) = 0$. The clear advantages of this ambient-space variational calculus are its coordinate independence and the explicit expression for the ambient Euler-Lagrange derivative \eqref{eq:EL_derivative}. Thanks to the isomorphism $\varepsilon_{\Sigma}$, one can work in the ambient variational bicomplex postponing to solve the constraints and restriction to $\Sigma$ to the very end, exactly as  was done in Sections \bref{sec:massless} and \bref{sec:massive} for the ambient-space construction of the Lagrangians for totally-symmetric massive and (partially) massless higher-spin $AdS_{d+1}$ fields.

\begin{proof}[Proof of Theorem \bref{thm:EL_derivative}]
The left part of the diagram \eqref{eq:EL_diagram} reflects the definition \eqref{eq:EL_on_I}, and is commutative due to the assertion {\it (2)} of Lemma \bref{lem:EL_maps}. Since $\varepsilon_{\Sigma}$ is an isomorphism, the right part of the diagram \eqref{eq:EL_diagram} defines the operator
\begin{equation}\label{eq:EL_draft}
    \delta^{\prime} = \varepsilon_{\Sigma}\circ \widehat{\delta}_{\I}\circ \varepsilon_{\Sigma}^{-1}\; :\; \tensor*{\bigwedge\nolimits}{^{(n,0)}} JE \rightarrow \tensor*{\bigwedge\nolimits}{^{(n,1)}} JE\,.
\end{equation}
In order to prove that $\delta^{\prime} = \delta$, we will show that for any $\mathbb{L} \in \bigwedge^{(n,0)} JE$ there is $\mathcal{J}\in \bigwedge^{(n-1,1)} JE$ such that $\ddv \mathbb{L} = \delta^{\prime} \mathbb{L} + \ddh \mathcal{J}$, and $\delta^{\prime} \mathbb{L} = 0$ when $\mathbb{L}$ is $\ddh$-exact. Then $\delta^{\prime} = \delta$ follows from the  uniqueness of the Euler-Lagrange derivative on $\bigwedge JE$ (see, {\it e.g.}, \cite[Corollary 5.2]{Anderson_bicomplex}).
\vskip 4 pt

Any given Lagrangian $\mathbb{L}\in \bigwedge^{(n,0)}JE$ can be lifted to an equivalence class $[\lambda]\in \bigwedge_{T}\I$, such that $\varepsilon_{\Sigma}[\lambda] = \mathbb{L}$. Furthermore, by Lemma \bref{lem:representatives_for_equivariant_forms}, one can consider a representative $\lambda\in \bigwedge_{T}J\E$. Then from $\lambda = i_{T} i^{-1}_{T} \lambda$ and recalls \eqref{eq:properties_T_Z} to write $\ddv \lambda = i_{T} \ddv (i^{-1}_{T} \lambda)$. Since $i^{-1}_{T} \lambda \in \bigwedge^{(n+1,0)} J\E$, one has 
\begin{equation}\label{eq:variation_0}
    \ddv i^{-1}_{T} \lambda = \delta i^{-1}_{T} \lambda + \ddh \beta\,\quad\text{with some}\quad \beta \in \tensor*{\bigwedge\nolimits}{^{(n,1)}_{T}} J\E\,.
\end{equation}

Next, apply $\LD_{T-Z}$ to both sides of \eqref{eq:variation_0}, and note that $i_{T}^{-1}$ commutes with $\LD_{T-Z}$, which can be verified directly for any $\alpha = H\wedge\vol \in \bigwedge^{(n+1,0)} J\E$ and its $i_{T}$-image $i_{T}\alpha = H \wedge \vol_{T}$ (where $H\in \bigwedge^{(0,0)}J\E$). Also recall that $\delta$ commutes with prolongations of vector fields which are projectable on $\E$ (see, {\it e.g.}, \cite[Corollary 2.13]{Anderson_bicomplex}). All in all, one has
\begin{equation}\label{eq:L_T-Z-exact}
    \ddh(\LD_{T-Z}\beta) = 0\quad\Rightarrow\quad \LD_{T-Z}\beta = \ddh\sigma\quad\text{for some}\quad \sigma \in \tensor*{\bigwedge\nolimits}{^{(n-1,1)}} J\E\,,
\end{equation}
where we have used that horizontal cohomology is empty in the bi-degree $(n,1)$ (see, {\it e.g.}, \cite[Proposition 4.2]{Anderson_bicomplex}). By the assertions {\it (1)}, {\it (2)} of Lemma \bref{lem:d_prime} one obtains
\begin{equation}\label{eq:variation_exact_term}
    i_{T}\ddh^{\prime}\beta = \ddh \sigma - \ddh^{\prime}i_{T}\sigma\quad\Rightarrow\quad i_{T}\ddh[\beta] = \ddh [\sigma - i_{T}\beta]\,.
\end{equation}

Now, apply $i_{T}$ to both sides of \eqref{eq:variation_0},
\begin{equation}\label{eq:variation_1}
    \ddv \lambda = \widehat{\delta}\lambda + i_{T}\ddh \beta\,,
\end{equation}
then recall \eqref{eq:L_T-Z-exact} and perform the pullback of the so-obtained expression to $\I$: 
\begin{equation}\label{eq:variation_2}
   \ddv [\lambda] = \widehat{\delta}_{\I}[\lambda] + \ddh [\sigma - i_{T}\beta]\,,
\end{equation}
where we have used the definitions \eqref{eq:EL_derivative}, \eqref{eq:EL_on_I}. Note that $\ddh [\sigma - i_{T}\beta] \in \bigwedge^{(n,1)}_{T}\I$. Indeed, the first condition in \eqref{eq:forms_I_equiv} follows from \eqref{eq:variation_exact_term}, while for the second condition one evaluates $\LD_{T}$ on both sides of the latter expression in \eqref{eq:variation_2} and recalls that $\lambda\in\bigwedge^{(n,0)}_{T} J\E$:
\begin{equation}
    \LD_{T}\ddv[\lambda] = \ddv \LD_{T}[\lambda] = 0\,,\quad\text{and}\quad \LD_{T}\widehat{\delta}_{\I}[\lambda] = [\LD_{T-Z}\widehat{\delta}\lambda] = \widehat{\delta}_{\I}[\LD_{T-Z}\lambda] = 0\,.
\end{equation}
By Lemma \bref{lem:representatives_for_equivariant_forms} there exists $\omega \in \bigwedge^{(n,1)}_{T}J\E$ such that $\ddh [\sigma - i_{T}\beta] = \ddh[\omega]$. Finally, by evaluating $\varepsilon_{\Sigma}$ on \eqref{eq:variation_2} one obtains
\begin{equation}
    \ddv \mathbb{L} = \delta^{\prime}\mathbb{L} + \ddh\mathcal{J}\,,
\end{equation}
where $\mathcal{J} = \varepsilon_{\Sigma}[\omega]$, and one recalls that $[\lambda] = \varepsilon^{-1}_{\Sigma}\mathbb{L}$ and \eqref{eq:EL_draft}. To check that $\delta^{\prime}\mathbb{L} = 0$ when $\mathbb{L}$ is $\ddh$-exact, note that by Lemma \eqref{lem:representatives_for_equivariant_forms} one has $\varepsilon^{-1}_{\Sigma}\mathbb{L} = [\ddh^{\prime}\beta]$ for some $\beta \in \bigwedge^{(n-1,0)}_{T} J\E$. Then $\delta^{\prime}\mathbb{L} = \varepsilon_{\Sigma}[\widehat{\delta}(\ddh^{\prime}\beta)] = 0$ by the assertion {\it (1)} of Lemma \bref{lem:EL_maps}.
\end{proof}

\subsection{Examples}\label{sec:examples}

Let us apply the proposed general formalism to the ambient description of totally-symmetric fields on $AdS_{d+1}$. In this case one takes the ambient space to be $\hM = \mathbb{R}^{2,d}_{+}$ endowed with the flat metric, and the fundamental vector field is the ambient Euler vector field $T = X\cdot\partial_X$. An embedding $\Sigma = AdS_{d+1}\hookrightarrow\hM =\mathbb{R}^{2,d}_{+}$ is set by fixing $r = \sqrt{-X^2} = \ell$. The volume form $\vol$ (and $\vol_{X}$) are defined as in \eqref{eq:vol_AdS}, and the Ehresmann connection one-form is $\vartheta = -\frac{X_A \,\dd X^{A}}{X^2}$. Note that the Cartesian coordinates $X^A$ on $\mathbb{R}^{2,d}_{+}$ verify \eqref{eq:homogeneous_coordinates} with $w = 1$, so the jet prolongation of the weight constraint \eqref{eq:constraint_homogeneity} reads as \eqref{eq:useful_formulae_hom_coord}.

\paragraph{Scalar field.} The simplest example is provided by the ambient scalar field subject to the homogeneity constraint \eqref{eq:constraint_homogeneity}. Due to Theorem \bref{thm:massive}, the corresponding ambient Lagrangian expressed in field-theoretical terms reads
\begin{equation}\label{eq:example_L_scalar}
    \Lx[\Phi] = \frac{1}{2}\, r^{-(d-2\Delta)}\,\Phi\Box\Phi\,\vol_{X}\,.
\end{equation}
In order to apply the above general scheme, take the ghost-degree-$0$ projection $\E = \F^{(0)}_{0} \subset \F^{(0)}$ (recall the comment below \eqref{eq:helicity_ghosts}). Fibers of the jet-bundle $J\E$ are parametrised by $\boldsymbol{u} = \{u_{A(q)}\;:\; q\geqslant 0\}$. The sub-bundle $\I$ is singled out by \eqref{eq:jet_sub-bundle}.
\vskip 4 pt

From the local $(d+1)$-form \eqref{eq:example_L_scalar} on $\mathbb{R}^{2,d}_{+}$ one constructs the corresponding local $(d+1,0)$-$T$-form on $J\E$:
\begin{equation}
    \lambda = \frac{1}{2}\,r^{-(d-2\Delta)} \,u\, u^A{}_A\,\vol_{X}\,,
\end{equation}
which indeed satisfies \eqref{eq:difforms_subspace_strict}. 
Application of the ambient Euler-Lagrange derivative \eqref{eq:EL_derivative} leads to
\begin{equation}
    \widehat{\delta}\lambda = \ddv u\wedge\frac{1}{2}\,\left(r^{-(d-2\Delta)}u^A{}_A + D^A D_A\left(r^{-(d-2\Delta)}u\right)\right)\,\vol_{X}\,.
\end{equation}
Taking into account that $D_Af(X) = \partial_A f(X)$ and using the relations \eqref{eq:normal_diff} together with \eqref{eq:total_vec_basis} one finally arrives at the ambient equations of motion
\begin{equation}
    \widehat{\delta}\lambda = \ddv u \wedge e(X,\boldsymbol{u})\,\vol_{X} = \ddv u\wedge r^{-(d-2\Delta)} \left(u^A{}_A + \frac{(d - 2\Delta)}{r^2}\,(X^{B} u_{B} + \Delta\, u)\right)\,\vol_{X}\,.
\end{equation}
For any section $\Phi\in\Gamma(\E)$ which satisfies \eqref{eq:constraint_homogeneity}, the equation $(j^{\infty}\Phi)^{*} e(X,\boldsymbol{u}) = 0$ is equivalent to $\Box \Phi = 0$, so one reproduces the ambient description of a scalar field on $AdS_{d+1}$.
\vskip 4 pt

Note another ambient Lagrangian $\lambda^{\prime}$ obtained from \eqref{eq:example_L_scalar} via integration by parts:
\begin{equation}
\def\arraystretch{1.4}
\begin{array}{rl}
    \lambda^{\prime} = & \displaystyle -\frac{1}{2}\,r^{-(d-2\Delta)}\,\left(u_A u^A - \frac{\Delta(d-2\Delta) }{r^2}\,u^2\right)\,\vol_{X} \\
    = & \lambda + \ddh^{\prime} \mathcal{J} + \zeta\,,
\end{array}
\end{equation}
where 
\begin{equation}
    \mathcal{J} = - J_{\perp}^{A} \,i_{D_{A}}\vol_{X}\quad\text{with}\quad J^{A}_{\perp} = \frac{1}{2} r^{-(d-2\Delta)}\left(u\,u^{A} -\frac{\Delta}{r^2}\, X^{A} u^2\right)\,,
\end{equation}
and
\begin{equation}
    \zeta = \frac{d-4\Delta}{2r^2}\,r^{-(d-2\Delta)} u(X^A u_{A} + \Delta \,u)\,\vol_{X} + \frac{1}{2} D_{A}\left(r^{-(d-2\Delta + 2)} X^{A}u(X^{B}u_{B} + \Delta u)\right)\wedge \vol_{X}\,.
\end{equation}
One can check that $\mathcal{J}$ is a $(d,0)$-$T$-form, therefore both $\ddh^{\prime} \mathcal{J}$ and $\zeta$ are $(d+1,0)$-$T$-forms as well. Moreover, $\zeta$ vanishes by virtue of \eqref{eq:constraint_homogeneity}. One can verify that evaluation of the ambient Euler-Lagrange derivative using $\lambda^{\prime}$ leads to the same ambient equations of motion.

\paragraph{Massless higher-spin fields.} Let us also consider the case of a massless spin-$s$ field ($s\geqslant 2$) in the triplet formulation. The same lines of reasoning apply to massive and partially massless higher-spin fields thanks to Theorem \bref{thm:massive}.
\vskip 4 pt

Along the same lines as for the scalar field, we fix the ghost-degree-$0$ projection $\E = \F^{(s)}_{0} \subset \F^{(s)}$ with its fibers parametrised by the collection $u = \{u_{A(s)}$, $u_{A(s-1)}, u_{A(s-2)}\}$ (standing for the triplet components $B$, $C$ and $D$ respectively, see \eqref{eq:ghost_expansion}). The above bundle is promoted to its infinite jet extension $J\E$ with fibers parametrised by the coordinates $\boldsymbol{u} = \{u_{A(s-k) ; B(q)}\;:\; k = 0,1,2\,,\;\; q \geqslant 0\}$ designating derivatives of the fields.
\vskip 4 pt

The jet prolongation of weight constraints \eqref{eq:homogeneity_ghosts} reads, for all $q \geqslant 0$, as
\begin{equation}\label{eq:constraints_weight_example_s}
\begin{array}{l}
    X^{C} u_{A(s);CB(q)} + (2 - s + q)\, u_{A(s);B(q)} = 0\,,\\
    X^{C} u_{A(s-1);CB(q)} + (3 - s + q)\, u_{A(s-1);B(q)} = 0\,,\\
    X^{C} u_{A(s-2);CB(q)} + (2 - s + q)\, u_{A(s-2);B(q)} = 0\,.
\end{array}
\end{equation}
The above constraints lead to the sub-bundle $\I \hookrightarrow J\E$. We omit detailed consideration of the algebraic constraints \eqref{eq:ghost_traceless} and \eqref{eq:ghost_tangent}, treating them as already imposed.
\vskip 4 pt

The Lagrangian $\mathbb{L}[\Phi]$ in Theorem \bref{thm:massless}, understood as a $(d+1,0)$-$T$-form, reads as
\begin{equation}\label{eq:Lagrangian_jet_s}
\def\arraystretch{1.4}
\begin{array}{rl}
    \lambda = & \displaystyle\frac{1}{2}\, r^{-(d + 2s - 4)} \bigg(\frac{1}{s!}\,u^{A(s)}u_{A(s);B}{}^{B} - \frac{1}{(s-1)!}\, u^{A(s-1)} u_{A(s-1)} -\frac{1}{(s-2)!}\, u^{A(s-2)} u_{A(s-2);B}{}^{B} \\
    \hfill & + \displaystyle\frac{1}{(s-1)!}\,\big(u^{A(s)} u_{A(s-1);A} - u^{A(s-1)} u_{A(s-1)B;}{}^{B}\big) \\
    \hfill & + \displaystyle\frac{1}{(s-2)!}\,\big(u^{A(s-2)} u_{A(s-2)B;}{}^{B} - u^{A(s-1)} u_{A(s-2);A}\big)\bigg)\,\vol_{X}\,,
\end{array}
\end{equation}
where the conditions \eqref{eq:difforms_subspace_strict} hold for each term.
\vskip 4 pt

The structure of the constraints fits the general construction, and thus Theorem \bref{thm:EL_derivative} applies. The ambient equations of motion read as
\begin{equation}\label{eq:EL_example_s}
\def\arraystretch{1.4}
\begin{array}{rl}
    \widehat{\delta}\lambda = & \displaystyle\sum_{k = 0,1,2} \ddv u^{A(s-k)}\wedge e_{A(s-k)}(X,\boldsymbol{u})\, \vol_{X} \\
    = & \displaystyle \ddv u^{A(s)}\wedge \frac{1}{s!}\, r^{-(d-2s+4)}\big(u_{A(s);B}{}^{B}  + s\, u_{A(s-1);A}\big)\,\vol_{X} \\
    - & \displaystyle \ddv u^{A(s-1)} \wedge \frac{1}{(s-1)!}\, r^{-(d-2s+4)}\big(u_{A(s-1)B;}{}^{B} + (s-1) u_{A(s-2);A} + u_{A(s-1)} \big)\,\vol_{X}\\
    - & \displaystyle \ddv u^{A(s-2)} \wedge \frac{1}{(s-2)!}\,r^{-(d + 2s - 4)} \big(u_{A(s-2);B}{}^{B} - u_{A(s-2)B;}{}^{B}\big)\,\vol_{X}\\
    + & \displaystyle \ddv u^{A(s)}\wedge \frac{1}{s!}\,r^{-(d-2s+4)} \big(X^{B}u_{A(s);B} + (2-s)\, u_{A(s)}\big)\,\vol_{X}\\
    - & \displaystyle \ddv u^{A(s-2)}\wedge \frac{1}{(s-2)!}\,r^{-(d-2s+4)} \big(X^{B}u_{A(s-2);B} + (2-s)\, u_{A(s-2)}\big)\,\vol_{X}\,.
\end{array}
\end{equation}
For any section $\Phi \in \Gamma(\E)$, whose components verify \eqref{eq:subspaces}, \eqref{eq:ghost_traceless} and \eqref{eq:ghost_tangent}, the equations $(j^{\infty}\Phi)^{*} e_{A(s-k)}(X,\boldsymbol{u}) = 0$ (with $k = 0,1,2$) are equivalent to the correct equations of motion $\Omega\Phi = 0$, provided by the BRST operator \eqref{eq:BRST_operator}. Note that the last two lines in \eqref{eq:EL_example_s} are proportional to the constraints \eqref{eq:constraints_weight_example_s}, and thus vanish on $\I$.

\paragraph{Massive fields via flat dimensional reduction.}

As another example, let us briefly note how the above formalism applies to the ambient Lagrangian formulation of massive fields of arbitrary symmetry in flat spacetime via dimensional reduction of massless fields in ambient space (see \cite{Alkalaev:2008gi,Chekmenev:2019ayr} and \cite{Metsaev:2012uy,Metsaev:2017cuz} for description of massless fields and dimensional reduction in BRST terms). For the ambient space one takes $\hM = \mathbb{R}^{2,d}$ with the flat metric, and chooses $T = \partial_{0^{\prime}}$. The embedding $\Sigma = \mathbb{R}^{1,d} \hookrightarrow \hM$ is singled out by $X^{0^{\prime}} = 0$, so $\vartheta = -\dd X^{0^{\prime}}$. We will write $\vol_{0^\prime} = i_{\partial_{0^{\prime}}} \vol$.
\vskip 4 pt

The bundle $\E\to \mathbb{R}^{2,d}$ is constructed for a particular field content in question: the components of the field multiplet $\{\Phi^{\sfa}\}$ are chosen according to \cite{Chekmenev:2019ayr} for any particular mixed-symmetry representation. The differential constraint \eqref{eq:weight_constraint_general} takes the form:
\begin{equation}\label{eq:mass_constraint_general}
    \left(\frac{\partial}{\partial X^{0^{\prime}}} + m\right)\,\Phi^{\sfa} = 0\,,\quad\text{for fixed}\quad m\in\mathbb{R}\,.
\end{equation}
The rest of the constraints are algebraic and commute with \eqref{eq:mass_constraint_general} (see \cite{Chekmenev:2019ayr}, for the particular case of totally-symmetric fields see the example of the flat limit presented in Section \bref{sec:flat_limit}).
\vskip 4 pt

The bundle $\E$ is promoted to its infinite jet extension $J\E$, while constraints are prolonged to their infinite jet extensions. As usual, we omit the analysis of algebraic constraints, and write only the jet prolongation of \eqref{eq:mass_constraint_general}:
\begin{equation}\label{eq:mass_constraint_general_jet}
    z^{\sfa}_{A(q)} := u^{\sfa}_{0^{\prime}\,A(q)} + m\,u^{\sfa}_{A(q)} = 0\quad\text{for all integer}\quad q \geqslant 0\,.
\end{equation} 
Jet prolongations of the constraints define the sub-bundle $\I \subset J\E$, whose pullback to $X^{0^{\prime}} = 0$ is isomorphic to the jet bundle $JE$ of the corresponding massive field in flat space $\mathbb{R}^{1,d}$.  
\vskip 4 pt

For the simplest example of a massive scalar field one has the following ambient Lagrangian
\begin{equation}
    \lambda = \frac{1}{2} \,e^{2m X^{0^\prime}} u \,u^{A}{}_{A}\,\vol_{0^\prime} \in \tensor*{\bigwedge\nolimits}{^{(d+1,0)}_{T}} J\E\,,
\end{equation}
which leads, according to Theorem \bref{thm:EL_derivative}, to the following equations of motion:
\begin{equation}
    \widehat{\delta}\lambda = \ddv u \wedge e(X,\boldsymbol{u})\,\vol_{0^\prime} = \ddv u\wedge e^{2m X^{0^\prime}} \big(u^{A}{}_{A} + 2m\,u\,(u_{0^\prime} + m\,u)\big)\,\vol_{0^\prime}\,.
\end{equation}
The second term in $e(X,\boldsymbol{u})$ is proportional to the constraints \eqref{eq:mass_constraint_general_jet}, so it vanishes on any section $\Phi\in\E$, which verifies \eqref{eq:mass_constraint_general}. As a result, the equation $(j^{\infty}\Phi)^{*}e(X,\boldsymbol{u}) = 0$ is equivalent to $\Box\Phi = 0$ in $\mathbb{R}^{2,d}$. By virtue of the constraint \eqref{eq:mass_constraint_general}, the latter equation reduces to the Klein-Gordon equation on $\mathbb{R}^{1,d}$ with the mass $m$.
\vskip 4 pt

For the case of totally-symmetric spin-$s$ fields (with $s\geqslant 1$), one reads off the corresponding Lagrangian $\lambda \in \bigwedge^{(d+1,0)}_{T} J\E$ from $\mathbb{L}^{\prime}[\Phi]$ in \eqref{eq:Lagrangian_flat} (with $\bar{X}^{0^{\prime}} \mapsto X^{0^{\prime}}$). In this case the field multiplet coincides with that of totally-symmetric massless AdS fields, where the constraint \eqref{eq:mass_constraint_general} is applied to all components. Finally, for an arbitrary mixed-symmetry massive field in $\mathbb{R}^{1,d}$ one simply takes the BRST description of a massless field of the same symmetry type in $\mathbb{R}^{2,d}$ \cite{Alkalaev:2008gi}, and reads off the corresponding ambient Lagrangian $\lambda\in\bigwedge^{(d+1,0)}_{T} J\E$ from the expression $\big(\Phi,\Omega\Phi\big)^{\prime}$ (with the inner product as in \eqref{eq:Lagrangian_flat}). The pullback of the latter Lagrangians to the sub-bundle $\left.\I\right|_{\Sigma}$, where mass constraints \eqref{eq:mass_constraint_general} are imposed and $X^{0^{\prime}} = 0$, reproduces the Lagrangians presented in \cite{Chekmenev:2019ayr} (see \cite{Metsaev:2012uy,Metsaev:2017cuz} for an alternative consideration).

\section*{Acknowledgements}

We thank the organisers of the APCTP workshop 
``Higher-Spin and applications" in Pohang and the organisers of the 
 MITP workshop ``Higher Structures, Gravity and Fields" in Mainz. 
The work of N.B. was partially supported by the F.R.S.-FNRS PDR grant 
``Fundamental issues in extended gravity'' No. T.0022.19.

\appendix
\newpage

\section{Adapted coordinates and covariant differentiation}\label{sec:coordinates}
\paragraph{Local AdS geometry.}
The embedding \eqref{eq:embedding} provides a coordinate system in the ambient space, which is adapted to embeddings $AdS_{d+1}\subset \mathbb{R}^{2,d}$. Namely, the parameter $r = \sqrt{-X^2}$ in \eqref{eq:embedding} can be supplemented by ``angular'' coordinates $x^{\mu}$ ($\mu = 1,\ldots, d+1$) parametrizing $AdS_{d+1}$ for any particular fixed value of $r$. Coordinates $x^\mu$ are homogeneous of degree $0$, {\it i.e.} $(X\cdot \partial_X) x^\mu = (r\partial_r)\,x^\mu = 0$. Parametrization of ambient space by $AdS_{d+1}$ radius $r$ and local $AdS_{d+1}$ coordinates $x^\mu$ gives rise to smooth functions $X^{A}(r,x^\mu)$ defining the vielbein field
\begin{equation}\label{eq:vielbein}
    e^{A}_\mu = \frac{\partial X^A}{\partial x^{\mu}}
\end{equation}
such that the $AdS_{d+1}$ induced metric is
\begin{equation}\label{eq:metric}
    g_{\mu\nu} = \eta_{AB}\,e^{A}_\mu e^{B}_\nu.
\end{equation}
In order to have a basis in $T\mathbb{R}^{2,d}$, we supplement the $d+1$ ambient vectors $e^A_\mu$ by another vector $n^A$ defined as 
\begin{equation}\label{eq:normal}
    n^A = \frac{X^A}{r}
\end{equation}
and thus normalized as $n_A n^A = -1$. As soon as $r$ is constant for any embedding of $AdS_{d+1}$ one has $X_A \partial_\mu X^A = 0$, and thus 
\begin{equation}
    n_A\,e^A_\mu = 0
\end{equation}
(up and down positions of ambient indices are swapped by the ambient metric $\eta_{AB}$). Vector $n^A$ admits the following two representations:
\begin{equation}\label{eq:normal_diff}
    n_A = -\frac{\partial\, r}{\partial X^A} = \frac{\partial X^A}{\partial r}.
\end{equation}
The first expression is obtained by differentiating \eqref{eq:embedding}, while the second one reflects the fact that $X^A$ has homogeneity degree $+1$, {\it i.e.} $(r\partial_r)\,X^A = X^A$.
\vskip 4 pt

The pushforward of the inverse of the $AdS_{d+1}$ metric \eqref{eq:metric} to the ambient space defines
\begin{equation}\label{eq:projector}
    \PP^{AB} = g^{\mu\nu}\,e^A_\mu e^B_\nu,
\end{equation}
which is a rank-$(d+1)$ projector with the kernel spanned by $n^A$:
\begin{equation}
    \PP^{A}{}_{B}\,\PP^{B}{}_{C} = \PP^{A}{}_{C}\,,\quad \PP^{A}{}_{B}e^B_{\mu} = e^{A}_{\mu}\,,\quad \PP^{AB}n_B = 0\,.
\end{equation}
As a result, the inverse of the ambient metric is decomposed into its tangent and normal parts with respect to $AdS_{d+1}$ as follows (also giving a decomposition of unity):
\begin{equation}\label{eq:decomposition_unity}
    \eta^{AB} = \PP^{AB} - n^A n^B\quad \Leftrightarrow\quad \delta^A{}_B = \PP^A{}_B - n^A n_B.
\end{equation}

Let us also introduce the dual vielbein
\begin{equation}\label{eq:dual_vielbein}
    e_A^\mu : = g^{\mu\nu}\eta_{AB}\,e^B_\nu\,,
\end{equation}
which solves the system of $(d+1)^2 + (d+2)^2$ equations
\begin{equation}\label{eq:dual_system}
    \left\{
        \begin{array}{l}
            \tilde{e}^\mu_A\, e^A_\nu = \delta^\mu_\nu\\
            e^A_\nu\, \tilde{e}^\nu_B = \PP^A{}_{B}
        \end{array}
    \right.
\end{equation}
imposed on $(d+1)(d+2)$ variables $\tilde{e}^\mu_A$. Other way around, the system \eqref{eq:dual_system} is not independent: it is subject to the trace constraint $\tilde{e}^{\nu}_B\cdot e^B_{\nu} = d+1$, as well as $(d+1)(d+2)$ constraints $e^A_\mu(\tilde{e}^\mu_B\cdot e^B_\nu-\delta^\mu_\nu) = (e^A_\rho\cdot \tilde{e}^\rho_B  - \PP^A{}_{B})e^B_\nu$. As a result, one has $(d+1)^2 + (d+2)^2 - (d+1)(d+2) - 1 = (d+1)(d+2)$ independent equations, thus leading to the uniqueness of the solution \eqref{eq:dual_vielbein}. 
\vskip 4 pt

Finally, let us show, that 
\begin{equation}\label{eq:dual_diff}
    e^\mu_A = \frac{\partial x^\mu}{\partial X^A}\,.
\end{equation}
To do that, we substitute 
\begin{equation}\label{eq:dX}
\dd X^A = e^A_\mu\,\dd x^\mu + n^A\dd r
\end{equation}
into 
\begin{equation}\label{eq:one}
        \dd x^\mu = \frac{\partial x^\mu}{\partial X^A}\,\dd X^A, \quad \text{which gives}\quad 
        \dd x^\mu = \frac{\partial x^\mu}{\partial X^A}\,e^A_\nu\,\dd x^\nu 
\end{equation}
because $n^A\,\partial_A x^\mu = \partial_r\,x^\mu = 0$. On the other hand, substituting $\dd r = \frac{\partial r}{\partial X^A}\,\dd X^A = -n_A\,\dd X^A$ (where we have made use of \eqref{eq:normal_diff}) into \eqref{eq:dX} one gets
\begin{equation}\label{eq:two}
    e^A_\mu\,\frac{\partial x^\mu}{\partial X^B} = \delta^A{}_B + n^A n_B = \PP^A{}_{B}\quad\text{agreeing to \eqref{eq:decomposition_unity}}.
\end{equation}
Expressions \eqref{eq:one} and \eqref{eq:two} together imply, that $\frac{\partial x^\mu}{\partial X^A}$ solves \eqref{eq:dual_system}. Thus, due to the uniqueness of the solution, the equality \eqref{eq:dual_diff} takes place.

\paragraph{Covariant differentiation on $AdS_{d+1}$.}\label{curved_geometry}

Differentiation in the ambient space induces covariant differentiation on $AdS_{d+1}\subset \mathbb{R}^{2,d}$ via the straightforward relation ({\it cf.} \eqref{eq:vielbein} and \eqref{eq:normal_diff})
\begin{equation}\label{eq:ambient_derivative}
    \partial_A = e^\mu_A\,\partial_\mu - \frac{n_A}{r}\,(r\,\partial_r)\,.
\end{equation}
The following expressions will be useful in the sequel:
\begin{equation}\label{eq:covar_1}
    \partial_\mu n^A = \partial_r e^A_\mu = \frac{1}{r}\,e^A_\mu.
\end{equation}
The first equality in \eqref{eq:covar_1} is due to that $n^A = \frac{\partial X^A}{\partial r}$ and $\partial_\mu\partial_r = \partial_r\partial_\mu$. The second one is obtained from applying $\partial_\mu$ to $n^A = \frac{X^A}{r}$.
\vskip 4 pt

The following relation
\begin{equation}\label{eq:covar_2}
    \partial_r n^A = 0
\end{equation}
is due to the $0$-degree homogeneity of $n^A$ \eqref{eq:normal}. 
\vskip 4 pt

Expression $\partial_r e^\mu_A$ for the dual of the vielbein field is obtained through the decomposition in terms of the local basis $\left\{n^A,e^A_\mu\right\}$ in the ambient space:
\begin{equation}
\begin{split}
    n^C\partial_r e^\mu_C = \partial_r(n^C\cdot e^\mu_C) = 0\,,\\
    e^C_\nu\,\partial_r e^\mu_C = -\partial_r e^C_\nu\cdot e^\mu_C = -\frac{1}{r}\,e^C_\nu e^\mu_C = -\frac{1}{r}\,\delta^\mu_\nu\,,
\end{split}
\end{equation}
which leads to
\begin{equation}\label{eq:covar_3}
    \partial_r e^\mu_A = -\frac{1}{r}\,e^\mu_A\,.
\end{equation}

Projector \eqref{eq:projector} allows us to define the tangent derivative \cite{Bekaert_Meunier}
\begin{equation}\label{eq:derivative}
    \DD_A = \PP\circ \partial_A \circ\PP\,,
\end{equation}
implying that tensors are projected to tangent ones before taking the ambient derivative, as well as afterwards. Tangent derivative \eqref{eq:derivative} obeys Leibniz rule and transforms type-$(m,n)$ ambient tensors to the type-$(m,n+1)$ ones, thus being a covariant derivative in $AdS_{d+1}$. It is compatible with the induced metric $g_{\mu\nu}$, because one can check that $\DD_C \PP_{AB} = 0$. Therefore for any tangent to $AdS_{d+1}$ tensor one has
\begin{equation}
    \DD_C \,T^{A_1\ldots}{}_{B_1\ldots} = \DD_C\big(e^{A_1}_{\mu_1}\ldots\cdot e^{\nu_1}_{B_1}\ldots\,T^{\mu_1\ldots}{}_{\nu_1\ldots}\big) = e_C^\lambda\cdot e^{A_1}_{\mu_1}\ldots\cdot e^{\nu_1}_{B_1}\ldots\,\nabla_\lambda T^{\mu_1\ldots}{}_{\nu_1\ldots},
\end{equation}
where $\nabla_\mu$ stands for the covariant derivative in $AdS_{d+1}$ corresponding to the Levi-Civita connection. One can check by acting on a vector field that Christoffel symbols are given by
\begin{equation}
    \Gamma^\lambda_{\mu\nu} = e^\lambda_C\,\partial^{}_\mu e^C_\nu \, ,
\end{equation}
involving the tangent component of the derivative $\partial_\mu e^C_\nu$. Normal component is obtained with the aid of \eqref{eq:covar_1}:
\begin{equation}
   \partial_{\mu}(n_C\,e^{C}_{\nu}) = 0\quad\Rightarrow\quad n_C \partial^{}_\mu e^C_\nu = -\frac{1}{r}\,g_{\mu\nu}\,,
\end{equation}
which gives
\begin{equation}
    \partial^{}_\mu e^C_\nu = e^C_\lambda\,\Gamma^\lambda_{\mu\nu} + \frac{1}{r}\, n^C\,g_{\mu\nu}\quad\Leftrightarrow\quad \nabla^{}_\mu e_\nu^C = \frac{1}{r}\,n^C\,g_{\mu\nu}\,.
\end{equation}
For the dual vielbein one gets, along the same lines as for \eqref{eq:covar_3} by applying \eqref{eq:covar_1} and $\nabla_\mu n^C = \partial_\mu n^C$,
\begin{equation}\label{eq:covar_4}
    \nabla^{}_\lambda e^\mu_C = \frac{1}{r}\,n_C\,\delta^\mu_\lambda \,.
\end{equation}
The following two relations can be useful for computations:
\begin{equation}\label{eq:relations}
    \partial_A n_B = \frac{1}{r}\,\PP_{AB},\quad \partial^{}_A e^\mu_B = - e^\lambda_A e^\nu_B\,\Gamma^\mu_{\lambda\nu} + \frac{1}{r}\,(n^{}_A e^\mu_B + n^{}_B e^\mu_A)\,.
\end{equation}

Finally, note that any ambient tensor field containing no ``naked'' AdS indices is differentiated as an AdS scalar, therefore for the tensors in question one rewrites \eqref{eq:ambient_derivative} in the form
\begin{equation}\label{eq:ambient_covariant_derivative}
    \partial_A = e^\mu_A\,\nabla_\mu - \frac{n_A}{r}\,(r\,\partial_r)\,.
\end{equation}
In particular, for the tangent and radial projections of the auxiliary variables $P^A$
\begin{equation}\label{eq:AdS_ocsillators}
    p^{\mu} := e_A^{\mu}P^A\quad \text{and}\quad q := -n_A P^A
\end{equation}
one can derive, by using \eqref{eq:covar_1}, \eqref{eq:covar_4} and $\nabla_{\mu} P^A = 0$, that
\begin{equation}\label{eq:cov_D_pq}
    \nabla_{\mu}p^{\nu} = -\frac{1}{r}\,q\,\delta^{\nu}_{\mu}\,,\quad \nabla_{\mu}q = -\frac{1}{r}\,p_{\mu}\,.
\end{equation}

\section{Ghost decomposition and BRST operator\\ in the adapted coordinates}\label{sec:formulae}

In order to perform calculations in terms of AdS fields we will make use of the adapted coordinates $r,x^{\mu}$ in the ambient space. In this respect, components $\Phi^{(m+n|n)}\in \mathcal{S}^{(m+n)}_{\Delta}$ of the radial decomposition \eqref{eq:radial_decomposition} are expressed as follows:
\begin{equation}\label{eq:radial_decomposition_AdS}
    \Phi^{(m+n)}_{n}\left(X\middle|P\right) = \phi^{(m,n)}_{\mu(m)}(X)\,\frac{p^{\mu(m)}}{m!}\frac{q^{n}}{n!}\quad\text{with}\quad \big(X\cdot\partial+ (\Delta - m)\big)\phi^{(m,n)}_{\mu(m)}(X) = 0\,.
\end{equation}
Despite the latter constraint can be resolved by an AdS tensor field $\phi^{(m,n)}_{\mu(m)}(x)$ multiplied by $r^{-(\Delta-m)}$ we prefer to keep  weight factor absorbed in $\phi^{(m,n)}_{\mu(m)}(X)$. As a shorthand notation for the AdS fields, instead of $\Phi^{(m+n)}_{n}\left(X\middle|P\right)$ we will write (omitting $X$- and $P$-dependence)
\begin{equation}
    \phi^{(m,n)}:= \phi^{(m,n)}_{\mu(m)}\,\frac{p^{\mu(m)}}{m!}\frac{q^{n}}{n!}\,,
\end{equation}
and also 
\begin{equation}\label{eq:field_components_Fock}
    \phi^{(m^{\prime},n^{\prime})}\psi^{(m,n)} := \langle\phi^{(m^{\prime},n^{\prime})},\psi^{(m,n)} \rangle_{\text{Fock}} = (-)^{n}\,m!\,n!\;\delta_{m^{\prime},m}\delta_{n^{\prime},n}\,\phi^{(m,n)\,\mu(m)}\psi^{(m,n)}_{\mu(m)}\,.
\end{equation}
Finally, the following condensed notations will be useful:
\begin{equation}
    \begin{array}{rlrl}
        g\phi^{(m,n)} := & q^{2}\frac{\partial^2}{\partial p^2}\phi^{(m,n)}\,, &
        g^{*}\phi^{(m,n)} := & p^2\partial_q^{2}\phi^{(m,n)}\,.
    \end{array}
\end{equation}

Ambient derivative along any direction decomposes into covariant AdS derivative and a radial complement $\partial_r$ (which acts algebraically on subspaces with a fixed radial weight). For example, symmetrised gradient and divergence for $\Phi\in\mathcal{S}_{\Delta}$ read respectively as
\begin{equation}\label{eq:diff_operators}
    P\cdot\partial_{X}\,\Phi = \left(p^{\mu}\nabla_{\mu} - \frac{q}{r}\,\Delta\right)\Phi,\quad \partial_{P}\cdot \partial_{X}\,\Phi = \left(\frac{\partial}{\partial p^{\mu}}\nabla^{\mu} + \frac{\Delta}{r}\,\frac{\partial}{\partial q}\right)\Phi
\end{equation}
With the aid of relations \eqref{eq:ambient_covariant_derivative} and \eqref{eq:cov_D_pq}, operators \eqref{eq:diff_operators} are rewritten in terms of AdS tensor fields: for $\Phi^{(m+n)}_{n}$ in \eqref{eq:radial_decomposition_AdS} one has
\begin{equation}\label{eq:Pderiv-AdS}
\def\arraystretch{2}
\begin{array}{rl}
    \displaystyle P\cdot \partial_{X}\Phi^{(m+n)}_{n}(X|P) = & \displaystyle \left(p^{\mu} \nabla_{\mu} - q\,\frac{\Delta + m}{r} - \frac{p^2}{r}\,\frac{\partial}{\partial q} \right)\,\phi^{(m,n)}\,,\\
    \displaystyle \partial_ {P}\cdot\partial_{X}\,\Phi^{(m+n)}_{n}(X|P) = & \displaystyle \left(\frac{\partial}{\partial p_{\mu}}\nabla_{\mu}  -\frac{d+1+m - \Delta}{r}\,\frac{\partial}{\partial q} - \frac{q}{r}\,\frac{\partial^2}{\partial p^2}\right)\,\phi^{(m,n)}\,,
\end{array}
\end{equation}
where, in the above expressions, covariant AdS derivative acts only on the AdS field while $p^{\mu},q$ are treated as independent variables. In order to read off the action of $\hO$ in terms of AdS fields we supplement the above expressions by that for the ambient d'Alembertian:
\begin{multline}\label{eq:dAlemb-AdS}
    \Box\Phi^{(m+n)}_{n}(X|P) = \left(\nabla^{2} - \frac{2}{r}\left(q\,\frac{\partial}{\partial p_{\mu}}\nabla_{\mu} + p^{\mu}\nabla_{\mu}\,\frac{\partial}{\partial q}\right) + \frac{\Delta(d-\Delta)}{r^2} \right.\\ \left. +\frac{1}{r^2}\left(q\,\frac{\partial}{\partial p_{\lambda}} + p^{\lambda}\frac{\partial}{\partial q}\right)\left(q\,\frac{\partial}{\partial p^{\lambda}} + p_{\lambda}\frac{\partial}{\partial q}\right) \right)\,\phi^{(m,n)}\,.
\end{multline}

A spin-$s$ field $\Phi^{(s)}\in\hS^{(s)}_{\Delta}$ has the following ghost-degree expansion 
\begin{equation}
    \Phi^{(s)} = \left.\Phi^{(s)}\right|_{\mathrm{gh} = -1} + \left.\Phi^{(s)}\right|_{\mathrm{gh} = 0} + \left.\Phi^{(s)}\right|_{\mathrm{gh} = +1} + \left.\Phi^{(s)}\right|_{\mathrm{gh} = +2}
\end{equation}
with particular components written as
\begin{equation}\label{eq:ghost_expansion}
    \begin{array}{rl}
        \left.\Phi^{(s)}\right|_{\mathrm{gh} = -1} = & \displaystyle b\,\sum_{m+n = s-1} \varepsilon^{(m,n)}\,,\\
        \left.\Phi^{(s)}\right|_{\mathrm{gh} = 0} = & \displaystyle \sum_{m+n = s} B^{(m,n)} + c_0 b \,\sum_{m+n = s-1} C^{(m,n)} + cb\,\sum_{m+n = s-2} D^{(m,n)}\,,\\
        \left.\Phi^{(s)}\right|_{\mathrm{gh} = +1} = & \displaystyle c_0\,\sum_{m+n = s}\widetilde{B}^{(m,n)} + c\,\sum_{m+n = s-1}\widetilde{C}^{(m,n)} + c_0cb\,\sum_{m+n = s-2}\widetilde{D}^{(m,n)}\,,\\
        \left.\Phi^{(s)}\right|_{\mathrm{gh} = +2} = & \displaystyle c_0c\,\sum_{m+n = s-1} F^{(m,n)}\,,
    \end{array}
\end{equation}
where the field components on the right-hand side do not depend on the ghost variables. The introduced ingredients are sufficient for reformulating ambient expressions in purely AdS terms.

\section{Proofs}\label{sec:proofs}

\subsection{Proof of Lemma \bref{lem:radial_decomposition}}\label{sec:proof_lem_1}
    It is sufficient to check the statement for the monomials $\Phi(X|P)$ of homogeneity degree $s$ in $P_A$: $(P\cdot\partial_P - s)\,\Phi(X|P) = 0$. Consider the first-order PDE in $P_A$
    \begin{equation}\label{eq:radial_PDE}
        \big((T\cdot P)(T\cdot \partial_P) - n\,T^2\big) f(X|P) = 0
    \end{equation}
    in a neighbourhood where $T\cdot P\neq 0$. Its general solution is expressed as $f(X|P) = (T\cdot P)^n\,\widetilde{f}(X|P)$ such that $\widetilde{f}(X|P)$ is a general solution of
    \begin{equation}
        (T\cdot \partial_P) \widetilde{f}(X|P) = 0\,.
    \end{equation}
    The solutions of the latter equation form a commutative algebra under the point-wise product generated by $X^A$ and the $d+1$ independent characteristics among
    \begin{equation}
        \widetilde{P}^A = P^A - \frac{(T\cdot P)}{T^2}\,T^A\,.
    \end{equation}
    Note that $T\cdot \widetilde{P} = 0$, so only $d+1$ components are indeed independent. Taking this into account and applying the decomposition $P^A = \widetilde{P}^A + \frac{(T\cdot P)}{T^2}\,T^A$ to the field $\Phi(X|P)$, one finds that for each integer $ n\in\{0,\dots,s\}$ there is a unique component $\Phi_n(X|P)$ satisfying \eqref{eq:radial_PDE}\,.

\subsection{Proof of Lemma \bref{lem:T1_resolve}}\label{sec:proof_lem_2}
    Because the operator $U$ \eqref{eq:operator_U} is invertible, for any $\Phi\in\hS$ there exists the field $\widetilde{\Phi} = U^{-1}\Phi$. In this respect, for $\Phi = U\widetilde{\Phi}$ one has (see \eqref{eq:resolution_T1})
    \begin{equation}
        (\sfT_1)^t\, \Phi = U\,(X\cdot\partial_P)^t\, \widetilde{\Phi}\,.
    \end{equation}
    Note that for the non-zero components $\widetilde{\Phi}_n$ of the radial decomposition of $\widetilde{\Phi}$ one has $(X\cdot \partial_P)^{n+j}\widetilde{\Phi}_n = 0$ only for $j\geqslant 1$ (see the proof of the Lemma \bref{lem:radial_decomposition}, Section \bref{sec:proof_lem_1}). Therefore satisfying $(\sfT_1)^t \Phi = 0$ is equivalent to having $\widetilde{\Phi}_{t+j} = 0$ for all integer $j\geqslant 0$ in the radial decomposition \eqref{eq:radial_decomposition}. 

\subsection{Proof of Lemma \bref{lem:conjugation}}\label{sec:proof_lem_3}
    Any local ambient operator is a polynomial in the oscillators and ambient derivatives. Conjugation $(\cdot)^{\dagger}$ preserves $h$-degree (see the comment after \eqref{eq:braket_h-degree}), therefore if $\op$ is an ambient operator of $h$-degree $0$ then so is $\op^{\#}$. Rules \eqref{eq:dagger}, \eqref{eq:dagger_ext} imply that
    \begin{equation}\label{eq:pre_conjugation}
        \big(\Psi,\op\Phi\big)_K = (-)^{\gh(\Psi)\cdot\gh(\op)}  \big(\op^{\#}\Psi,\Phi\big)_K + \partial_A J^A_{\Psi,\Phi}\,\vol_{X}
    \end{equation}
    for a certain $J^C_{\Psi,\Phi}$ such that $(X\cdot\partial + d+2)\partial_C J^C_{\Psi,\Phi} = 0$. The latter is equivalent to $(X\cdot\partial + d+1) J^C_{\Psi,\Phi} = 0$. The expression \eqref{eq:pre_conjugation} has the form \eqref{eq:conjugate_property} thanks to the following identity for any $\mathcal{J} = i_{J}\,\vol_{X}$ with $J = J^C \partial_C$:
    \begin{equation}
        \partial_C J^C\,\vol_{X} = \dd\mathcal{J} + (X\cdot\partial + d + 1)J^C\,i_{\partial_C} \vol_{X} \,.
    \end{equation}
    \vskip 4 pt

    To show that the inner product $\big(\cdot,\cdot\big)_K$ is non-degenerate on $\hS_{\Delta}$, note that the constraint in \eqref{eq:subspaces} is a first-order linear PDE, so any element of $\hS_{\Delta}$ is uniquely defined by its unconstrained restriction to $AdS_{d+1}$ with $r = \ell$. Since the inner product in question is point-wise with respect to $\mathbb{R}^{2,d}_{+}$, one specifies attention to $r = \ell$, where non-degeneracy of $\big(\cdot,\cdot\big)_K$ on $\hS_{\Delta}$ is equivalent to non-degeneracy of $\langle\cdot,\cdot\rangle$ on $\hS$.
    \vskip 4 pt
    
    Similarly, one proves the uniqueness of the operator $\op^{\#}$.  
    Suppose that both $\op^{\prime}_1$, $\op^{\prime}_2$ are conjugate to $\op$ in the sense that \eqref{eq:conjugate_property} holds. Then $\omega = \op^{\prime}_1 - \op^{\prime}_2$ satisfies
    \begin{equation}\label{eq:difference_exact}
        \big(\omega\Psi,\Phi\big)_K = \dd \mathcal{J}_{\Psi,\Phi}
    \end{equation}
    for some local $d$-form $\mathcal{J}_{\Psi,\Phi}$. Pullback of \eqref{eq:difference_exact} to $AdS_{d+1}$ for $r = \ell$ and integration over any neighbourhood with a choice of $\Phi$ with finitely supported pullback to $AdS_{d+1}$ gives $0$ for any $\Psi\in\hS_{\Delta}$. Since $\Psi\in\hS_{\Delta}$ is uniquely reconstructed from its pullback to $AdS_{d+1}$, and since $\big(\cdot,\cdot\big)_K$ is point-wise with respect to $\mathbb{R}^{2,d}_{+}$ and non-degenerate on $\hS_{\Delta}$, one concludes that restriction of $\omega$ to $\hS_{\Delta}$ is trivial.

\subsection{Proof of Lemma \bref{lem:deformation}}\label{sec:proof_K}

For the proof of non-degeneracy of $\big(\cdot,\cdot\big)_{K}$ on $\hT_{\Delta}$ for any $\Delta \in \mathbb{R}$ we assume that $K$ is an invertible transformation of $\hT_{\Delta}$, which will be proven to be true in the sequel. In this case, the pairing $\langle K\cdot,\cdot\rangle$ on $\hT_{\Delta}$ is non-degenerate iff so is the pairing $\langle \cdot,\cdot\rangle$ on $\hT_{\Delta}$. To check that $\langle \cdot,\cdot\rangle$ is non-degenerate on $\hT_{\Delta}$, note that by Lemma \bref{lem:T1_resolve} the problem is equivalent to checking non-degeneracy of $\langle  \cdot,\cdot\rangle$ on the space of fields with constrained radial decomposition and which are subject to the constraint 
\begin{equation}\label{eq:T2_modifid}
    U^{-1}\sfT_{2} U = \sfT_{2} - \frac{4}{X^2} (X\cdot \partial_{P}) \,c_{0}\frac{\partial}{\partial c}\,.
\end{equation}
The latter check can be done explicitly in terms of decompositions described in Appendix \bref{sec:formulae}. 
\vskip 4 pt

For the sequel, we will make use of the following simple lemma.

\begin{lem}\label{lem:technical_constraints}
Let $V$ be a vector space with a  bilinear form $G$ and $\{\varphi_{\alpha}\}_{\alpha = 1,\dots,r}$ some independent linear forms on $V$. Let $V_0\subset V$ be the common kernel of the linear forms $\{\varphi_{\alpha}\}$. Then restriction of $G$ to $V_0$ is symmetric (respectively, anti-symmetric) iff there exist linear forms $c_\alpha$ on $V$ such that for $\sigma = 0$ (respectively, $\sigma = 1$) holds
\begin{equation}
    G(v,w) - (-)^{\sigma} G(w,v) = \sum_{\alpha=1}^{r} \big(c_{\alpha}(v)\varphi_{\alpha}(w) - (-)^{\sigma} c_{\alpha}(w)\varphi_{\alpha}(v)\big)\,.
\end{equation}
\end{lem}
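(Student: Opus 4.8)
The plan is to prove the two implications separately. The direction ``identity $\Rightarrow$ (anti)symmetry'' is immediate: restricting both $v$ and $w$ to $V_0$ annihilates every $\varphi_\alpha(v)$ and $\varphi_\alpha(w)$, so the right-hand side vanishes and one is left with $G(v,w) = (-)^\sigma G(w,v)$ on $V_0$, which is precisely the stated (anti)symmetry of $G|_{V_0}$.

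For the converse I would introduce the bilinear form $H(v,w) := G(v,w) - (-)^\sigma G(w,v)$, which satisfies $H(w,v) = -(-)^\sigma H(v,w)$ and, by the hypothesis, vanishes identically on $V_0 \times V_0$. Since the $\varphi_\alpha$ are independent, I can choose $e_1,\dots,e_r \in V$ with $\varphi_\alpha(e_\beta) = \delta_{\alpha\beta}$, so that $V = V_0 \oplus \operatorname{span}(e_1,\dots,e_r)$ and every vector decomposes as $v = v_0 + \sum_\alpha \varphi_\alpha(v)\,e_\alpha$ with $v_0 \in V_0$. I would then set
\begin{equation*}
    c_\beta(v) := H(v,e_\beta) - \tfrac12\sum_{\alpha} \varphi_\alpha(v)\,H(e_\alpha,e_\beta)\,,
\end{equation*}
which is visibly a linear form on $V$, and claim that these $c_\alpha$ do the job.

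The verification is then a routine block computation. Writing $R(v,w)$ for the proposed right-hand side, one checks that $R$ has the same exchange antisymmetry as $H$ (namely $R(w,v) = -(-)^\sigma R(v,w)$), so it suffices to compare $H$ and $R$ on the blocks $(v_0,w_0)$, $(v_0,e_\beta)$ and $(e_\alpha,e_\beta)$; the block $(e_\alpha,w_0)$ then follows by antisymmetry, and bilinearity does the rest. On $V_0\times V_0$ both sides vanish; the mixed block reproduces exactly $c_\beta(v_0) = H(v_0,e_\beta)$; and the $(e_\alpha,e_\beta)$ block comes down to the identity $c_\beta(e_\alpha) - (-)^\sigma c_\alpha(e_\beta) = H(e_\alpha,e_\beta)$, which holds because the matrix $N_{\alpha\beta} := H(e_\alpha,e_\beta)$ obeys $N_{\beta\alpha} = -(-)^\sigma N_{\alpha\beta}$, whence $\tfrac12\big(N_{\alpha\beta} - (-)^\sigma N_{\beta\alpha}\big) = N_{\alpha\beta}$.

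I do not expect any genuine obstacle here: the lemma is elementary linear algebra. The only point requiring care is keeping the signs straight in the two cases $\sigma = 0$ and $\sigma = 1$, and observing that the single choice $c_\beta|_{\operatorname{span}(e)} = \tfrac12 N$ simultaneously resolves both the antisymmetric ($\sigma=0$) and the symmetric ($\sigma=1$) version of the last identity. (All vector spaces here are over $\mathbb{R}$, so the factor $\tfrac12$ is harmless; over a general field one would just need characteristic $\neq 2$.)
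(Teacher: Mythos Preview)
Your proof is correct and complete. Both directions are handled cleanly, and the explicit construction of the $c_\beta$ via a dual basis $\{e_\alpha\}$ works exactly as you claim; the block verification on $V_0\times V_0$, $V_0\times\operatorname{span}(e)$ and $\operatorname{span}(e)\times\operatorname{span}(e)$ is correct, with the key observation that $N_{\beta\alpha}=-(-)^\sigma N_{\alpha\beta}$ making the $\tfrac12$ choice work uniformly.

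The paper's argument is organised a bit differently: it introduces the symmetrisation projectors $\Pi^{(\sigma)}_\epsilon$, observes that the ``wrong-parity'' part $\Pi^{(\sigma)}_1 G$ vanishes on $V_0\times V_0$ and hence can be written as $\sum_\alpha\big(L_\alpha(v)\varphi_\alpha(w)+R_\alpha(w)\varphi_\alpha(v)\big)$ for some linear forms $L_\alpha,R_\alpha$, and then applies $\Pi^{(\sigma)}_0$ to extract the consistency relation $L_\alpha+(-)^\sigma R_\alpha=0$, from which the asserted form follows with $c_\alpha=L_\alpha$. Your approach is more explicitly constructive: you produce the $c_\alpha$ directly from a chosen complement and verify the identity blockwise, rather than first writing a redundant parametrisation and then cutting it down by a symmetry condition. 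The two routes are equivalent in content, but yours is slightly more self-contained (it does not silently invoke the fact that a bilinear form vanishing on $V_0\times V_0$ lies in the ideal generated by the $\varphi_\alpha$), at the cost of fixing an auxiliary splitting $V=V_0\oplus\operatorname{span}(e_\alpha)$.
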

\begin{proof}
The fact that restriction of $G$ to the subspace $V_0$ is symmetric (respectively, anti-symmetric) is equivalent to saying that its anti-symmetric (respectively, symmetric) part is proportional to $\varphi_{\alpha}$. For a fixed $\sigma = 0,1$ consider the following projectors
\begin{equation}
    \Pi^{(\sigma)}_{\epsilon} G(v,w) = \frac{1}{2}\left(G(v,w) + (-)^{\sigma+\epsilon}G(w,v)\right),\quad \epsilon = 0,1\,
\end{equation}
satisfying $\Pi^{(\sigma)}_{0}\Pi^{(\sigma)}_{1} = \Pi^{(\sigma)}_{1}\Pi^{(\sigma)}_{0} = 0$. Anti-symmetric (respectively, symmetric) part of the metric is given by $\Pi^{(\sigma)}_{1}G$ for $\sigma = 0$ (respectively, $\sigma = 1$), and hence the condition on symmetricity (respectively, anti-symmetricity) of $G$ on $V_0$ is formulated as follows:
\begin{equation}
    \Pi^{(\sigma)}_1 G(v,w) = \sum_{\alpha}\big(L_{\alpha}(v)\varphi_{\alpha}(w) + R_{\alpha}(w)\varphi_{\alpha}(v)\big)\,.
\end{equation}
Applying $\Pi^{(\sigma)}_0$ to the both sides of the above expression yields the consistency condition
\begin{equation}
    L_{\alpha} + (-)^\sigma R_{\alpha} = 0
\end{equation}
and thus leads to the expression stated in the assertion.
\end{proof}

Fixing the operator $K = r^{-(d-2\Delta)}\,\big(U^{\dagger}U^{-1}\big)^{-1}\mathfrak{D}U^{\dagger}U^{-1}$ on $\hT_{\Delta}$ consists in fixing $\mathfrak{D}$. For that purpose we consider the condition coming from the requirement that, for any spin $s\geqslant 1$ in the ghost-$\pm 1$ sectors,
\begin{equation}
    \langle K\Psi^{(s)},\hO\Phi^{(s)}\rangle + \langle K\Phi^{(s)},\hO\Psi^{(s)}\rangle = \nabla_{\lambda}j^{\lambda}\,,
\end{equation}
with the special choice
\begin{equation}
\def\arraystretch{1.4}
    \begin{array}{rl}
        \Psi^{(s)} = & b\,\sum_{m+n = s-1} \varepsilon^{\prime(m,n)} + c\,\sum_{m+n = s-1}\widetilde{C}^{\prime(m,n)} \,,  \\
        \Phi^{(s)} = & b\,\sum_{m+n = s-1} \varepsilon^{(m,n)} + c\,\sum_{m+n = s-1}\widetilde{C}^{(m,n)} \,,
    \end{array}
\end{equation}
where the component fields do not depend on the ghost variables (see \eqref{eq:ghost_expansion} and Appendices \bref{sec:coordinates}, \bref{sec:formulae} for details). In particular, we will be interested only in mass-like terms of the form $\varepsilon\widetilde{C}$ ({\it i.e.} the ones without AdS covariant derivatives).
\vskip 4pt

To compute the action of $K$ defined by \eqref{eq:deformation_K_ansatz}, we express the operators $U$ and $U^{\dagger}$ \eqref{eq:operator_U} in terms of the radial oscillator $q$ \eqref{eq:AdS_ocsillators}:
\begin{equation}
    U = 1 - \frac{2q}{r}\,c_0\frac{\partial}{\partial c},\quad U^{\dagger} = 1 - \frac{2}{r}\,\partial_{q}\,c_0 b\qquad\Rightarrow\qquad U^{\dagger}U^{-1}=1+ \frac{2}{r}\,c_0\,\Big(q\frac{\partial}{\partial c}-b\,\partial_{q}\Big)\,.
\end{equation}
Then one obtains (throughout this section we will write $w^{*} = d - 2\Delta$ for brevity)
\begin{equation}
\def\arraystretch{1.4}
\begin{array}{rl}
    r^{w^{*}}K\Phi^{(s)} = & \displaystyle b\sum_{m+n=s-1} \nu^{(s|n)}_{\Delta}\varepsilon^{(m,n)} \\
     & + \,\displaystyle c_0\sum_{m+n = s} \big(\nu^{(s|n)}_{\Delta} - \nu^{(s|n-1)}_{\Delta}\big)\,\frac{2q}{r}\,\widetilde{C}^{(m,n-1)}\\
     & + \,\displaystyle c\sum_{m+n = s-1} \nu^{(s|n)}_{\Delta} \widetilde{C}^{(m,n)}\\
     & + \,\displaystyle c_0cb\sum_{m+n = s-2}\big(\nu^{(s|n)}_{\Delta} - \nu^{(s|n+1)}_{\Delta}\big)\,\frac{2}{r}\,\partial_{q}\widetilde{C}^{(m,n+1)}\,.
\end{array}
\end{equation}
Performing ghost pairings leads to
\begin{equation}\small
\def\arraystretch{1.4}
\begin{array}{rl}
    r^{w^{*}}\langle K\Psi^{(s)},\hO\Phi^{(s)}\rangle = & \displaystyle -\sum_{k+l = s-1}\,\sum_{m+n = s-1}\nu^{(s|l)}_{\Delta}\langle \widetilde{C}^{\prime(k,l)},\Box\varepsilon^{(m,n)}\rangle_{\text{Fock}} \\
    & +\, \displaystyle\sum_{k+l = s-1}\,\sum_{m+n  = s-1} \nu^{(s|l)}_{\Delta} \langle \varepsilon^{\prime(k,l)},\Box\widetilde{C}^{\prime(m,n)}\rangle_{\text{Fock}} \\
    & -\, \displaystyle\frac{2}{r}\sum_{k+l = s}\,\sum_{m+n = s-1} \big(\nu^{(s|l)}_{\Delta} - \nu^{(s|l-1)}_{\Delta}\big)\,\langle q\widetilde{C}^{\prime(k,l-1)},P\cdot\partial_X \varepsilon^{(m,n)} \rangle_{\text{Fock}}\\
    & -\, \displaystyle\frac{2}{r}\sum_{k+l = s-2}\,\sum_{m+n = s-1} \big(\nu^{(s|l)}_{\Delta} - \nu^{(s|l+1)}_{\Delta}\big)\,\langle\partial_{q}\widetilde{C}^{(k,l+1)},\partial_{P}\cdot\partial_{X}\varepsilon^{(m,n)} \rangle_{\text{Fock}}\,.
\end{array}
\end{equation}
By virtue of \eqref{eq:Pderiv-AdS}, \eqref{eq:dAlemb-AdS} one gets the following mass-like terms:
\begin{equation}\small
\def\arraystretch{1.4}
\begin{array}{rl}
    r^{w^{*}}\langle K\Psi^{(s)},\hO\Phi^{(s)}\rangle = & \\
    -  \displaystyle \frac{1}{r^2}\sum_{m+n = s-1}\nu^{(s|n)}_{\Delta} & \left((m + 2mn + (d+1)n + (\Delta - 1)(d - \Delta + 1))\,\widetilde{C}^{\prime(m,n)}\varepsilon^{(m,n)}\right.\\
    \hfill &\qquad  +\left. \widetilde{C}^{\prime(m,n)}\,g\varepsilon^{(m+2,n-2)} + \widetilde{C}^{\prime(m,n)}\,g^* \varepsilon^{(m-2,n+2)}\right)\\
    + \displaystyle \frac{1}{r^2}\sum_{m+n = s-1}\nu^{(s|n)}_{\Delta} & \left((m+2mn+(d+1)n +(\Delta+1)(d-\Delta-1))\,\varepsilon^{\prime(m,n)}\widetilde{C}^{(m,n)} \right.\\
    \hfill &\qquad \left. + \varepsilon^{\prime(m,n)}\,g\widetilde{C}^{(m+2,n-2)} + \varepsilon^{\prime(m,n)}\,g^*\widetilde{C}^{(m-2,n+2)}\right)\\
    \displaystyle - \frac{2}{r^2}\sum_{m+n = s-1} (\nu^{(s|n+1)}_{\Delta} - \nu^{(s|n)}_{\Delta}) & \left((n+1)(\Delta - 1 + m)\, \widetilde{C}^{\prime(m,n)}\varepsilon^{(m,n)} + \widetilde{C}^{\prime(m,n)}\,g^* \varepsilon^{(m-2,n+2)}\right)\\
    \displaystyle + \frac{2}{r^2}\sum_{m+n = s-1} (\nu^{(s|n)}_{\Delta} - \nu^{(s|n-1)}_{\Delta}) &  \left(n(d-\Delta+m+2)\,\widetilde{C}^{\prime(m,n)}\varepsilon^{(m,n)} + \widetilde{C}^{\prime(m,n)}\,g\varepsilon^{(m+2,n-2)} \right)\\
     +\,\dots
\end{array}
\end{equation}
with all terms with derivatives are not written explicitly (they are denoted by the ellipses). We look for an operator $K$ such that the pairing $\langle K\Psi^{(s)},\hO\Phi^{(s)}\rangle$ is anti-symmetric, so we concentrate on its symmetric part and proceed by fixing it to zero:
\begin{equation}\label{eq:symmetric_zero}\small
\def\arraystretch{1.4}
\begin{array}{rll}
    \multicolumn{3}{l}{\displaystyle  
    r^{d-2\Delta}\left(\langle K\Psi^{(s)},\hO\Phi^{(s)}\rangle + \langle K\Phi^{(s)},\hO\Psi^{(s)}\rangle\right)}\\
    \hfill & \displaystyle =\,\frac{2}{r^2}\sum_{m+n = s-1} & \left(\nu^{(s|n)}_{\Delta}(d-2\Delta) + n\, (\nu^{(s|n)}_{\Delta}-\nu^{(s|n-1)}_{\Delta})(d-\Delta+m+2)\right. \hfill \\  
    \hfill & \hfill & \left. - (n+1)\,(\nu^{(s|n+1)}_{\Delta} - \nu^{(s|n)}_{\Delta})(\Delta-1 +m)\right)\,\left(\varepsilon^{\prime(m,n)}\widetilde{C}^{(m,n)} + \varepsilon^{(m,n)}\widetilde{C}^{\prime(m,n)}\right)\\
    \hfill & \multicolumn{2}{l}{\displaystyle +\,\frac{1}{r^2}\sum_{m+n = s-1}\big(\nu^{(s|n)}_{\Delta} - 2\nu^{(s|n-1)}_{\Delta} + \nu^{(s|n-2)}_{\Delta}\big)\big(C^{\prime(m,n)}g\varepsilon^{(m+2,n-2)} + C^{(m,n)}g\varepsilon^{\prime(m+2,n-2)}\big)} \\
    \hfill & \multicolumn{2}{l}{\displaystyle +\,\frac{1}{r^2}\sum_{m+n = s-1}\big(\nu^{(s|n)}_{\Delta} - 2\nu^{(s|n-1)}_{\Delta} + \nu^{(s|n-2)}_{\Delta}\big)\big(\varepsilon^{\prime(m,n)}gC^{(m+2,n-2)} + \varepsilon^{(m,n)}gC^{\prime(m+2,n-2)}\big)
    +\dots}
\end{array}
\end{equation}
Recall that the field components are not independent: by definition of $\hT^{(s)}_{\Delta}$ in \eqref{eq:ghost_traceless} the trace constraint in terms of the AdS fields (with $n\geqslant 2$) reads
\begin{equation}\label{eq:Appendix_trace_constraints}
    \varphi^{(s|n)}_{1}(\Phi^{(s)}) = g\varepsilon^{(m+2,n-2)} - n(n-1)\, \varepsilon^{(m,n)} = 0,\quad \varphi^{(s|n)}_{2}(\Phi^{(s)}) = g\widetilde{C}^{(m+2,n-2)} - n(n-1)\,\widetilde{C}^{(m,n)} = 0\,.
\end{equation}

For the lowest values $n = 0,1$, the traceless parts of the field components are free from the constraints \eqref{eq:Appendix_trace_constraints} and hence for the coefficients in front of $\varepsilon^{(s-1,0)}\widetilde{C}^{(s-1,0)}$ and $\varepsilon^{(s-2,1)}\widetilde{C}^{(s-2,1)}$ one directly imposes 
\begin{equation}
\def\arraystretch{1.5}
\begin{array}{ll}
    \nu^{(s|0)}_{\Delta} (d-2\Delta) - (\nu^{(s|1)}_{\Delta} - \nu^{(s|0)}_{\Delta})(\Delta + s - 2) = 0 & (\text{for}\quad n = 0)\,,\\
    \nu^{(s|1)}_{\Delta} (d-2\Delta) + (\nu^{(s|1)}_{\Delta} - \nu^{(s|0)}_{\Delta}) (d - \Delta + s) - 2(\nu^{(s|2)}_{\Delta} - \nu^{(s|1)}_{\Delta}) (\Delta + s - 3) = 0 & (\text{for}\quad n = 1)\,.
\end{array}
\end{equation}
Normalising the first coefficient $\nu^{(s|0)}$ to unity brings us to
\begin{equation}\label{eq:Appendix_D_initial}
\nu^{(s|0)}_{\Delta} = 1 \,,\quad   \nu^{(s|1)}_{\Delta} = \frac{[\bar{\Delta} + s - 2]_1}{[\Delta + s - 2]_1}\,, \quad \nu^{(s|2)}_{\Delta} = \frac{[\bar{\Delta} + s - 2]_2 + w^{*}}{[\Delta + s - 2]_2}\,,
\end{equation}
where $\bar{\Delta} = d - \Delta$ and $[x]_k = x(x-1)\dots(x-k+1)$ denotes the falling Pochhammer symbol. 
\vskip 4 pt

For $n\geqslant 2$ the field components  are subject to the relations \eqref{eq:Appendix_trace_constraints}. The latter are algebraic and hence can be treated point-wise, what brings us to the situation described in Lemma \bref{lem:technical_constraints} with the two linear forms being respectively \eqref{eq:Appendix_trace_constraints}. In more detail, proceeding step by step for all admissible $n \geqslant 2$, for each fixed $n$ (and $m = s - 1 - n$) one considers the field components $g\varepsilon^{(s-n+1,n-2)}$, $\varepsilon^{(s-n-1,n)}$, $g\widetilde{C}^{(s-n+1,n-2)}$, $\widetilde{C}^{(s-n-1,n)}$. Anti-symmetry of $\langle K\Psi^{(s)},\Omega\Phi^{(s)}\rangle$ on the surface of the constraints \eqref{eq:Appendix_trace_constraints} amounts to finding two combinations
\begin{equation}
    c^{(s|n)}_{\alpha}(\Phi^{(s)}) = c_{\alpha\,1}\, g\varepsilon^{(s-n+1,n-2)} + c_{\alpha\,2}\, \varepsilon^{(s-n-1,n)} + c_{\alpha\,3}\, g\widetilde{C}^{(s-n+1,n-2)} + c_{\alpha\,4}\, \widetilde{C}^{(s-n-1,n)}\,,\quad (\alpha = 1,2)\,,
\end{equation}
such that
\begin{multline}
    r^{w^{*}}\big(\langle K\Psi^{(s)},\hO\Phi^{(s)}\rangle + \langle K\Phi^{(s)},\hO\Psi^{(s)}\rangle\big)\\ = \sum_{\alpha = 1,2}\big(\langle c^{(s|n)}_{\alpha}(\Psi^{(s)}),\varphi^{(s|n)}_{\alpha}(\Phi^{(s)})\rangle_{\text{Fock}} + \langle c^{(s|n)}_{\alpha}(\Phi^{(s)}),\varphi^{(s|n)}_{\alpha}(\Psi^{(s)})\rangle_{\text{Fock}}\big)\,.
\end{multline}
Substituting \eqref{eq:symmetric_zero} to the above requirement immediately allows one to fix $c_{11} = c_{12} = c_{23} = c_{24} = 0$. For the terms of the form $\varepsilon\widetilde{C}$ one arrives at
\begin{multline}\label{eq:Appendix_antysim_traceless}
    2(d-2\Delta) - 2(n+1)(\Delta + s - 2 - n)(\nu^{(s|n+1)}_{\Delta}-\nu^{(s|n)}_{\Delta}) - 2n\,(d-\Delta + s + 1 - n) \\
    = - n(n-1)(c_{14} + c_{22})\,.
\end{multline}
For the terms of the form $\varepsilon\,g\widetilde{C}$ and $g\varepsilon\,\widetilde{C}$ one gets, respectively, (recall \eqref{eq:field_components_Fock})
\begin{equation}\label{eq:Appendix_antisym_traces}
    \nu^{(s|n)}_{\Delta} - 2 \nu^{(s|n-1)}_{\Delta} + \nu^{(s|n-2)}_{\Delta} = -c_{13} + c_{22} = -c_{21} + c_{14}\,.
\end{equation}
Finally, for the terms $g\varepsilon\, g\widetilde{C}$ one gets $c_{13} + c_{21} = 0$. Substituting the latter equation to \eqref{eq:Appendix_antisym_traces} one derives
\begin{equation}
    2\,\big(\nu^{(s|n)}_{\Delta} - 2 \nu^{(s|n-1)}_{\Delta} + \nu^{(s|n-2)}_{\Delta}\big) = c_{14} + c_{22}\,,
\end{equation}
One substitutes the above relation to \eqref{eq:Appendix_antysim_traceless}, which leads to the following recurrence relation for $\nu^{(s|n)}_{\Delta}$:
\begin{equation}\label{eq:Appendix_recursion_1}
\def\arraystretch{1.4}
    \begin{array}{l}
        nf^{(s|n)}_{\Delta} - (n-1)f^{(s|n-1)}_{\Delta} = 0\quad\text{for}\quad n\geqslant 3\,,\\
        \text{with} \quad f^{(s|2)}_{\Delta} = 0\,,
    \end{array}
\end{equation}
where 
\begin{equation}\label{eq:def_fnu}
    f^{(s|n)}_{\Delta} = \big((\Delta + s - 2) - (n-1)\big)\,\nu^{(s|n)}_{\Delta} - (\bar{\Delta} + s - 2)\,\nu^{(s|n-1)}_{\Delta} + (n-1)\,\nu^{(s|n-2)}_{\Delta}\,.
\end{equation}
The initial data for \eqref{eq:Appendix_recursion_1} follows from \eqref{eq:Appendix_D_initial}. As a consequence, one deduces
\begin{equation}
    f^{(s|n)}_{\Delta} = 0\quad \text{for all}\quad n\geqslant 2\,.
\end{equation}
By recalling \eqref{eq:def_fnu}, the above equation is a recursion relation for $\nu^{(s|n)}_{\Delta}$. We look for a solution in the form 
\begin{equation}
    \nu^{(s|n)}_{\Delta} = \frac{q^{(s|n)}_{\Delta}}{[\Delta+s-2]_n}\,,
\end{equation}
which results in the following recursion relation for $q^{(s|n)}_{\Delta}$:
\begin{equation}
    q^{(s|n+2)}_{\Delta} - (\bar{\Delta}+s-2)\,q^{(s|n+1)}_{\Delta} + (n+1)(\Delta + s - (n+2))\,q^{(s|n)}_{\Delta} = 0\,.
\end{equation}
Assuming a generating function $Q^{(s)}_{\Delta}(t) = \sum_{n = 0}^{\infty} q^{(s|n)}_{\Delta}\frac{t^n}{n!}$ one arrives at the following Cauchy problem:
\begin{equation}
\def\arraystretch{1.5}
    \begin{array}{ll}
        \multicolumn{2}{l}{\displaystyle \big(1-t^2\big)\,Q^{(s)\prime\prime}_{\Delta}(t) - \big(\bar{\Delta}+s-2-(\Delta+s-4)t\big)\,Q^{(s)\prime}_{\Delta}(t) + \big(\Delta+s-2\big)\,Q^{(s)}_{\Delta}(t) = 0,}\\ 
        \text{with the initial data} & Q^{(s)}_{\Delta}(0) = 1,\quad Q^{(s)\prime}_{\Delta}(0) = \bar{\Delta} + s - 2\,.
    \end{array}
\end{equation}
The solution to the above equation is
\begin{equation}
    Q^{(s)}_{\Delta}(t) = (1-t)^{-(\frac{d}{2}-\Delta)}(1+t)^{\frac{d}{2}+s-2}\,.
\end{equation}
Finally, with the aid of the above generating function one finally expresses the sought eigenvalues of $\mathfrak{D}$ in terms of the Euler hypergeometric function:
\begin{equation}
    \nu^{(s|n)}_{\Delta} = \frac{\big[\frac{d}{2}+s-2\big]_n}{\big[\Delta + s - 2\big]_n}\;{}_2F_{1}\left(\begin{array}{c} -n \quad\quad \frac{d}{2}-\Delta\\ \frac{d}{2} + s - 1 - n\end{array}\,;\,-1\right)\,.
\end{equation}

To complete the proof, it is left to verify that for the so-constructed operator $K$ the inner product $\big(\cdot,\cdot\big)_{K}$ is BRST-anti-invariant in all ghost sectors. As a result of a lengthy but straightforward computation, for any ghost-extended ambient field $\Phi^{(s)}$ of spin $s$ holds
\begin{equation}\label{eq:K_commutator}
\def\arraystretch{1.4}
\begin{array}{rlcl}
    [\hO,K]\,\Phi^{(s)} = & \frac{1}{r}\,c\,[\mathfrak{D},q]\,\sfT_2\Phi^{(s)} & - & \sfT_2^{\dagger}\,\frac{1}{r}\frac{\partial}{\partial b}\,[\mathfrak{D},\partial_q]\,\Phi^{(s)} \\
    & +\, c_0\frac{1}{r^2}\,\big[[\mathfrak{D},q],q\big]\,\frac{\partial}{\partial c}c\,\sfT_2\Phi^{(s)} & - & \sfT_2^{\dagger}c_0\,\frac{1}{r^2}\,\big[[\mathfrak{D},\partial_q],\partial_q\big]\,b\frac{\partial}{\partial b}\Phi^{(s)}\\
    & -\, c_0\,\frac{1}{r^2}\,\big[[\mathfrak{D},q],q\big]\,c\frac{\partial}{\partial c}\,\sfT_2\Phi^{(s)} & + & \sfT_2^{\dagger}c_0\,\frac{1}{r^2}\,\big[[\mathfrak{D},\partial_q],\partial_q\big]\,b\frac{\partial}{\partial b}\Phi^{(s)}\\
    & -\, c_0\,\frac{1}{r^2}\,\big[[\mathfrak{D},q],\partial_q\big]\,2bc\,\sfT_2\Phi^{(s)} & + & \sfT^{\dagger}_2c_0\,\frac{1}{r^2}\,\big[[\mathfrak{D},q\big],\partial_q]\,2\frac{\partial}{\partial b}\frac{\partial}{\partial c}\Phi^{(s)}\,.
\end{array}
\end{equation}
For any $\Phi^{(s)}\in\hT_{\Delta}$ the {\it rhs} of \eqref{eq:K_commutator} leads to $\langle[\Omega,K]\Phi^{(s)},\cdot\rangle = 0$.

\subsection{Proof of Theorems \bref{thm:massless}, \bref{thm:massive}}\label{sec:proof_thms}

    We have to check that ambient Lagrangians of the form $\mathbb{L}[\Phi] = \big(\Phi,\Omega\Phi\big)$, with 
    $\Phi\in\hT^{(s)}_{\Delta}$ and $\mathrm{gh}(\Phi) = 0$, lead to the equations of motion $\Omega\Phi = 0$. To start, one can apply a formal variation:
    \begin{equation}\label{deltaL}
        \delta \mathbb{L}[\Phi] = \big(\delta\Phi,\Omega\Phi\big) + \big(\Phi,\Omega\,\delta\Phi\big) = 2\big(\delta\Phi,\Omega\Phi\big) + \dd\mathcal{J}\,,
    \end{equation}
    where $\delta\Phi\in\hT_{\Delta}^{(s)}$ and, in the second equality, we have used the BRST-anti-invariance of $\big(\cdot,\cdot\big)$ on $\hT^{(s)}_{\Delta}$.
    A corollary of Theorem \ref{thm:EL_derivative} is that performing a formal variation either before or after restricting to $\hT^{(s)}_{\Delta}$ implies the same result $\Omega\Phi = 0$ with $\Phi\in\hT^{(s)}_{\Delta}$ (see, {\it e.g.}, Section \ref{sec:examples} for an explicit analysis of the massless case). Therefore putting \eqref{deltaL} to zero implies the correct equation of motion, as was to be shown. 
    \vskip 4 pt
    
    Pullback of $L[\Phi]$ to $AdS_{d+1}$ can be performed explicitly with the aid of expressions presented in Appendix \bref{sec:formulae}. In the case of massless fields (for $\Delta = 2-s$) one arrives at the following known Lagrangian density for the massless spin-$s$ field \cite{Sagnotti:2003qa,Fotopoulos:2006ci,Fotopoulos:2008ka}:
    \begin{equation}
\def\arraystretch{1.4}
\begin{array}{rl}
    \ell^{d+2s-4}\,L[\Phi^{(s)}] & = \tfrac{1}{s!}\, B^{(s)}\cdot \big(\Box + (s + (2-s)(d-2+s))\big)B^{(s)}\\ 
    \hfill & + \tfrac{1}{(s-1)!}\left( B^{(s)} \nabla C^{(s-1)} - C^{(s-1)}\nabla\cdot B^{(s)}\right) - \tfrac{1}{(s-1)!} C^{(s-1)} C^{(s-1)}\\
    \hfill & - \tfrac{1}{(s-2)!} D^{(s-2)}(\Box + (6 - s(s+d-1))) D^{(s-2)}\\ 
    \hfill & + \tfrac{1}{(s-2)!} \left(D^{(s-2)} \nabla\cdot C^{(s-1)} - C^{(s-1)}\nabla D^{(s-2)}\right)\\
    \hfill & + \tfrac{1}{(s-1)!}\varepsilon^{(s-1)}\big(\Box + (s-1)(2-d-s)
    \big)\widetilde{C}^{(s-1)}\\
    \hfill & -\tfrac{1}{(s-1)!}\widetilde{C}^{(s-1)}\big(\Box + (s-1)(2-d-s)\big)\varepsilon^{(s-1)}\\
    \hfill & -\tfrac{1}{(s-2)!} \left(\widetilde{D}^{(s-2)}\nabla\cdot\varepsilon^{(s-1)} + \varepsilon^{(s-1)}\nabla\widetilde{D}^{(s-2)}\right)\\
    \hfill & -\tfrac{1}{(s-1)!}\left(\varepsilon^{(s-1)}\nabla\cdot\widetilde{B}^{(s)} + \widetilde{B}^{(s)}\nabla\varepsilon^{(s-1)}\right)\,.
\end{array}
\end{equation}

\subsection{Technical details and proofs for Section \bref{sec:jets}}\label{sec:jet_proofs}

\begin{lem}\label{lem:Lagrangian_T_exact}
    For any $\alpha\in \bigwedge^{(n-1,0)}_{T}J\E$, written as $\beta = i_{H}\vol_{T}$ with some total vector field $H = H^{A}D_{A}$, one has
    \begin{equation}
        \ddh^{\prime}\beta = \mathrm{div}\,H_{\perp}\,\vol_{T}\,.
    \end{equation}
    where $H_{\perp} = H + i_{H}\vartheta\,T$.
\end{lem}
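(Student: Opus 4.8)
\emph{Proof strategy.} The plan is, first, to reduce to the case where the total vector field $H$ is $\vartheta$-horizontal, i.e. $i_H\vartheta=0$; second, to show that $\ddh^{\prime}\beta$ is necessarily a function multiple of $\vol_T$; and third, to identify that function as $\mathrm{div}\,H$ by wedging with $\vartheta$ and invoking the divergence formula \eqref{eq:divergence}. For the reduction, observe that $H-H_\perp=(i_H\vartheta)\,T$ is proportional to $T$, so $i_{H-H_\perp}\vol_T=(i_H\vartheta)\,i_Ti_T\vol=0$ and hence $\beta=i_H\vol_T=i_{H_\perp}\vol_T$; moreover $i_{H_\perp}\vartheta=i_H\vartheta\,(1+i_T\vartheta)=0$ because $i_T\vartheta=-1$. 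Thus we may replace $H$ by $H_\perp$ and assume $i_H\vartheta=0$ throughout, and it suffices to prove $\ddh^{\prime}\beta=\mathrm{div}\,H\,\vol_T$ under this assumption.

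Next I would specialise the magic formula \eqref{eq:Cartan_prime} to horizontal forms: since $\dd^{\prime}=\ddh^{\prime}+\ddv$ by \eqref{eq:differential_prime_bidegree} and $\LD_{T-Z}$ preserves horizontal degree, comparing horizontal bidegrees gives $i_T\ddh^{\prime}\alpha+\ddh^{\prime}i_T\alpha=\LD_{T-Z}\alpha$ on $\bigwedge^{(\bullet,0)}J\E$. Applied to $\beta$, together with $i_T\beta=i_Ti_Hi_T\vol=0$ and the hypothesis $\beta\in\bigwedge_T J\E$ (hence $\LD_{T-Z}\beta=0$), this yields $i_T(\ddh^{\prime}\beta)=0$. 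A short check in homogeneous coordinates shows that an $(n,0)$-form $\gamma$ with $i_T\gamma=0$ is necessarily of the form $f\,\vol_T$ for a unique function $f$, which is recovered by wedging: $\vartheta\wedge(f\vol_T)=f\,(i_T\vartheta)\,\vol=-f\,\vol$, using $\vartheta\wedge\vol=0$ for degree reasons. Hence $\ddh^{\prime}\beta=f\,\vol_T$ with $f\,\vol=-\vartheta\wedge\ddh^{\prime}\beta$.

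To compute $f$, I would evaluate $\vartheta\wedge\ddh^{\prime}\beta$. Because $\vartheta\wedge\vartheta=0$, the term $\vartheta\wedge\LD_Z$ in $\ddh^{\prime}=\ddh+\vartheta\wedge\LD_Z$ drops out, so $\vartheta\wedge\ddh^{\prime}\beta=\vartheta\wedge\ddh\beta$; and since $\dd\vartheta=0$ implies $\ddh\vartheta=0$, the graded Leibniz rule gives $\vartheta\wedge\ddh\beta=-\ddh(\vartheta\wedge\beta)$. It remains to note that $\vartheta\wedge\beta=i_H\vol$: indeed $i_H(\vartheta\wedge\vol_T)=(i_H\vartheta)\vol_T-\vartheta\wedge i_H\vol_T=-\vartheta\wedge\beta$, while $\vartheta\wedge\vol_T=-\vol$ (from $0=i_T(\vartheta\wedge\vol)=-\vol-\vartheta\wedge\vol_T$), so $-i_H\vol=-\vartheta\wedge\beta$. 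Substituting, $\vartheta\wedge\ddh^{\prime}\beta=-\ddh(i_H\vol)=-\mathrm{div}\,H\,\vol$ by \eqref{eq:divergence}, whence $f=\mathrm{div}\,H$ and $\ddh^{\prime}\beta=\mathrm{div}\,H\,\vol_T$, which is the claim (with $H$ now standing for $H_\perp$).

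None of the steps is individually difficult; the main point to watch is the bookkeeping of interior products and graded signs. The one genuine ingredient is that the $T$-form hypothesis $\LD_{T-Z}\beta=0$ is used exactly once, in the second step, to force $i_T(\ddh^{\prime}\beta)=0$, and it is genuinely needed: dropping it leaves an extra $\vartheta\wedge(\cdots)$ summand in $\ddh^{\prime}\beta$. Equivalently, one could bypass the second step and compute $\ddh^{\prime}\beta$ directly in homogeneous coordinates, in which case the cancellation of the ``anomalous'' terms is again precisely the combined content of $\LD_{T-Z}\beta=0$, $i_H\vartheta=0$ and $\dd\vartheta=0$.
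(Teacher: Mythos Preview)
Your argument is correct. It is, however, organised quite differently from the paper's proof, and the comparison is worth recording.

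The paper attacks $\ddh^{\prime}\beta$ head-on: it expands $\ddh^{\prime}(i_H\vol_T)=\ddh i_H\vol_T+\vartheta\wedge\LD_Z i_H\vol_T$, rewrites each piece via Cartan-type identities (in particular using $\LD_{T-Z}\beta=0$ to replace $i_{[Z,H]}\vol_T$ by $i_{[T,H]}\vol_T+\mathrm{div}\,T\,i_H\vol_T$), and after several manipulations arrives at $\mathrm{div}\,H\,\vol_T+T(i_H\vartheta)\,\vol_T+\mathrm{div}\,T\,(i_H\vartheta)\,\vol_T$. The last two terms are then recognised as $\mathrm{div}\big((i_H\vartheta)\,T\big)\,\vol_T$, giving $\mathrm{div}\,H_\perp\,\vol_T$. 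No preliminary reduction to $H_\perp$ is made; the $H_\perp$ emerges at the end.

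Your route is more structural: you first normalise $H\mapsto H_\perp$ (harmless since $\beta$ is unchanged), then use the magic formula \eqref{eq:Cartan_prime} together with $\LD_{T-Z}\beta=0$ to force $i_T(\ddh^{\prime}\beta)=0$, hence $\ddh^{\prime}\beta=f\,\vol_T$, and finally read off $f$ by wedging with $\vartheta$ and invoking \eqref{eq:divergence}. This cleanly separates the two logically distinct facts at play: that the result lives in the one-dimensional $\ker i_T\subset\bigwedge^{(n,0)}$, and what the coefficient is. The paper's approach, by contrast, keeps everything in one running computation; it is less conceptual but makes explicit how the individual pieces (the $\mathrm{div}\,T$ term, the $i_{[T,H]}\vartheta$ term) conspire to produce the $H_\perp$ correction, which your reduction step hides.
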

\begin{proof}
    By virtue of $\LD_{T-Z}\beta = 0$ one has
\begin{equation}\label{eq:T-condition_exact_L}
    i_{[Z,H]} \vol_{T} = i_{[T,H]} \vol_{T} + \mathrm{div}\, T\wedge i_{H}\vol_{T}\,,
\end{equation}
Next, one computes
\begin{equation}\label{eq:dh-exact_div}
\def\arraystretch{1.4}
\begin{array}{rl}
    \ddh^{\prime}\beta = & \ddh i_{H}\vol_{T} + \vartheta\wedge \LD_{Z} i_{H}\vol_{T} = - \ddh i_{T} i_{H} \vol + \vartheta\wedge (i_{[T,H]} \vol_{T} + \mathrm{div}\, T\wedge i_{H}\vol_{T}) \\
    = & -\LD_{T} i_{H} \vol + i_{T} \ddh i_{H}\vol - i_{[T,H]}(\vartheta\wedge \vol_{T}) + (i_{[T,H]}\vartheta)\wedge \vol_{T} + \mathrm{div}\,T\,\vartheta\wedge i_{H}\vol_{T}\,,
\end{array}
\end{equation}
where one has used \eqref{eq:T-condition_exact_L} to replace $\LD_{Z} i_{H}\vol_{T} = i_{[Z,H]}\vol_{T}$. Note that $\LD_{T} i_{H}\vol = i_{[T,H]}\vol + \mathrm{div}\,T\,i_{H}\vol$, and $i_{T} \ddh i_{H}\vol = \mathrm{div}\,H\,\vol_{T}$, where one recalls \eqref{eq:divergence}. Also $\vartheta\wedge \vol_{T} = -\vol$, so one has
\begin{equation}\label{eq:dh-exact_div_1}
    \ddh^{\prime}\beta = \mathrm{div}\,H\,\vol_{T} + i_{[T,H]}\vartheta\, \vol_{T} + \mathrm{div}\,T\,( \vartheta\wedge i_{H}\vol_{T} - i_{H} \vol)\,.
\end{equation}
For the last term in the above equation one makes use of the following simple formula:
\begin{equation}
    0 = i_{H}i_{T}(\vartheta \wedge \vol) = \vartheta\wedge i_{H}\vol_{T} - i_{H}\vol - (i_{H}\vartheta)\,\vol_{T}\,,
\end{equation}
which allows one to rewrite \eqref{eq:dh-exact_div_1} as follows:
\begin{equation}\label{eq:dh-exact_div_2}
    \ddh^{\prime}\beta = \mathrm{div}\,H\,\vol_{T} + (i_{[T,H]}\vartheta)\, \vol_{T} + \mathrm{div}\,T\,(i_{H}\vartheta)\,\vol_{T}\,.
\end{equation}
Finally, by virtue of $\ddh \vartheta = 0$ and $i_{T}\vartheta = -1$ one has
\begin{equation}
    \LD_{T}\vartheta = 0\quad \Rightarrow\quad i_{[T,H]} \vartheta = T\,(i_{H}\vartheta)\,,
\end{equation}
which allows one to rewrite the last two terms in \eqref{eq:dh-exact_div_2} as $\mathrm{div}(i_{H}\vartheta\,T) \vol_{T}$.
\end{proof}

\begin{proof}[Proof of Lemma \bref{lem:EL_maps}]
    To prove \eqref{eq:EL_T-in-form} one needs to show, for any $\lambda \in \bigwedge_{T}^{(n,0)}J\E$, that one has $i_{T}\widehat{\delta}\lambda = 0$ and $\LD_{T - Z} \widehat{\delta}\lambda$. The former property follows directly from the definition \eqref{eq:EL_derivative}. The latter follows from the fact that $i_{T}$, $i_{T}^{-1}$ and $\delta$ commute with $\LD_{T-Z}$. Indeed, $\delta$ commutes with prolongations of vector fields which are projectable on $\E$ (see, {\it e.g.}, \cite[Corollary 2.13]{Anderson_bicomplex}). And $T-Z = \pr\,\bar{T}$, with $\bar{T}$ \eqref{eq:T_on_E} being a projectable vector field on $\E$.
\vskip 4 pt

One has $\widehat{\delta}(\ddh^{\prime} \beta) = 0$ for any $\beta\in \bigwedge_{T}^{(n-1,0)} J\E$ thanks to \eqref{eq:Lagrangian_h-exact}, which is annihilated by \eqref{eq:EL_derivative_explicit}.
\vskip 4 pt

To prove {\it (2)}, note that if a $T$-Lagrangian vanishes on $\I$, it can be written as a sum of terms $z^{\sfa}_{A(q)} H_{\sfa}^{A(q)}\wedge \vol_{T}$ over $\sfa$ and index sets $A(q)$ (for $q \geqslant 0$), with $H_{\sfa}^{A(q)}$ being some local functions on $J\E$. Due to linearity, let us consider a generic single term $\lambda = z^{\sfa}_{A(q)} H_{\sfa}^{A(q)}\wedge \vol_{T}$, and perform the necessary check manifestly. In the homogeneous coordinates \eqref{eq:homogeneous_coordinates} on the base $\hM$ the following useful formulae take place:
\begin{equation}\label{eq:useful_formulae_hom_coord}
    [D_{A(q)},T] = q w\, D_{A(q)}\quad\text{and}\quad  z^{\sfa}_{A(q)} = T^{B}u^{\sfa}_{BA(q)} + (\Delta^{\sfa} + qw)\,u^{\sfa}_{A(q)}\,,\quad q \geqslant 0\,.
\end{equation}
Note also the following relations which take place in any coordinates:
\begin{equation}\label{eq:useful_commutators}
    [Z,D_A] = 0\,,\; [\partial_{\sfa},D_A] = 0\quad \text{and}\quad [\partial^{A(q)}_{\sfa},D_B] = \delta^{A}_{B}\,\partial^{A(q-1)}_{\sfa}\,, \quad q \geqslant 1\,.
\end{equation}
As a consequence of \eqref{eq:useful_formulae_hom_coord} and \eqref{eq:useful_commutators} one obtains $Z z^{\sfa}_{A(q)} = T^{B}z^{\sfa}_{BA(q)} + (\Delta + q w) z^{\sfa}_{A(q)}$, which allows one to rewrite the condition $\LD_{T-Z}\lambda = 0$ as follows:
\begin{equation}\label{eq:vanishing_hom_coord}
    F_{\sfa}^{A(q)}\,z^{\sfa}_{A(q)} = 0\,,
\end{equation}
with 
\begin{equation}\label{eq:vanishing_hom_coord_1}
    F_{\sfa}^{A(q)} = \rho^{-1}\,D_B(\rho\,T^B H_{\sfa}^{A(q)}) - (\Delta + q w) H_{\sfa}^{A(q)} - ZH_{\sfa}^{A(q)}\,.
\end{equation}
Since $z^{\sfa}_{A(q)}$ are independent, \eqref{eq:vanishing_hom_coord} implies that $F_{\sfa}^{A(q)} = \sum_{k \geqslant 0} f_{\sfa\sfb}^{A(q),B(k)}\,z^{\sfb}_{B(k)}$ (with some local functions $f_{\sfa\sfb}^{A(q),B(k)}$). By applying the map \eqref{eq:EL_derivative_explicit} to $\lambda$, one gets
\begin{equation}
\def\arraystretch{1.4}
\begin{array}{rl}
    \widehat{\delta}\lambda = & \displaystyle \ddv u^{\sfa} \wedge \sum_{k\geqslant 0} (-)^{k} \rho^{-1} D_{A(k)}\left(\rho\,z^{\sfb}_{B(m)} \partial^{A(p)}_{\sfa} H^{B(m)}_{\sfb}\right) \wedge \vol_{T}\, + \\
    \hfill & \ddv u^{\sfa}\wedge (-)^{q+1} \rho^{-1} D_{B}D_{A(q)}\big(\rho\, T^{B} H_{\sfa}^{A(q)}\big)\wedge \vol_{T} \, + \\
    \hfill & \ddv u^{\sfa}\wedge (-)^{q} \rho^{-1} D_{A(q)}\big(\rho\,(\Delta + qw)\,H_{\sfa}^{A(q)}\big)\wedge \vol_T\,.
\end{array}
\end{equation}
Because $D_{A}z^{\sfa}_{B(q)} = z^{\sfa}_{AB(q)}$, each term in the first line of the above expression is proportional to $z^{\sfa}_{A(p)}$ with some $p \geqslant q$. According to \eqref{eq:vanishing_hom_coord_1}, the second and the third lines together give
\begin{equation}\label{eq:zero_locus_I}
   \ddv u^{\sfa} \wedge (-)^{q+1} \rho^{-1} D_{A(q)} (\rho\, G_{\sfa}^{A(q)}) \wedge \vol_{T}\,,\quad\text{where}\quad G_{\sfa}^{A(q)} = Z H_{\sfa}^{A(q)} + F_{\sfa}^{A(q)}\,.
\end{equation}
The sub-bundle $\I$ belongs to the zero-locus of $G_{\sfa}^{A(q)}$ because the latter is a combination of terms proportional to $z^{\sfa}_{A(p)}$. Since $D_{A}$ are tangent to $\I$, the former expression in \eqref{eq:zero_locus_I} belongs to the zero-locus of $G_{\sfa}^{A(p)}$ as well. 
As a result, $\widehat{\delta}\lambda$ indeed vanishes on $\I$.
\vskip 4 pt
\end{proof}

\section{Coefficients $\nu^{(s|n)}_{\Delta}$}\label{hypergeom}

In this section we will present a detailed consideration of a number of equivalent expressions for $\nu^{(s|n)}_{\Delta}$ \eqref{eq:coefficients_nu} and also present a generating function for them.
\vskip 4 pt

First, we recall that Euler's hypergeometric function can be represented as a power series:
\begin{equation}
    {}_2F_{1}\left(\begin{matrix} a,\;b\\ c \end{matrix};\,z\right) = \sum_{m=0}^{\infty} \frac{(a)_m (b)_m}{(c)_m}\,\frac{z^m}{m!}\,,\quad |z|<1\,,
\end{equation}
where $(x)_n = x(x+1)\dots (x+n-1)$ is the rising Pochhammer symbol. In the particular case when $a = -n$ (negative integer or zero) the above series terminates, and thus converges for any $z\in\mathbb{C}$. Fixing $z = -1$ and multiplying by $[c]_n$ (see \eqref{eq:coefficients_nu}) gives
\begin{equation}
    [c]_n\cdot{}_2F_{1}\left(\begin{matrix} -n,\;b\\ c \end{matrix};\,-1\right) = \sum_{m=0}^{n} \begin{pmatrix} n\\m\end{pmatrix}(b)_m[c]_{n-m}\,,
\end{equation}
where we have used that $(-n)_m = \frac{n!}{(n-m)!}$. The following formula makes calculation of \eqref{eq:coefficients_nu} straightforward:
\begin{equation}
    \nu^{(s|n)}_{\Delta} = \frac{\displaystyle \sum_{m=0}^n \begin{pmatrix} n\\m\end{pmatrix} (\tfrac{d}{2}-\Delta)_m\, [\tfrac{d}{2} + s - 2]_{n-m}}{[\Delta + s - 2]_n}\,.
\end{equation}
In particular, the above expression manifests singular behavior of $\nu^{(s|t+j)}_{s+t-1}$ for $j\geqslant 0$ in the case of partially massless regime. Note that singularities never occur in the Lagrangians in Theorem \bref{thm:massive} due to the definition of $\hT_{\Delta}$: imposing the tangency constraint \eqref{eq:ghost_tangent} puts to zero particular components in the radial decomposition \eqref{eq:radial_decomposition}, such that singular coefficients $\nu^{(s|n)}_{\Delta}$ never appear (see Lemma \bref{lem:T1_resolve}).
\vskip 4 pt

In order to pack the coefficients \eqref{eq:coefficients_nu} into a single generating function we make use of Euler's transformation
\begin{equation}
    {}_2F_{1}\left(\begin{matrix} a,\;b\\ c \end{matrix};\,z\right) = (1-z)^{c-b-a}\,{}_2F_{1}\left(\begin{matrix} c-a,\;c-b\\ c \end{matrix};\,z\right)\,,
\end{equation}
as well as of the following relation between rising and falling Pochhammer symbols:
\begin{equation}
    [x + m - 1]_n = \frac{(x)_m}{(x)_{m-n}}\,.
\end{equation} 
One arrives at the following representation:
\begin{equation}
    \nu^{(s|n)}_{\Delta} = 2^{\Delta+s-1}\,\frac{\big(\frac{d}{2}-1\big)_{s}}{\big(\Delta-1\big)_{s}}\,\frac{\big(\Delta-1\big)_{s-n}}{\big(\frac{d}{2}-1\big)_{s-n}}\,{}_2F_{1}\left(\begin{matrix} \Delta-1 + (s-n),\;\frac{d}{2}+s-1\\ \frac{d}{2} - 1 +(s - n) \end{matrix};\,-1\right)\,.
\end{equation}
If one makes use of Appell series
\begin{equation}
    F_1\left(\begin{matrix} a,\;b,\;b^{\prime}\\ c \end{matrix};\,u,v\right) = \sum_{i,j=0}^{\infty}\frac{\big(a\big)_{i+j}\big(b\big)_i\big(b^{\prime}\big)_j}{\big(c\big)_{i+j}}\,\frac{u^{j}\,v^{j}}{i!\,j!}\,,
\end{equation}
together with the following resummation formula
\begin{equation}
    \sum_{n=0}^{\infty}\frac{\big(a\big)_n\big(b^{\prime}\big)_n}{\big(c\big)_n}\,\frac{t^{n}}{n!}\,{}_2F_{1}\left(\begin{matrix} a+n,\;b\\ c+n \end{matrix};\,x\right) = F_1\left(\begin{matrix} a,\;b,\;b^{\prime}\\ c \end{matrix};\,x,t\right)\,,
\end{equation}
then the expression for $\nu^{(s|n)}_{\Delta}$ admits the following generating function
\begin{equation}
    G^{(s)}_{\Delta}(t) = \sum_{j=0}^{\infty}\nu^{(s|s-j)}_{\Delta}\,t^j = 2^{\Delta + s -1}\,\frac{\big(\frac{d}{2}-1\big)_{s}}{\big(\Delta-1\big)_{s}}\,F_1\left(\begin{matrix} \Delta-1,\;\frac{d}{2}+s-1,\;1\\ \frac{d}{2}-1 \end{matrix};\,-1,t\right)
\end{equation}
such that 
\begin{equation}
    \nu^{(s|n)}_{\Delta} =\frac{1}{(s-n)!} \left.\partial_{t}^{s-n}G^{(s)}_{\Delta}(t)\right|_{t=0}\,.
\end{equation}

\newpage

\end{document}